\documentclass[trans_jour]{IEEEtran}
\usepackage[T1]{fontenc}
\usepackage{algorithm}
\usepackage{algpseudocode}%
\usepackage{enumitem}
\usepackage{graphicx}
\usepackage{url}
%
\ifCLASSINFOpdf
\else
\fi
%
\usepackage{amsmath}
\usepackage{amsthm}
		\newtheorem{theorem}{Theorem}
		\newtheorem{lemma}{Lemma}
    \newtheorem{proposition}{Proposition}
		\newtheorem{problem}{Problem}
		\newtheorem{definition}{Definition}
		\newtheorem{fact}{Fact}

		\newtheorem{remark}{Remark}

\interdisplaylinepenalty=2500
\usepackage[cmintegrals]{newtxmath}
\hyphenation{op-tical net-works semi-conduc-tor}

\begin{document}
%
\title{Robust Cell-Load Learning with a Small Sample Set\\}
\author{Daniyal~Amir~Awan,~\IEEEmembership{Student Member,~IEEE,}
        Renato~L.G.~Cavalcante,~\IEEEmembership{Member,~IEEE,}
        and~Slawomir~Stanczak,~\IEEEmembership{Senior Member,~IEEE.}
				
}
%
%

\maketitle

\begin{abstract}
Learning of the cell-load in radio access networks (RANs) has to be performed within a short time period. Therefore, we propose a learning framework that is robust against uncertainties resulting from the need for learning based on a relatively small training sample set. To this end, we incorporate prior knowledge about the cell-load in the learning framework. For example, an inherent property of the cell-load is that it is monotonic in downlink (data) rates. To obtain additional prior knowledge we first study the feasible rate region, i.e., the set of all vectors of user rates that can be supported by the network. We prove that the feasible rate region is compact. Moreover, we show the existence of a Lipschitz function that maps feasible rate vectors to cell-load vectors. With these results in hand, we present a learning technique that guarantees a minimum approximation error in the worst-case scenario by using prior knowledge and a small training sample set. Simulations in the network simulator NS$3$ demonstrate that the proposed method exhibits better robustness and accuracy than standard multivariate learning techniques, especially for small training sample sets.  
\end{abstract}

\begin{IEEEkeywords}
machine learning, 5G, robust learning, optimal approximation
\end{IEEEkeywords}
\IEEEpeerreviewmaketitle

\section{Introduction}\label{sec:introduction}
The fifth-generation (5G) networks will be based on orthogonal frequency-division multiple access (OFDMA). Due to inter-cell interference, radio resource management (RRM) and performance optimization in these networks are challenging. In fact, many RRM problems in OFDMA-based networks, such as small-scale optimal assignment of time-frequency resource blocks and powers to users, have been shown to be NP-hard \cite{Wong04}. Recent research has therefore focused on the development of frameworks that capture the essence of OFDMA-based networks, while leading to a tractable problem formulation. An example of such a framework is the \textit{non-linear load-coupling model} proposed in \cite{Siomina2012,Fehske2012,Majewski2010}. In this framework the \textit{cell-load} at a base station is the fraction of time-frequency resource blocks that are used to support downlink data rates (henceforth simply rates). With this model, and given some power budget that can be used for transmission, one can estimate the cell-load required at each base station to support given rates. 

  The study in \cite{Ho2014} shows the intuitive result that the cell-load is monotonic in rates. The interference coupling between cells implies that increasing the rates in an arbitrary cell increases the cell-load at each base station, which also increases the inter-cell interference.\footnote{For brevity, we assume that cells are not mutually orthogonal.} So, it is important for a base station to have a reliable forecast of the cell-load before serving higher rate demands from its associated users. Therefore, cell-load learning can be used to make radio resource management and self-organizing-network (SON) algorithms more reliable and efficient. 

	Cell-load learning is also a vital part of energy saving mechanisms in radio access networks (RANs). For instance in \cite{Skillermark2012}, the value of the cell-load is used as an input to a simple heuristic algorithm that switches off base station antennas when the cell-load is low. Large gains in energy savings are reported with minimal effect on the cell sum throughput. The same concept can be used in the case of virtual base station formations in \textit{cloud RANs} \cite{XWang2016}. In these virtual systems some power-hungry components of a RAN (digital signal processors, line cards, fronthaul, etc.) are virtualized in a central location, and these components can be allocated on-demand to cells according to the cell-load. Therefore, given RAN data traffic (or rates) predictions, the corresponding cell-load forecasts can enable us to proactively manage network components for energy savings. 
%

\subsection{The Need for Robust Cell-Load Learning}\label{sec:why_robust_cell_load_estimation}
Note that even though the load-coupling model has been shown to work sufficiently well in predicting the cell-load in some scenarios \cite{Fehske2012, Mogensen2007, Shen2015T}, models are only idealizations and in general they do not capture all the intricacies of dynamic wireless environments. Therefore, our objective is to directly learn the underlying function that maps user rates to cell-load values given a training sample set consisting of rate vectors and the corresponding measured cell-load vectors. To improve the learning process, we use the load-coupling model to study some salient aspects of the relationship between rates and the cell-load. We use these aspects as prior knowledge in the learning process. 

  Compared to the core network, the RAN data traffic is volatile and it shows irregular patterns throughout a day because of the unpredictable nature of user activity and relatively fast changes in the network topology \cite{greennets}. Therefore, the underlying statistics (i.e., the joint probability distribution) of rates and the corresponding cell-load values, which are part of the so-called \textit{environment}, can be assumed to remain constant for only a short time. This implies that a training sample set must be acquired during this short time before the environment changes, since otherwise the sample set can be rendered useless for predicting future cell-load values. However, in general, the smaller the sample set, the larger the uncertainty about the underlying phenomenon, which makes large prediction errors on unseen rates more probable. 
	
	In uncertain situations we need ``robust'' learning methods that provide a guaranteed worst-case performance under uncertainty. The objective of this study is to develop such a robust learning framework. Our method is optimal in the sense that it minimizes the worst-case or maximum error of approximation which is a classical robust optimization problem (see, e.g., \cite{Sukharev1992,traub1980,Weinberger1959,Calliess2014}). This means that no matter how small the training sample set is, we are guaranteed the best worst-case error. Our method involves only low-complexity and stable mathematical operations and its theoretical properties are very well understood. The above mentioned optimization problem is solved by explicitly incorporating prior knowledge regarding the Lipschitz continuity of the function to be approximated. By incorporating additional prior knowledge concerning monotonicity of the function, we further reduce the worst-case error. 

   We point out that our framework is different to many modern conventional machine learning frameworks that target mean or average performance rather than the worst-case performance we consider in this study. The performance of many current complex learning methods, such as deep neural networks (DNNs), is often dependent on the availability of a large training (or pre-training) sample set. 
Including prior knowledge in these frameworks to reduce the reliance on large training sets is not easy, and it is often discouraged \cite{Marcus2018}. Even if some prior knowledge could be enforced in neural networks (as in \cite{Diligenti}), it is theoretically unclear whether (or how) this enables neural networks to learn better. This makes DNNs ill-suited to our setting because we consider learning with very small training sample sets.
%

\subsection{Related Work}\label{sec:some_related_work}
The load-coupling model \cite{Siomina2012,Fehske2012,Majewski2010} is commonly used when designing networks according to the long-term evolution (LTE) standard. Recently it has also attracted attention in the context of 5G networks \cite{You2018}. More specifically, the load-coupling model has been used in various optimization frameworks dealing with different aspects of network design including data offloading \cite{Ho2014}, proportional fairness \cite{miguel2016}, energy optimization \cite{daniyal2016,Pollakis2016,Ren2014}, and load balancing \cite{siomina2012b}. In the context of energy savings, and by using the theory of \textit{implicit functions} \cite{Krantz2003}, the study in \cite{Ren2014} shows that there exists a continuously differentiable function relating user associations with the base stations to the cell-load. In contrast to \cite{Ren2014}, the user association is assumed to be fixed in this study; we study the relationship between downlink rates and the cell-load and we incorporate this prior knowledge in our learning framework. Previous studies dealing with cell-load estimation, for instance, in the context of data offloading \cite{Ho2014} and maximizing the scaling-up factor of traffic demand \cite{siomina2015}, have used load coupling model driven methods that require information about channel gains, powers, etc.. Most of these methods employ iterative algorithms to estimate the cell-load for given downlink rates and other parameters by exploiting the fact that the cell-load is the fixed point of the \textit{standard interference mapping} \cite{yates95} that is constructed using the network information. In contrast, we directly learn the underlying function that maps feasible rates, i.e., downlink data rates that can be supported by the network, to the observed cell-load in the network using a sample training set and prior knowledge. Our framework, therefore, does not require information about powers, channels, etc.. 
	
	 Inclusion of prior knowledge in the form of constraints, known properties, and logic has also been widely used in other areas, such as optimal control \cite{Sun2019,Qiu2019}, to deal with uncertainty. However, incorporating prior knowledge in machine learning algorithms for multivariate data\footnote{Multivariate data in this context means that the input argument (or domain) of the function to be approximated has an arbitrary dimension.} with arbitrary dimensions is difficult, and most of the well-known algorithms either do not preserve the ``shape'' (i.e., known properties such as monotonicity, continuity, etc.) of the underlying function or they become too complex for high-dimensional data \cite{Beliakov2005}. An inherent property of the cell-load is that it is monotonic in rates. The study in \cite{Kotlowski16} shows that monotonicity is difficult to incorporate in popular online learning methods even in the case of univariate data. In \cite{Beliakov2005} the author proposes a shape preserving multivariate approximation of scalar monotonic functions that are also \textit{Lipschitz}. The author shows that Lipschitz continuity of the function to be approximated allows for computing tight upper and lower bounds on the function values. Using these bounds one can obtain an optimal solution in the sense that this solution minimizes a worst-case error of approximation \cite{Sukharev1992,traub1980,Weinberger1959}. Furthermore, the approximation preserves both the monotonicity and the Lipschitz continuity of the underlying function.
%

\subsection{Our Contribution}\label{sec:contribution}
This study deals with the problem of learning cell-load in RANs as a function of downlink rates given a relatively small training sample set. The assumption of small training sample sets is crucial because modern RAN networks do not permit a long observation and sample acquisition period (see Section \ref{sec:why_robust_cell_load_estimation}). 
To cope with this limitation, we propose a robust learning framework that guarantees a minimum worst-case error of approximation. To achieve robustness, we incorporate prior knowledge about the cell-load and its relationship with rates. We show that the incorporation of prior knowledge enables us to provide explicit tight bounds that cannot be achieved by using a sample set alone, no matter how large the sample set is. 

   In the following we summarize the main contributions of this study.
\begin{enumerate}
\item We study the feasible rate region which is defined as the set of all rates that can be supported by the network. In the conference version of this study \cite{AwanIcassp18} we stated without proof that the feasible rate region is compact. In this work we provide a formal proof for this assertion along with some other related results. 
\item In particular, we show that there exists a function that maps rates to the cell-load and that this function is monotonic and Lipschitz continuous over the feasible rate region.  
\item We use the prior knowledge developed in 1) and 2) to perform robust learning of the cell-load by using the framework of \textit{minimax approximation} \cite{Sukharev1992,traub1980,Weinberger1959}. Note that, this technique cannot be directly used without the prior knowledge above.
\item In contrast to \cite{Beliakov2005}, where the main concern is to preserve the monotonicity, we show theoretically and by experiments that including the prior knowledge regarding monotonicity results in reduced uncertainty. 
\item Our machine learning framework does not require network information such as powers and channel gains in contrast to traditional cell-load approximation methods. 
The guaranteed performance of our framework with small sample sets makes it suitable in such scenarios where other learning frameworks such as DNNs cannot be applied. 
\item In contrast to the conference version, we perform simulations in the network simulator NS$3$ to demonstrate the performance of the algorithm in a realistic cellular wireless network. We compare our framework with standard multivariate learning techniques and show that our method outperforms these standard techniques for small sample sizes.   
\end{enumerate}

\subsection{Overview}
The remainder of this study is organized as follows. Section~\ref{sec:maths} provides the mathematical background and results that are used throughout the study. 
Section~\ref{sec:system_model} presents the non-linear load coupling model. In Section~\ref{sec:feasible_rate_demand_set} we provide our results on the feasible rate region. In Section \ref{section:the_learning_problem} we discuss the robust optimization problem for cell-load learning along with some more related results. Section~\ref{sec:cell_load_approximation} deals with the implementation of the cell-load learning framework developed in this study in a wireless network. Finally, in Section~\ref{sec:numerical_evaluation}, empirical analysis is performed by simulations in the network simulator (NS$3$).

\section{Mathematical Background}\label{sec:maths}
Throughout this study $\mathbb{R}$, $\mathbb{R}_{\geq 0}$, and $\mathbb{R}_{>0}$ denote the sets of reals, non-negative reals, and positive reals, respectively. We denote by $\|\cdot\|$ and $\|\cdot\|_{\infty}$ the usual Euclidean norm and $l_{\infty}$ norm in $\mathbb{R}^m$, respectively. The sets of non-negative integers and natural numbers are denoted by $\mathbb{Z}_{\geq 0}$ and $\mathbb{N}:=\mathbb{Z}_{\geq 0}\setminus\{0\}$, respectively. We define $\overline{N_{1},N_{2}}:=\left\{N_1,N_{1}+1,N_{1}+2,\ldots,N_2\right\}$, $N_1, N_2 \in \mathbb{Z}_{\geq 0}$ with $N_1\leq N_2$. We denote by $(\mathbf{x})_{+}$ the operation $\max\left\{\mathbf{x},\mathbf{0}\right\}$ for a vector $\mathbf{x} \in \mathbb{R}^{N}$, where the $\max$ is taken component-wise and $\mathbf{0}$ is the all-zero vector. For two vectors $\mathbf{x}$ and $\mathbf{y}$, the inequality $\mathbf{x}\leq\mathbf{y}$ should be understood component-wise.

    Let $\mathcal{S}$ be a normed vector space equipped with a norm $\|\cdot\|_{\mathcal{S}}$ and its induced metric $d_{\mathcal{S}}:\mathcal{S}\times\mathcal{S}\to \mathbb{R}_{\geq 0}:(\mathbf{s}_{o},\mathbf{s}) \mapsto \|\mathbf{s}_{o}-\mathbf{s}\|_{\mathcal{S}}$. We denote by $\mathcal{B}_{\mathcal{S}}(\mathbf{s}_{o},\delta):=\{\mathbf{s}\in \mathcal{S}| \|\mathbf{s}-\mathbf{s}_{o}\|_{\mathcal{S}}<\delta\}$ the open-ball of radius $\delta>0$ centered at $\mathbf{s}_{o}\in \mathcal{S}$. A sequence $(\mathbf{s}_n)_{n \in \mathbb{N}} \subset \mathcal{S}$ is said to converge (in norm) to $\mathbf{s} \in \mathcal{S}$ if $\|\mathbf{s}_n-\mathbf{s}\|_{\mathcal{S}} \to 0$ \cite[Page~26]{Luenberger1997}. 

We now define the concepts of \textit{boundedness}, \textit{closedness}, and \textit{compactness} that we use throughout this study.
\begin{definition}[\textit{Boundedness, Closedness, and Compactness}]\cite[Chapter~2]{Luenberger1997}
Consider a set $\mathcal{K}$ in the normed space $(\mathcal{S},\|\cdot\|_{\mathcal{S}})$.
\begin{itemize}
\item[a).] Boundedness: $\mathcal{K}$ is bounded if $(\exists L \geq 0)$ $(\forall \mathbf{k} \in \mathcal{K})$ $\|\mathbf{k}\|_{\mathcal{S}}\leq L$. 
\item[b).] Closedness: $\mathcal{K}$ is closed if and only if every convergent sequence $(\mathbf{k}_n)_{n \in \mathbb{N}} \subset \mathcal{K}$ has a limit in $\mathcal{K}$. 
\item[c).] Compactness: $\mathcal{K}$ is compact if every sequence $(\mathbf{k}_n)_{n \in \mathbb{N}} \subset \mathcal{K}$ has a convergent subsequence with a limit in $\mathcal{K}$.   
\end{itemize}
\label{def:boundedness_closedness_compactness}
\end{definition}
   In this study we consider the space $C(\mathcal{X},\mathcal{Y})$ of vector-valued continuous functions mapping $\mathcal{X} \subset \mathbb{R}_{>0}^{N}$ to $\mathcal{Y} \subset \mathbb{R}_{\geq 0}^M$. For a function $\mathbf{g} \in C(\mathcal{X},\mathcal{Y})$ its $i$th component $(i \in \overline{1,M})$ $g_i:\mathcal{X}\rightarrow\mathbb{R}_{\geq 0}$ is a scalar continuous function. We equip $C(\mathcal{X},\mathcal{Y})$ with the uniform norm \cite[Page~23]{Luenberger1997}
\begin{equation}
\left\|\mathbf{g}\right\|_{C(\mathcal{X})}=\sup_{\mathbf{x}\in\mathcal{X}}\max_{1 \leq i \leq M}g_{i}(\mathbf{x}).
\label{eqn:cheb_norm}
\end{equation}
   If $\mathcal{X}$ is compact, then the $\sup$ is attained according to the \textit{extreme value theorem} \cite{Munkres2000} because the $\max$ operation\footnote{The usage of $\max$ in \eqref{eqn:cheb_norm} is different to the component-wise $\max$ in $\max\{\mathbf{x},\mathbf{0}\}$. The distinction between the two usages shall be clear by the context in which they are used.} preserves continuity. 
 
   We now present some important concepts to keep the study as self-contained as possible. These concepts are essential to understanding our results in Section \ref{sec:feasible_rate_demand_set} and in Section \ref{section:the_learning_problem}.
 \begin{definition}[\textit{Monotonic Function}]
Let $\mathcal{X} \subset \mathbb{R}^{N}_{>0}$ and  $\mathcal{Y} \subset \mathbb{R}^{M}_{\geq 0}$. A function $\mathbf{f}:\mathcal{X}\rightarrow \mathcal{Y}$ is said to be monotonic if $(\forall \mathbf{x} \in \mathcal{X})$ $(\forall \mathbf{y} \in \mathcal{X})$ $\mathbf{x}\leq\mathbf{y}\Rightarrow\mathbf{f}(\mathbf{x})\leq\mathbf{f}(\mathbf{y})$.
\label{def:mf}
\end{definition}
 \begin{definition}[\textit{$\mathbf{L}$-Lipschitz function}]
Consider $\mathbf{f}\subset C(\mathcal{X},\mathcal{Y})$ and a vector $\mathbf{L}:=[L_1,L_2,\cdots,L_M]^{\intercal}\in \mathbb{R}_{\geq 0}^M$. We say that $\mathbf{f}$ is $\mathbf{L}$-Lipschitz on $\mathcal{X}$ if $(\forall i\in \overline{1,M})$ $(\forall \mathbf{x} \in \mathcal{X})(\forall \mathbf{y} \in \mathcal{X})\left|f_{i}(\mathbf{x})-f_{i}(\mathbf{y})\right|\leq L_{i}\left\|\mathbf{x}-\mathbf{y}\right\|$. 
\label{def:lf}
\end{definition}
\begin{definition}[\textit{$\mathbf{L}$-Lipschitz-Monotonic Function}]
We say that $\mathbf{f}\subset C(\mathcal{X},\mathcal{Y})$ belongs to the class of \textit{$\mathbf{L}$-Lipschitz-Monotonic Functions} (LIMF) if $\mathbf{f}$ is monotonic and there exists $\mathbf{L} \in \mathbb{R}_{\geq 0}^M$ such that $\mathbf{f}$ is $\mathbf{L}$-Lipschitz. 
\label{def:lmf}
\end{definition}
Note that a function $\mathbf{f}\in C(\mathcal{X},\mathcal{Y})$ is continuous at $\mathbf{x}_{o} \in \mathcal{X}$ if given $\epsilon > 0$, there exists $\delta_{\mathbf{x}_{o}}>0$ such that $(\forall \mathbf{x} \in \mathcal{B}_{\mathcal{X}}(\mathbf{x}_{o},\delta_{\mathbf{x}_{o}}))$ $\|\mathbf{f}(\mathbf{x})-\mathbf{f}(\mathbf{x}_{o})\|<\epsilon$. The following concept of \textit{equicontinuity} extends the concept of continuity to a collection/set $\mathcal{F}\subset\mathcal{C}(\mathcal{X},\mathcal{Y})$ of functions. 
\begin{definition}[\textit{Equicontinuity of a Set}] \cite[Chapter~7]{Munkres2000}\label{def:equicontinuity}
A function set $\mathcal{F}\subset\mathcal{C}(\mathcal{X},\mathcal{Y})$ is called equicontinuous at $\mathbf{x}_{o} \in \mathcal{X}$ if for every $\epsilon>0$ there exists $\delta_{\mathbf{x}_{o}}>0$ such that $(\forall \mathbf{x} \in \mathcal{B}_{\mathcal{X}}(\mathbf{x}_{o},\delta_{\mathbf{x}_{o}}))$ $(\forall \mathbf{f} \in \mathcal{F})$ $\|\mathbf{f}(\mathbf{x})-\mathbf{f}(\mathbf{x}_{o})\|<\epsilon$. Furthermore, if for every $\epsilon>0$ there exists $\delta>0$ such that $(\forall \mathbf{x}_{o} \in \mathcal{X})$ $(\forall \mathbf{x} \in \mathcal{B}_{\mathcal{X}}(\mathbf{x}_{o},\delta))$ $(\forall \mathbf{f} \in \mathcal{F})$ $\|\mathbf{f}(\mathbf{x})-\mathbf{f}(\mathbf{x}_{o})\|<\epsilon$, then $\mathcal{F}$ is said to be (uniformly) equicontinuous. 
\end{definition} 
\begin{remark}[Set of $\mathbf{L}$-Lipschitz Functions] \label{rem:one}
An example of a (uniformly) equicontinuous subset of $C(\mathcal{X},\mathcal{Y})$ is the set of $\mathbf{L}$-Lipschitz functions, i.e., Lipschitz functions with the Lipschitz constant determined by $\mathbf{L} \in \mathbb{R}_{\geq 0}^{M}$ (see Definition \ref{def:lf}). For completeness, a proof is shown in Appendix \ref{sec:app_zero}. 
\end{remark}
The general concept of compactness in normed vector spaces has been introduced in Definition \ref{def:boundedness_closedness_compactness}. The following Fact, along with Remark \ref{rem:compactness}, characterizes compact subsets of $C(\mathcal{X},\mathcal{Y})$. 
\begin{fact}[Compact subsets of $C(\mathcal{X},\mathcal{Y})$]\label{fact:arzela_ascoli}\cite{Brown1993}\cite[Corollary~45.5]{Munkres2000}
Let $\mathcal{X}$ be compact. Then,
\begin{itemize}
\item[a).] \textit{Arzel\'{a}-Ascoli's Theorem}: Every bounded and equicontinuous sequence $(\mathbf{f}_n)_{n \in \mathbb{N}} \subset C(\mathcal{X},\mathcal{Y})$ has a convergent subsequence.
\item[b).] A set $\mathcal{F} \subset C(\mathcal{X},\mathcal{Y})$ is compact if it is bounded, equicontinuous, and closed. 
\end{itemize}
\end{fact}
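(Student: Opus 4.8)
The plan is to establish part a) --- the Arzel\'{a}--Ascoli theorem --- by the classical two-stage argument of a Cantor diagonal extraction followed by an equicontinuity upgrade, and then to obtain part b) as an immediate corollary. Since $\mathcal{X}$ is a compact subset of $\mathbb{R}^N$, it is totally bounded and hence separable; I would fix a countable dense subset $D=\{\mathbf{x}_1,\mathbf{x}_2,\ldots\}\subset\mathcal{X}$. Boundedness of the sequence $(\mathbf{f}_n)_{n\in\mathbb{N}}$ in the uniform norm \eqref{eqn:cheb_norm}, together with the fact that the functions map into $\mathbb{R}_{\geq 0}^M$, forces the pointwise values $\{\mathbf{f}_n(\mathbf{x})\}_n$ to lie in a fixed compact box $[0,L]^M\subset\mathbb{R}^M$ for every $\mathbf{x}$, so the Bolzano--Weierstrass theorem applies at each point of $D$.

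First I would extract a subsequence that converges on all of $D$. Because $(\mathbf{f}_n(\mathbf{x}_1))_n$ is bounded in $\mathbb{R}^M$, it admits a convergent subsequence $(\mathbf{f}_n^{(1)})_n$; from this I extract $(\mathbf{f}_n^{(2)})_n$ so that $(\mathbf{f}_n^{(2)}(\mathbf{x}_2))_n$ converges, and so on, producing nested subsequences. The diagonal sequence $\mathbf{g}_k:=\mathbf{f}_k^{(k)}$ is, from index $j$ onward, a subsequence of $(\mathbf{f}_n^{(j)})_n$, so $(\mathbf{g}_k(\mathbf{x}_j))_k$ converges in $\mathbb{R}^M$ for every $j$; that is, $(\mathbf{g}_k)$ converges pointwise on the dense set $D$.

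The main obstacle, and the only place where equicontinuity is used, is upgrading this pointwise convergence on $D$ to uniform convergence on all of $\mathcal{X}$. Fixing $\epsilon>0$, uniform equicontinuity (Definition \ref{def:equicontinuity}, which on the compact set $\mathcal{X}$ is equivalent to equicontinuity at every point) supplies $\delta>0$ with $\|\mathbf{g}_k(\mathbf{x})-\mathbf{g}_k(\mathbf{y})\|<\epsilon/3$ whenever $\|\mathbf{x}-\mathbf{y}\|<\delta$, uniformly in $k$. Compactness of $\mathcal{X}$ and density of $D$ then let me cover $\mathcal{X}$ by finitely many balls $\mathcal{B}_{\mathcal{X}}(\mathbf{x}_{j_1},\delta),\ldots,\mathcal{B}_{\mathcal{X}}(\mathbf{x}_{j_r},\delta)$ centered at points of $D$. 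Since $(\mathbf{g}_k)$ converges at each of these finitely many centers, there is $K$ with $\|\mathbf{g}_m(\mathbf{x}_{j_s})-\mathbf{g}_k(\mathbf{x}_{j_s})\|<\epsilon/3$ for all $m,k\geq K$ and all $s\in\overline{1,r}$. For an arbitrary $\mathbf{x}\in\mathcal{X}$, picking a center $\mathbf{x}_{j_s}$ within $\delta$ of $\mathbf{x}$ and applying the triangle inequality across the three estimates yields $\|\mathbf{g}_m(\mathbf{x})-\mathbf{g}_k(\mathbf{x})\|<\epsilon$ for $m,k\geq K$. Hence $(\mathbf{g}_k)$ is uniformly Cauchy; its uniform limit $\mathbf{g}$ is continuous, and (assuming $\mathcal{Y}$ closed) $\mathbf{g}\in C(\mathcal{X},\mathcal{Y})$, so $(\mathbf{g}_k)$ converges in the norm \eqref{eqn:cheb_norm}, proving a).

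Finally, part b) follows directly. Let $\mathcal{F}\subset C(\mathcal{X},\mathcal{Y})$ be bounded, equicontinuous, and closed, and take any sequence $(\mathbf{f}_n)_{n\in\mathbb{N}}\subset\mathcal{F}$; it inherits boundedness and equicontinuity from $\mathcal{F}$, so by a) it has a uniformly convergent subsequence with some limit $\mathbf{g}$, and closedness of $\mathcal{F}$ forces $\mathbf{g}\in\mathcal{F}$. Thus every sequence in $\mathcal{F}$ admits a convergent subsequence with limit in $\mathcal{F}$, which is exactly the definition of compactness (Definition \ref{def:boundedness_closedness_compactness}c). I expect the diagonal extraction and the triangle-inequality bookkeeping in the equicontinuity step to be the only delicate parts; everything else is a routine consequence of the compactness of $\mathcal{X}$.
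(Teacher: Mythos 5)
Your proof is correct, but there is nothing in the paper to compare it against: the statement is presented as a \emph{Fact} with citations to Brown and to Munkres (Corollary~45.5), and the authors give no proof of their own. Your argument --- Cantor diagonal extraction over a countable dense subset of the compact set $\mathcal{X}$, followed by the three-epsilon equicontinuity upgrade to uniform Cauchyness, with part b) as an immediate corollary via Definition~\ref{def:boundedness_closedness_compactness}c --- is precisely the classical textbook proof found in the cited references. The one caveat you flag, that the uniform limit lands in $C(\mathcal{X},\mathcal{Y})$ only if $\mathcal{Y}$ is closed, is a genuine (if minor) gap in the paper's own formulation rather than in your argument; in the paper's application $\mathcal{Y}=\mathcal{L}\subset[0,1]^M$ is effectively closed, so nothing breaks downstream.
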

\begin{remark}[Compactness in $\mathbb{R}^{m}$ and in $C(\mathcal{X},\mathcal{Y})$]\label{rem:compactness}
A subset of a finite dimensional Euclidean space is compact \textit{if and only} if it is bounded and closed (see \textit{Heine-Borel Theorem} \cite[Theorem~27.3]{Munkres2000}). 
However, in $C(\mathcal{X},\mathcal{Y})$, equicontinuity is required in addition to boundedness and closedness for compactness. 
\end{remark}
Finally, we present the concept of \textit{implicit functions}, which plays an important role in our study. 
\begin{fact}[\textit{Implicit function theorem}]\cite{Krantz2003}\label{fact:implicit_function}
Consider sets $\mathcal{X}\subset\mathbb{R}^N$, $\mathcal{Y} \subset\mathbb{R} ^M$, and $\mathcal{Z}\subset\mathbb{R}^M$, and a vector-valued continuous function $\mathbf{g}:\mathcal{Y}\times\mathcal{X}\rightarrow \mathcal{Z}$. Denote by $(i\in\overline{1,M})$ $g_i:\mathcal{Y}\times\mathcal{X}\rightarrow \mathbb{R}$ the $i$th component of $\mathbf{g}$.
Now, assume that $\mathbf{g}$ is continuously differentiable in a neighborhood $(\exists \delta_{\overline{x}},\delta_{\overline{y}}>0)$ $\mathcal{B}_{\mathcal{Y}}(\overline{\mathbf{y}},\delta_{\overline{y}})\times \mathcal{B}_{\mathcal{X}}(\overline{\mathbf{x}},\delta_{\overline{x}})$ of a point $(\overline{\mathbf{y}},\overline{\mathbf{x}})\in \mathcal{Y}\times\mathcal{X}$, and that $\mathbf{g}(\overline{\mathbf{y}},\overline{\mathbf{x}})=\mathbf{0}$. Let the Jacobian of $\mathbf{g}$ with respect to variables $\mathbf{y}$ (i.e., the first argument), denoted by $\boldsymbol{\nabla}^{\mathbf{g}}_{\mathbf{y}}:\mathcal{Y}\times\mathcal{X}\to \mathbb{R}^{M \times M}$ and defined as
\[
  \boldsymbol{\nabla}^{\mathbf{g}}_{\mathbf{y}}:=
\left( {\begin{array}{*{25}	c}
\frac{\partial g_1}{\partial y_1}&\frac{\partial g_1}{\partial y_2}&\cdots &\frac{\partial g_1}{\partial y_M}\\
\vdots & \vdots & \ddots & \vdots\\
\frac{\partial g_M}{\partial y_1}&\frac{\partial g_M}{\partial y_2}&\cdots &\frac{\partial g_M}{\partial y_M}
 \end{array} } \right),
\]
be invertible at $(\overline{\mathbf{y}},\overline{\mathbf{x}})$. Then, there exists a (unique and continuous) ``implicit'' function $\mathbf{f}:\mathcal{B}_{\mathcal{X}}(\overline{\mathbf{x}},\delta_{\overline{x}}) \rightarrow \mathcal{B}_{\mathcal{Y}}(\overline{\mathbf{y}},\delta_{\overline{y}})$ such that $(\forall \mathbf{x} \in \mathcal{B}_{\mathcal{X}}(\overline{\mathbf{x}},\delta_{\overline{x}}))$ $\mathbf{g}(\mathbf{f}(\mathbf{x}),\mathbf{x})=\mathbf{0}$. Furthermore, $\mathbf{f}$ is continuously differentiable on $\mathcal{B}_{\mathcal{X}}(\overline{\mathbf{x}},\delta_{\overline{x}})$. The value of the Jacobian of $\mathbf{f}$ is given by
\begin{equation}
           (\forall \mathbf{x} \in \mathcal{B}_\mathcal{X}(\overline{\mathbf{x}},\delta_{\overline{x}}))~\boldsymbol{\nabla}^{\mathbf{f}}_{\mathbf{x}}(\mathbf{x})= - \big(\boldsymbol{\nabla}^{\mathbf{g}}_{\mathbf{y}}(\mathbf{f}(\mathbf{x}),\mathbf{x})\big)^{-1} \boldsymbol{\nabla}^{\mathbf{g}}_{\mathbf{x}}(\mathbf{f}(\mathbf{x}),\mathbf{x}),
\end{equation}
where $\boldsymbol{\nabla}^{\mathbf{g}}_{\mathbf{x}}:\mathcal{Y}\times\mathcal{X}\to \mathbb{R}^{M \times N}$ is the Jacobian of $\mathbf{g}$ with respect to variables $\mathbf{x}$ (i.e., the second argument) given by
\[
  \boldsymbol{\nabla}^{\mathbf{g}}_{\mathbf{x}}:=
\left({\begin{array}{*{25}	c}
\frac{\partial g_1}{\partial x_1}&\frac{\partial g_1}{\partial x_2}&\cdots &\frac{\partial g_1}{\partial x_N}\\
\vdots & \vdots & \ddots & \vdots\\
\frac{\partial g_M}{\partial x_1}&\frac{\partial g_M}{\partial x_2}&\cdots &\frac{\partial g_M}{\partial x_N}
 \end{array}}\right).
\]
\end{fact}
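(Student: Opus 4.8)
Since this statement is the classical implicit function theorem, which is \emph{cited} from \cite{Krantz2003} rather than established here, I would only need to indicate the standard argument. The plan is to recast the equation $\mathbf{g}(\mathbf{y},\mathbf{x})=\mathbf{0}$ as a parametrized fixed-point problem and invoke the Banach contraction principle. Set $A:=\boldsymbol{\nabla}^{\mathbf{g}}_{\mathbf{y}}(\overline{\mathbf{y}},\overline{\mathbf{x}})$, which is invertible by hypothesis, and for each fixed $\mathbf{x}$ introduce the auxiliary map $\Phi_{\mathbf{x}}(\mathbf{y}):=\mathbf{y}-A^{-1}\mathbf{g}(\mathbf{y},\mathbf{x})$. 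The crucial observation is that $\mathbf{g}(\mathbf{y},\mathbf{x})=\mathbf{0}$ holds if and only if $\mathbf{y}$ is a fixed point of $\Phi_{\mathbf{x}}$; hence constructing $\mathbf{f}$ reduces to producing, for every $\mathbf{x}$ near $\overline{\mathbf{x}}$, a fixed point of $\Phi_{\mathbf{x}}$ that varies regularly in $\mathbf{x}$.

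The next step is to verify that $\Phi_{\mathbf{x}}$ is a contraction on a suitable ball. Differentiating in $\mathbf{y}$ gives $\boldsymbol{\nabla}^{\Phi_{\mathbf{x}}}_{\mathbf{y}}(\mathbf{y})=I-A^{-1}\boldsymbol{\nabla}^{\mathbf{g}}_{\mathbf{y}}(\mathbf{y},\mathbf{x})$, which vanishes at $(\overline{\mathbf{y}},\overline{\mathbf{x}})$. Because $\mathbf{g}$ is continuously differentiable, I would shrink $\delta_{\overline{y}}$ so that the operator norm of this Jacobian stays below $1/2$ on the neighborhood, which by the mean value inequality renders $\Phi_{\mathbf{x}}$ Lipschitz in $\mathbf{y}$ with constant $1/2$. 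Using $\mathbf{g}(\overline{\mathbf{y}},\overline{\mathbf{x}})=\mathbf{0}$ together with continuity of $\mathbf{g}$ in $\mathbf{x}$, one then shrinks $\delta_{\overline{x}}$ further so that $\Phi_{\mathbf{x}}$ maps the closed ball centered at $\overline{\mathbf{y}}$ into itself for each $\mathbf{x}\in\mathcal{B}_{\mathcal{X}}(\overline{\mathbf{x}},\delta_{\overline{x}})$. The Banach fixed-point theorem then yields a unique fixed point $\mathbf{f}(\mathbf{x})\in\mathcal{B}_{\mathcal{Y}}(\overline{\mathbf{y}},\delta_{\overline{y}})$, defining the implicit function and establishing both existence and uniqueness.

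It remains to establish regularity of $\mathbf{f}$ and the Jacobian formula. Continuity follows quickly from the uniform contraction: subtracting the fixed-point identities at $\mathbf{x}$ and $\mathbf{x}'$ and using the $1/2$-contraction bound yields a Lipschitz estimate for $\mathbf{f}$ in terms of the modulus of continuity of $\mathbf{g}$ in its second argument. For differentiability I would note that, since invertibility is an open condition and the determinant is continuous, $\boldsymbol{\nabla}^{\mathbf{g}}_{\mathbf{y}}(\mathbf{f}(\mathbf{x}),\mathbf{x})$ remains invertible throughout the neighborhood; then, writing out the first-order expansion of $\mathbf{g}$ along the difference quotient $\mathbf{f}(\mathbf{x}+\mathbf{h})-\mathbf{f}(\mathbf{x})$ and solving for the linear part shows that $\mathbf{f}$ is differentiable with $\boldsymbol{\nabla}^{\mathbf{f}}_{\mathbf{x}}=-(\boldsymbol{\nabla}^{\mathbf{g}}_{\mathbf{y}})^{-1}\boldsymbol{\nabla}^{\mathbf{g}}_{\mathbf{x}}$ evaluated at $(\mathbf{f}(\mathbf{x}),\mathbf{x})$. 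Equivalently, and more cleanly, the formula drops out of the chain rule applied to the identity $\mathbf{g}(\mathbf{f}(\mathbf{x}),\mathbf{x})=\mathbf{0}$, which gives $\boldsymbol{\nabla}^{\mathbf{g}}_{\mathbf{y}}\,\boldsymbol{\nabla}^{\mathbf{f}}_{\mathbf{x}}+\boldsymbol{\nabla}^{\mathbf{g}}_{\mathbf{x}}=\mathbf{0}$.

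The part I expect to be most delicate is upgrading continuity of $\mathbf{f}$ to continuous differentiability: one must control the remainder in the expansion of $\mathbf{g}$ uniformly and exploit that the inverse $(\boldsymbol{\nabla}^{\mathbf{g}}_{\mathbf{y}})^{-1}$ depends continuously on its argument. A clean alternative that sidesteps these estimates is to apply the inverse function theorem to the map $(\mathbf{y},\mathbf{x})\mapsto(\mathbf{g}(\mathbf{y},\mathbf{x}),\mathbf{x})$, whose Jacobian is block-triangular with diagonal blocks $\boldsymbol{\nabla}^{\mathbf{g}}_{\mathbf{y}}$ and the identity, hence invertible at the base point; reading off the first block of the local inverse recovers $\mathbf{f}$ together with its smoothness and the stated Jacobian in a single step.
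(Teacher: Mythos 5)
The paper does not prove this statement at all: it is presented as a \emph{Fact} and deferred entirely to the citation \cite{Krantz2003}, so there is no in-paper argument to compare against. Your sketch is the standard contraction-mapping proof of the classical implicit function theorem, and it is correct in outline: the reduction of $\mathbf{g}(\mathbf{y},\mathbf{x})=\mathbf{0}$ to a fixed point of $\Phi_{\mathbf{x}}(\mathbf{y})=\mathbf{y}-A^{-1}\mathbf{g}(\mathbf{y},\mathbf{x})$, the smallness of $I-A^{-1}\boldsymbol{\nabla}^{\mathbf{g}}_{\mathbf{y}}$ near the base point via continuity of the derivative, the self-mapping of a ball after shrinking $\delta_{\overline{x}}$, and the Lipschitz estimate for $\mathbf{f}$ from the uniform $1/2$-contraction are all the textbook steps. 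Two small points to tighten if this were written out in full: Banach's theorem needs a \emph{complete} metric space, so the fixed-point argument should run on a closed ball strictly inside $\mathcal{B}_{\mathcal{Y}}(\overline{\mathbf{y}},\delta_{\overline{y}})$ (after which the conclusion for the open ball follows); and the chain-rule derivation of $\boldsymbol{\nabla}^{\mathbf{g}}_{\mathbf{y}}\,\boldsymbol{\nabla}^{\mathbf{f}}_{\mathbf{x}}+\boldsymbol{\nabla}^{\mathbf{g}}_{\mathbf{x}}=\mathbf{0}$ only identifies the Jacobian once differentiability of $\mathbf{f}$ is already established, which you correctly flag. Your closing alternative --- applying the inverse function theorem to $(\mathbf{y},\mathbf{x})\mapsto(\mathbf{g}(\mathbf{y},\mathbf{x}),\mathbf{x})$ with its block-triangular Jacobian --- is indeed the cleanest way to get existence, $C^1$ regularity, and the Jacobian formula simultaneously, and is the route taken in many references including the one the paper cites.
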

%

\section{System Model}\label{sec:system_model}
In this study we consider an urban cellular base station deployment consisting of $M \in \mathbb{N}$ base stations and $N \in \mathbb{N}$ users. We consider the downlink and we denote by $r_{j}\in\mathbb{R}_{>0}$ the rate of user $j \in \overline{1,N}$ per unit time. We collect the rates of all users in a vector $\mathbf{r}:=[r_1,r_2,\cdots,r_N]^{\intercal} \in \mathbb{R}^{N}_{>0}$.
\begin{table}
\caption{List of Variables}
\centering
  \begin{tabular}{| l | l |}
    \hline
		 \textbf{Description}                                        &  \textbf{Symbol}     \\
		 \hline
		 Number of base stations                                      &  $M$                  \\ 
		 Number of users                                              &  $N$                  \\ 
		 Set of base stations                                        &  $\mathcal{M}=\{1,2,\ldots,M\}$                  \\                  
		 Set of users                                                &  $\mathcal{N}=\{1,2,\ldots,N\}$                  \\
		 Set of users for base station $i$                           &  $\mathcal{N}(i)$                  \\
		 Rate of user $j$                                            &  $r_j \in \mathbb{R}_{>0}$                       \\
		 Minimum user rate vector                                           &  $r_\text{min} \in \mathbb{R}_{>0}$              \\
		 Device SNR between base station $i$ and user $j$            &  $\gamma_{ij}$    \\ 
		 Number of resource blocks                                   &  $R\in\mathbb{N}$                           \\
		 Bandwidth of each resource block                            &  $B\in\mathbb{R}_{>0}$                           \\
		 Cell-load		                                               &  $\boldsymbol{\rho} \in \mathbb{R}_{\geq 0}^M$   \\
		 Load mapping                                                &  $\mathbf{q}:\mathbb{R}_{\geq 0}^{M}\times\mathbb{R}_{>0}^{N}\rightarrow\mathbb{R}_{\geq 0}^{M}$                       \\
		 Base station transmit power                                 &  $\mathbf{p} \in \mathbb{R}^{M}_{>0}$            \\
		 Path-loss between base station $i$ and user $j$       &  $G_{i,j}\in\mathbb{R}_{>0}$            \\  
		 Space of continuous functions from $X$ to $\mathcal{Y}$     &  $C(\mathcal{X},\mathcal{Y})$                 \\
		 Lipschitz constant                                          &  $\mathbf{L} \in \mathbb{R}_{\geq 0}^{M}$		     \\
		 Euclidean open-ball centered at $\mathbf{x}\in\mathcal{X}$  &  $\mathcal{B}_{\mathcal{X}}(\mathbf{x},\delta)$                            \\
		 Network coherence time                                      &  $T_{\text{net}}\in\mathbb{R}_{>0}$                            \\
		 Sample acquisition time                                     &  $T_{\text{obv}}\in\mathbb{R}_{>0}$                            \\
		 Sample average time                                         &  $T_{\text{avg}}\in\mathbb{R}_{>0}$                            \\
		 Sample set size                                             &  $K \in \mathbb{N}$                            \\
		\hline
 \end{tabular}
\end{table}
%

\subsection{Load Coupling Model and the Feasible Rate Region} \label{load_coupling_model} 
    We now present the load-coupling model proposed in \cite{Siomina2012,Ho2014}, which has been shown to be sufficiently accurate in certain scenarios in practice \cite{Fehske2012, Mogensen2007, Shen2015T}. This model is based on the fact that time-frequency resources available at a base station are divided into physical resource blocks to facilitate resource allocation. The cell-load (at a base station) is defined to be the fraction of available resource blocks that are allocated to support the rates of the users associated with the base station. Resource blocks are allocated to users based on their rates and channel qualities given in terms of their average signal-to-interference-plus-noise ratios (SINRs). In the following we denote by $\mathcal{M}:=\{1,2,\ldots,M\}$ and $\mathcal{N}:=\{1,2,\ldots,N\}$ the set of base stations and users, respectively, and we denote by $\mathcal{N}(i)$ the set of users associated with base station $i\in\mathcal{M}$. 
			
   Consider the case where base station $i\in\mathcal{M}$ is serving user $j\in\mathcal{N}(i)$ and denote by $G_{i,j}$ the path-loss between base station $i$ and user $j$. 
The load-based SINR model represents the inter-cell interference from base station $k \in \mathcal{M}\setminus i$ as the product $p_{k}G_{k,j}\rho_{k}\geq 0$, where $p_{k}$ is the fixed transmit power of base station $k$ per resource block, and where $0<\rho_{k}\leq 1$ denotes the cell-load at base station $k$ \cite{Fehske2012}.
With this model in hand, the network layer (averaged) SINR of the wireless link between base station $i$ and user $j$ is expressed as \cite{Siomina2012,Ho2014} 
\begin{equation}
\gamma_{ij}(\boldsymbol{\rho})=\frac{p_{i}G_{i,j}}{\sum_{k \in \mathcal{M}\setminus i}p_{k}G_{k,j}\rho_{k}+\sigma^{2}},
\label{eqn:SINR}
\end{equation}
where $\boldsymbol{\rho}:=[\rho_1,\rho_2,...,\rho_M]^{\intercal} \in \mathbb{R}_{>0}$ is the vector of cell-load values at all base stations in the network and where $\sigma^{2}$ denotes noise power. Note that the denominator in \eqref{eqn:SINR} provides an interpretation of the cell-load as the probability of inter-cell interference from base station $k$ \cite{Siomina2012}. For further details of the model including its strengths and weaknesses see \cite{Siomina2012,Ho2014}. Let $R\in\mathbb{N}$ be the total number of resource blocks available at the base station, each with bandwidth $B\in\mathbb{R}_{>0}$. Given SINR $\gamma_{ij}(\boldsymbol{\rho})$, we assume that base station $i$ can reliably transmit at a rate $r_{ij}^s = B\log(1+\gamma_{ij}(\boldsymbol{\rho}))$ per resource block to user $j$. Thus, to ``support'' the rate $r_{j}$, base station $i$ has to allocate $\rho_{ij}=\frac{r_{j}}{r_{ij}^s}$ resource blocks to user $j$. Summing the resource block consumption over all $\mathcal{N}(i)$, we obtain the ``cell-load'' (in terms of total resource consumption) of base station $i \in \overline{1,M}$ 
\begin{equation}
\rho_{i}=\frac{1}{RB}\sum_{j\in\mathcal{N}(i)} \frac{r_{j}}{\log(1+\gamma_{ij}(\boldsymbol{\rho}))}.
\label{eqn:load_i}
\end{equation}  

Note that, we can express the right-hand side of \eqref{eqn:load_i} for the entire network as a vector-valued mapping 
\begin{align*}
\mathbf{q}:\mathbb{R}_{\geq 0}^{M}\times\mathbb{R}_{>0}^{N}\,\,\,\,\,\,\,\,&\to\,\,\,\,\,\,\,\,\,\mathbb{R}_{> 0}^{M}\\ 
(\boldsymbol{\rho},\mathbf{r})\,\,\,\,\,\,\,\,\,\,\,\,\,\,\,\,\,\,\,\,\,&\mapsto\,\,\,\,\,\,\,\,\left[ {\begin{array}{*{25}	c}
\frac{1}{RB}\sum_{j\in\mathcal{N}(1)} \frac{r_{j}}{\log(1+\gamma_{ij}(\boldsymbol{\rho}))}\\
\vdots\\
\frac{1}{RB}\sum_{j\in\mathcal{N}(M)} \frac{r_{j}}{\log(1+\gamma_{ij}(\boldsymbol{\rho}))}
 \end{array} } \right],
\end{align*}
which we refer to as the \textit{load mapping}. Given $\overline{\mathbf{r}}\in\mathbb{R}^{N}_{>0}$, it follows from \eqref{eqn:load_i} that the cell-load vector is the solution (if it exists) to the fixed point problem: \textit{Find $\boldsymbol{\rho}^{\ast}=[\rho_1^\ast,\rho_2^\ast,...,\rho^\ast_M]^{\intercal} \in \mathbb{R}_{\geq 0}^{M}$ such that:}
\begin{equation}
\boldsymbol{\rho}^{\ast}=\mathbf{q}(\boldsymbol{\rho}^{\ast},\overline{\mathbf{r}}). 
\label{eqn:load_mapping}
\end{equation} 

   
	 Since the cell-load is defined as a fraction of the available resources at the base station, a rate vector is \textit{feasible} (i.e., there are sufficient resource blocks available at all base stations to support rate of every user) if the solution (if it exists) to \eqref{eqn:load_mapping} satisfies $\boldsymbol{\rho}^{\ast}\leq \mathbf{1}$. For a given supported $\overline{\mathbf{r}}\in\mathbb{R}^{N}_{>0}$, the solution to \eqref{eqn:load_mapping} can be obtained by iterative fixed point algorithms as long as the network information (path-losses, powers, user association, etc. in \eqref{eqn:load_i}) required by these algorithms is available. In more detail, given $\mathbf{r} \in \mathbb{R}^N_{>0}$, the mapping $\Gamma_{\mathbf{r}}:\mathbb{R}_{\geq 0}^{M}\rightarrow\mathbb{R}_{>0}^{M}:\boldsymbol{\rho}\mapsto\mathbf{q}(\boldsymbol{\rho},\mathbf{r})$ is a \textit{positive concave mapping}, so it also belongs to the class of \textit{standard interference functions} \cite{Cavalcante2016,yates95}. Therefore, the following holds:
\begin{fact}[\textit{The unique fixed point solution}]\cite{yates95}\label{fact:fixed_point}
Suppose the rate vector $\overline{\mathbf{r}} \in \mathbb{R}_{>0}^{N}$ is feasible, then the solution set of \eqref{eqn:load_mapping} given by
\begin{equation*}
\mathrm{Fix}(\Gamma_{\overline{\mathbf{r}}}):=\left\{\boldsymbol{\rho}^{\ast}\in \mathbb{R}_{\geq 0}^{M}~|~\mathbf{0}<\Gamma_{\overline {\mathbf{r}}}(\boldsymbol{\rho}^{\ast})=\boldsymbol{\rho}^{\ast}\leq \mathbf{1}\right\}
\end{equation*}
contains at most one fixed point.
\end{fact}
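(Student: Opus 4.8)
The plan is to lean entirely on the structural fact, already established in the text, that $\Gamma_{\overline{\mathbf{r}}}$ is a standard interference function. Concretely, I would use three properties valid for every $\boldsymbol{\rho} \in \mathbb{R}_{\geq 0}^{M}$: positivity ($\Gamma_{\overline{\mathbf{r}}}(\boldsymbol{\rho}) > \mathbf{0}$), monotonicity ($\boldsymbol{\rho}_1 \leq \boldsymbol{\rho}_2 \Rightarrow \Gamma_{\overline{\mathbf{r}}}(\boldsymbol{\rho}_1) \leq \Gamma_{\overline{\mathbf{r}}}(\boldsymbol{\rho}_2)$), and scalability ($\alpha > 1 \Rightarrow \alpha\,\Gamma_{\overline{\mathbf{r}}}(\boldsymbol{\rho}) > \Gamma_{\overline{\mathbf{r}}}(\alpha\boldsymbol{\rho})$). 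Scalability is the property that drives the argument, and it follows from positivity and concavity by writing $\boldsymbol{\rho} = \tfrac{1}{\alpha}(\alpha\boldsymbol{\rho}) + (1-\tfrac{1}{\alpha})\mathbf{0}$ and using $\Gamma_{\overline{\mathbf{r}}}(\mathbf{0}) > \mathbf{0}$, so I would not re-derive it but simply invoke it.

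I would then argue by contradiction in the classical Yates style. Suppose $\boldsymbol{\rho}^{\ast}$ and $\boldsymbol{\rho}'$ are two \emph{distinct} elements of $\mathrm{Fix}(\Gamma_{\overline{\mathbf{r}}})$. By the very definition of the solution set both are strictly positive, $\mathbf{0} < \boldsymbol{\rho}^{\ast}$ and $\mathbf{0} < \boldsymbol{\rho}'$, which is exactly what makes componentwise ratios well defined. I would set $\alpha := \max_{i \in \overline{1,M}} \rho'_i / \rho^{\ast}_i$, so that $\alpha\boldsymbol{\rho}^{\ast} \geq \boldsymbol{\rho}'$ componentwise with equality at some maximizing index $k$, i.e.\ $\alpha\rho^{\ast}_k = \rho'_k$. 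After possibly interchanging the names of the two points (the situation is symmetric, with $\max_i \rho^{\ast}_i/\rho'_i$ playing the same role), I may assume $\alpha \geq 1$; and since $\boldsymbol{\rho}^{\ast} \neq \boldsymbol{\rho}'$ the case $\alpha = 1$ in both directions would force $\boldsymbol{\rho}^{\ast} = \boldsymbol{\rho}'$, so I may take $\alpha > 1$.

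With $\alpha > 1$ and the index $k$ fixed, I would combine all three properties into a single chain at coordinate $k$. Using the fixed-point equations $\rho^{\ast}_k = (\Gamma_{\overline{\mathbf{r}}}(\boldsymbol{\rho}^{\ast}))_k$ and $\rho'_k = (\Gamma_{\overline{\mathbf{r}}}(\boldsymbol{\rho}'))_k$, then scalability at $\boldsymbol{\rho}^{\ast}$, and finally monotonicity applied to $\alpha\boldsymbol{\rho}^{\ast} \geq \boldsymbol{\rho}'$, I obtain
\begin{equation*}
\rho'_k = \alpha\rho^{\ast}_k = \alpha\,(\Gamma_{\overline{\mathbf{r}}}(\boldsymbol{\rho}^{\ast}))_k > (\Gamma_{\overline{\mathbf{r}}}(\alpha\boldsymbol{\rho}^{\ast}))_k \geq (\Gamma_{\overline{\mathbf{r}}}(\boldsymbol{\rho}'))_k = \rho'_k .
\end{equation*}
The strict inequality $\rho'_k > \rho'_k$ is absurd, so no two distinct fixed points can coexist and $\mathrm{Fix}(\Gamma_{\overline{\mathbf{r}}})$ contains at most one element.

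The main obstacle here is not the algebra, which is short, but the bookkeeping around $\alpha$. I must guarantee the ratios are well defined (using the strict positivity built into the definition of $\mathrm{Fix}$), justify the reduction to $\alpha > 1$ through the symmetric relabeling, and use the maximizing coordinate $k$ consistently so that the chain of (in)equalities closes back onto $\rho'_k$. Once those three points are handled, scalability supplies the one strict inequality that produces the contradiction, and the result follows immediately.
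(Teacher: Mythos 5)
Your argument is correct and is precisely the classical Yates uniqueness proof for standard interference functions (positivity, monotonicity, and scalability derived from positive concavity, followed by the ratio-maximizing contradiction), which is exactly what the paper relies on: it states this as a Fact and delegates the proof to the citation of Yates after noting that $\Gamma_{\mathbf{r}}$ is a positive concave mapping. So you have simply made explicit the standard argument the paper invokes by reference; the bookkeeping around $\alpha$ and the maximizing index $k$ is handled correctly.
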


   As mentioned previously in Section \ref{sec:contribution}, we incorporate prior knowledge about the cell-load in our learning framework presented in Section \ref{section:the_learning_problem} to ensure robust learning. To this end, Fact \ref{proposition:one} presents an important property of the cell-load, namely its monotonicity in the rate vector:
\begin{fact}\cite[Theorem 2]{Ho2014}
Consider any two feasible rate vectors $\mathbf{r}^{k},\mathbf{r}^{j} \in \mathcal{R}$ and the corresponding fixed points $\boldsymbol{\rho}^j\in\mathrm{Fix}(\Gamma_{\mathbf{r}^{j}}) \neq \emptyset$ and $\boldsymbol{\rho}^k\in\mathrm{Fix}(\Gamma_{\mathbf{r}^{k}})\neq \emptyset$. Then $\mathbf{r}^{j} \geq \mathbf{r}^{k}\implies\boldsymbol{\rho}^j\geq\boldsymbol{\rho}^k$.
\label{proposition:one}
\end{fact}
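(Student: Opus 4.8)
The plan is to exploit the fact, already noted just before \eqref{eqn:load_mapping}, that for fixed $\mathbf{r}$ the map $\Gamma_{\mathbf{r}}:\boldsymbol{\rho}\mapsto\mathbf{q}(\boldsymbol{\rho},\mathbf{r})$ is a standard interference function \cite{yates95}, and to combine its monotonicity in the load $\boldsymbol{\rho}$ with a separate monotonicity in the rate argument $\mathbf{r}$. First I would record two elementary monotonicity properties that follow directly from the closed form \eqref{eqn:load_i}. Since each $\gamma_{ij}(\boldsymbol{\rho})$ in \eqref{eqn:SINR} is nonincreasing in every coordinate of $\boldsymbol{\rho}$ (the interference terms $p_{k}G_{k,j}\rho_{k}$ sit in the denominator), the quantity $\log(1+\gamma_{ij}(\boldsymbol{\rho}))$ is nonincreasing in $\boldsymbol{\rho}$, and hence each summand $r_{j}/\log(1+\gamma_{ij}(\boldsymbol{\rho}))$ is nondecreasing in $\boldsymbol{\rho}$; this gives property (P1): $\boldsymbol{\rho}\leq\boldsymbol{\rho}'\Rightarrow\Gamma_{\mathbf{r}}(\boldsymbol{\rho})\leq\Gamma_{\mathbf{r}}(\boldsymbol{\rho}')$. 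For the second property, observe that $\gamma_{ij}(\boldsymbol{\rho})$ does not depend on $\mathbf{r}$, so for fixed $\boldsymbol{\rho}$ each component of $\Gamma_{\mathbf{r}}(\boldsymbol{\rho})$ is linear and increasing in the $r_{j}$; this yields (P2): $\mathbf{r}^{k}\leq\mathbf{r}^{j}\Rightarrow\Gamma_{\mathbf{r}^{k}}(\boldsymbol{\rho})\leq\Gamma_{\mathbf{r}^{j}}(\boldsymbol{\rho})$ for every $\boldsymbol{\rho}$.

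Second, I would run a monotone fixed-point iteration of the larger-rate map seeded at the smaller-rate fixed point. Concretely, set $\boldsymbol{\rho}^{(0)}:=\boldsymbol{\rho}^{k}$ and $\boldsymbol{\rho}^{(n+1)}:=\Gamma_{\mathbf{r}^{j}}(\boldsymbol{\rho}^{(n)})$. Because $\boldsymbol{\rho}^{k}=\Gamma_{\mathbf{r}^{k}}(\boldsymbol{\rho}^{k})$ is a fixed point of the smaller-rate map, (P2) gives $\boldsymbol{\rho}^{(1)}=\Gamma_{\mathbf{r}^{j}}(\boldsymbol{\rho}^{k})\geq\Gamma_{\mathbf{r}^{k}}(\boldsymbol{\rho}^{k})=\boldsymbol{\rho}^{(0)}$. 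An induction using (P1) then propagates the inequality: if $\boldsymbol{\rho}^{(n)}\geq\boldsymbol{\rho}^{(n-1)}$, then $\boldsymbol{\rho}^{(n+1)}=\Gamma_{\mathbf{r}^{j}}(\boldsymbol{\rho}^{(n)})\geq\Gamma_{\mathbf{r}^{j}}(\boldsymbol{\rho}^{(n-1)})=\boldsymbol{\rho}^{(n)}$, so the sequence $(\boldsymbol{\rho}^{(n)})$ is componentwise nondecreasing.

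Third, for the convergence and conclusion: since $\mathbf{r}^{j}$ is feasible, its fixed point $\boldsymbol{\rho}^{j}$ exists and is unique (Fact \ref{fact:fixed_point}), and the standard-interference-function theory of \cite{yates95} guarantees that the iteration $\boldsymbol{\rho}^{(n+1)}=\Gamma_{\mathbf{r}^{j}}(\boldsymbol{\rho}^{(n)})$ converges to $\boldsymbol{\rho}^{j}$ from the arbitrary nonnegative initialization $\boldsymbol{\rho}^{(0)}=\boldsymbol{\rho}^{k}$. Passing to the limit along the nondecreasing chain $\boldsymbol{\rho}^{(0)}\leq\boldsymbol{\rho}^{(1)}\leq\cdots$ and using closedness of the order cone gives $\boldsymbol{\rho}^{j}=\lim_{n}\boldsymbol{\rho}^{(n)}\geq\boldsymbol{\rho}^{(0)}=\boldsymbol{\rho}^{k}$, which is exactly the claim.

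The main obstacle will be the convergence step rather than the monotonicity bookkeeping. I must invoke Yates' result in the form that asserts convergence to the \emph{unique} fixed point from \emph{any} nonnegative starting point (not merely from a dominating or feasible one), and I must ensure the iterates stay in the region where \eqref{eqn:SINR} is well defined and the load remains bounded; feasibility of $\mathbf{r}^{j}$ supplies this, since $\boldsymbol{\rho}^{j}\leq\mathbf{1}$ bounds the nondecreasing sequence from above. An alternative route avoiding an explicit iteration would appeal to the continuous differentiability of the implicit load function via Fact \ref{fact:implicit_function} and sign the Jacobian $\boldsymbol{\nabla}^{\mathbf{f}}_{\mathbf{r}}$, but that demands extra differentiability hypotheses and a global integration argument, so I would prefer the monotone-iteration proof for its minimal assumptions.
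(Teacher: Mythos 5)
Your argument is correct, but there is nothing in the paper to compare it against: the statement is presented as a \emph{Fact}, imported verbatim from \cite[Theorem~2]{Ho2014} with no in-paper proof. What you have written is essentially the standard (and, in all likelihood, the cited reference's own) derivation: combine monotonicity of the standard interference mapping $\Gamma_{\mathbf{r}}$ in the load argument (your P1) with monotonicity of $\mathbf{q}(\boldsymbol{\rho},\cdot)$ in the rate argument (your P2), seed the larger-rate iteration at the smaller-rate fixed point, observe that the iterates form a nondecreasing chain, and invoke Yates' convergence of the fixed-point iteration to the unique fixed point from an arbitrary nonnegative initialization to conclude $\boldsymbol{\rho}^{j}=\lim_{n}\boldsymbol{\rho}^{(n)}\geq\boldsymbol{\rho}^{(0)}=\boldsymbol{\rho}^{k}$. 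One small caution: your remark that feasibility of $\mathbf{r}^{j}$ bounds the nondecreasing sequence because $\boldsymbol{\rho}^{j}\leq\mathbf{1}$ is circular as stated, since $\boldsymbol{\rho}^{(n)}\leq\boldsymbol{\rho}^{j}$ would require the base case $\boldsymbol{\rho}^{k}\leq\boldsymbol{\rho}^{j}$, which is the claim being proved. This does not break the proof, because the full-strength version of Yates' Theorem~2 that you invoke (convergence to the unique fixed point from \emph{any} nonnegative start, established there by sandwiching between the iteration from $\mathbf{0}$ and the iteration from a scaled-up fixed point $\alpha\boldsymbol{\rho}^{j}$) already guarantees that the limit exists and equals $\boldsymbol{\rho}^{j}$; the monotonicity of your chain then delivers the inequality. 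If you want the argument fully self-contained, replace the $\boldsymbol{\rho}^{j}\leq\mathbf{1}$ remark with an explicit appeal to that sandwiching step.
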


In the next section we define and study the \textit{feasible} rate region, which is the set of all rates supported by the network.

\section{Properties of the Feasible Rate Region}\label{sec:feasible_rate_demand_set}
 In light of Fact \ref{fact:fixed_point} and Fact \ref{proposition:one}, and given the minimum feasible rate vector $\mathbf{r}_{\text{min}}\in \mathbb{R}^{N}_{>0}$ (e.g., corresponding to the lowest order \textit{modulation and coding scheme} in the network) that induces the cell-load $\boldsymbol{\rho}_{\text{min}}\in\mathbb{R}^{M}_{>0}$, we are now in a position to define the feasible rate region and the set of cell-load vectors over this set.
\begin{definition}[\textit{Feasible Rate Region and the Cell Load Set}]\label{defn:feasible_region_fixed_points}
The feasible rate region is defined as 
\begin{equation}
\mathcal{R}:=\{\mathbf{r}\geq \mathbf{r}_{\text{min}}\in\mathbb{R}^{N}_{>0}~|~(\exists~\boldsymbol{\rho}^{\ast}\in\mathrm{Fix}(\Gamma_{\mathbf{r}}))~,\boldsymbol{\rho}_{\text{min}}\leq \boldsymbol{\rho}^{\ast}\leq\mathbf{1}\}. 
\end{equation}
Similarly, the feasible cell-load set is given by the set of fixed points (see Fact \ref{fact:fixed_point}) 
\begin{equation}
\mathcal{L}:=\left\{\boldsymbol{\rho}\in \mathbb{R}^{M}_{>0} ~|~(\exists~\mathbf{r}^{\ast}\in\mathcal{R})~,\boldsymbol{\rho}_{\text{min}}\leq \Gamma_{\mathbf{r}^{\ast}}(\boldsymbol{\rho})=\boldsymbol{\rho}\leq\mathbf{1}\right\}.
\end{equation}
\end{definition}

In the following we extend the prior knowledge in our learning framework by studying the feasible rate region $\mathcal{R}\in\mathbb{R}^{N}_{>0}$ in Definition \ref{defn:feasible_region_fixed_points}. In particular, we show in Theorem \ref{theorem:one} that $\mathcal{R}$ is compact. The compactness of $\mathcal{R}$ is also required for our results in Section \ref{section:the_learning_problem}. 
%
    
	Note that $\mathcal{R}$ is bounded from below by $\mathbf{r}_{\text{min}}\in\mathbb{R}^{N}_{>0}$. Since power, bandwidth, and the total number of resource blocks are fixed in \eqref{eqn:SINR} and \eqref{eqn:load_i}, and because the cell-load is monotonic in the user rate vector by Fact \ref{proposition:one}, arbitrarily large user rates cannot be supported. We state this fact formally in Lemma \ref{lemma:one}, which we use to prove compactness of $\mathcal{R}$ in Theorem \ref{theorem:one}. 
\begin{lemma}\label{lemma:lemma_one}
The feasible rate region is bounded.
\label{lemma:one}
\end{lemma}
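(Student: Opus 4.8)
The plan is to prove boundedness by exhibiting a single rate vector $\bar{\mathbf{r}}$ that dominates every element of $\mathcal{R}$ component-wise; since $\mathcal{R}$ is already bounded from below by $\mathbf{r}_{\text{min}}$ by construction, this suffices. The decisive observation is that the interference term in the denominator of the SINR in \eqref{eqn:SINR} is non-negative, which caps the SINR by a constant that is independent of the rate vector and of the induced cell-load. The feasibility constraint $\boldsymbol{\rho}^{\ast}\leq\mathbf{1}$ then converts this SINR ceiling into a finite upper bound on each user rate.

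First I would fix an arbitrary feasible rate vector $\mathbf{r}\in\mathcal{R}$ together with a corresponding fixed point $\boldsymbol{\rho}^{\ast}\in\mathrm{Fix}(\Gamma_{\mathbf{r}})$ satisfying $\boldsymbol{\rho}_{\text{min}}\leq\boldsymbol{\rho}^{\ast}\leq\mathbf{1}$, which exists by Definition \ref{defn:feasible_region_fixed_points}. For any user $j$ served by base station $i$ (i.e.\ $j\in\mathcal{N}(i)$), I would evaluate the cell-load equation \eqref{eqn:load_i} at $\boldsymbol{\rho}^{\ast}$; because every summand is non-negative, isolating the $j$th term and using $\rho_i^{\ast}\leq 1$ gives
\[
\frac{1}{RB}\,\frac{r_{j}}{\log\bigl(1+\gamma_{ij}(\boldsymbol{\rho}^{\ast})\bigr)}\leq\rho_i^{\ast}\leq 1,
\]
so that $r_j\leq RB\,\log\bigl(1+\gamma_{ij}(\boldsymbol{\rho}^{\ast})\bigr)$.

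Next I would bound the SINR uniformly. Since $\boldsymbol{\rho}^{\ast}\geq\mathbf{0}$ and the powers and path-losses are positive, the interference sum in the denominator of \eqref{eqn:SINR} is non-negative, so the denominator is at least $\sigma^{2}$, whence
\[
\gamma_{ij}(\boldsymbol{\rho}^{\ast})=\frac{p_{i}G_{i,j}}{\sum_{k\in\mathcal{M}\setminus i}p_{k}G_{k,j}\rho_{k}^{\ast}+\sigma^{2}}\leq\frac{p_{i}G_{i,j}}{\sigma^{2}},
\]
a constant determined only by the fixed network parameters. Combining the two displays yields the rate-independent component-wise bound $r_j\leq RB\,\log\bigl(1+p_{i}G_{i,j}/\sigma^{2}\bigr)=:\bar{r}_j$ for every $j\in\mathcal{N}(i)$ and every $\mathbf{r}\in\mathcal{R}$. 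Collecting these into $\bar{\mathbf{r}}:=[\bar{r}_1,\ldots,\bar{r}_N]^{\intercal}$, I obtain $\mathbf{r}\leq\bar{\mathbf{r}}$ for all $\mathbf{r}\in\mathcal{R}$, so taking $L:=\|\bar{\mathbf{r}}\|$ verifies boundedness in the sense of Definition \ref{def:boundedness_closedness_compactness}a).

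I do not expect a genuine obstacle here; the single step that carries the argument is recognizing that discarding the non-negative interference term bounds the SINR uniformly, after which the constraint $\rho_i^{\ast}\leq 1$ immediately ceilings each rate. The only point worth stating explicitly is that the fixed user association guarantees each user $j$ lies in some $\mathcal{N}(i)$, so that every component $r_j$ is indeed constrained by one of the cell-load equations; note also that monotonicity (Fact \ref{proposition:one}) is used only as motivating intuition and is not actually needed for this direct estimate.
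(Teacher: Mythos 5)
Your proof is correct, and it is worth noting that the paper itself never writes out a formal proof of Lemma~\ref{lemma:one}: it only offers the preceding remark that power, bandwidth, and the number of resource blocks are fixed and that the cell-load is monotonic in the rates (Fact~\ref{proposition:one}), so arbitrarily large rates cannot be supported. Your argument is the natural concrete instantiation of that claim, and it is in fact more elementary than the route the paper gestures at, because you never invoke monotonicity: you only use that every summand in \eqref{eqn:load_i} is non-negative (so a single term is dominated by $\rho_i^{\ast}\leq 1$), that the interference term in \eqref{eqn:SINR} is non-negative (so $\gamma_{ij}(\boldsymbol{\rho}^{\ast})\leq p_iG_{i,j}/\sigma^2$), and that the user association partitions $\mathcal{N}$ so every $r_j$ is covered by some cell-load equation. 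The payoff of your version is an explicit, rate-independent bound $r_j\leq RB\log\bigl(1+p_iG_{i,j}/\sigma^2\bigr)$, which is strictly more information than the paper records; the only cosmetic caveat is that the base of the logarithm in \eqref{eqn:load_i} is immaterial for boundedness, and your claim that monotonicity is dispensable here is accurate.
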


We now present the main result of this section.
\begin{theorem}\label{theorem:one}
\textit{The feasible rate region is compact.}
\label{theorem:one}
\end{theorem}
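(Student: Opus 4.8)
The plan is to invoke the Heine--Borel characterization recalled in Remark \ref{rem:compactness}: since $\mathcal{R}\subset\mathbb{R}^N$ lives in a finite-dimensional Euclidean space, it is compact if and only if it is bounded and closed. Boundedness is already furnished by Lemma \ref{lemma:one}, so the entire burden of the argument is to establish that $\mathcal{R}$ is \emph{closed}. I would prove closedness through the sequential criterion of Definition \ref{def:boundedness_closedness_compactness}b): take an arbitrary sequence $(\mathbf{r}_n)_{n\in\mathbb{N}}\subset\mathcal{R}$ converging in $\mathbb{R}^N$ to some $\mathbf{r}^\ast$, and show that the limit $\mathbf{r}^\ast$ again lies in $\mathcal{R}$.

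The key mechanism is to track the associated cell-load fixed points. By Definition \ref{defn:feasible_region_fixed_points}, for each $n$ there is a fixed point $\boldsymbol{\rho}_n$ obeying $\boldsymbol{\rho}_{\text{min}}\leq\boldsymbol{\rho}_n\leq\mathbf{1}$ and $\boldsymbol{\rho}_n=\mathbf{q}(\boldsymbol{\rho}_n,\mathbf{r}_n)$. Crucially, every $\boldsymbol{\rho}_n$ lies in the box $[\boldsymbol{\rho}_{\text{min}},\mathbf{1}]\subset\mathbb{R}^M$, which is bounded and closed, hence compact by Heine--Borel. Therefore $(\boldsymbol{\rho}_n)$ admits a convergent subsequence $(\boldsymbol{\rho}_{n_k})_k$ with limit $\boldsymbol{\rho}^\ast\in[\boldsymbol{\rho}_{\text{min}},\mathbf{1}]$, and along this subsequence one also has $\mathbf{r}_{n_k}\to\mathbf{r}^\ast$.

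It then remains to pass to the limit in the fixed-point relation. I would first argue that the load mapping $\mathbf{q}$ is \emph{jointly} continuous in $(\boldsymbol{\rho},\mathbf{r})$: each $\gamma_{ij}(\boldsymbol{\rho})$ in \eqref{eqn:SINR} is a ratio whose denominator is bounded below by $\sigma^2>0$, so $\gamma_{ij}$ is continuous and strictly positive, whence $\log(1+\gamma_{ij})>0$ and every component of $\mathbf{q}$ is continuous. Applying continuity to $\boldsymbol{\rho}_{n_k}=\mathbf{q}(\boldsymbol{\rho}_{n_k},\mathbf{r}_{n_k})$ and letting $k\to\infty$ yields $\boldsymbol{\rho}^\ast=\mathbf{q}(\boldsymbol{\rho}^\ast,\mathbf{r}^\ast)$, i.e., $\boldsymbol{\rho}^\ast\in\mathrm{Fix}(\Gamma_{\mathbf{r}^\ast})$. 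Finally, taking limits in the (non-strict) inequalities $\mathbf{r}_n\geq\mathbf{r}_{\text{min}}$ gives $\mathbf{r}^\ast\geq\mathbf{r}_{\text{min}}>\mathbf{0}$, so that $\mathbf{r}^\ast\in\mathbb{R}^N_{>0}$; combined with $\boldsymbol{\rho}_{\text{min}}\leq\boldsymbol{\rho}^\ast\leq\mathbf{1}$ this shows $\mathbf{r}^\ast\in\mathcal{R}$, completing the closedness argument.

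The step I expect to be the main obstacle is the passage to the limit of the fixed points: a priori the sequence $(\boldsymbol{\rho}_n)$ need not converge, and the fixed point attached to $\mathbf{r}^\ast$ is not known in advance. The resolution is exactly the compactness of the cell-load box $[\boldsymbol{\rho}_{\text{min}},\mathbf{1}]$, which supplies a convergent subsequence, after which joint continuity of $\mathbf{q}$ does the rest. A secondary but essential point---and the reason $\mathcal{R}$ is closed in all of $\mathbb{R}^N$ rather than merely in $\mathbb{R}^N_{>0}$---is that the lower bound $\mathbf{r}_{\text{min}}>\mathbf{0}$ prevents the limit from escaping to the boundary of the positive orthant.
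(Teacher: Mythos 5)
Your proof is correct and follows essentially the same route as the paper: Heine--Borel together with Lemma \ref{lemma:one} for boundedness, and sequential closedness obtained by passing to the limit in the fixed-point relation using continuity of the load mapping $\mathbf{q}$. If anything, your version is slightly more careful than the paper's at one point: the paper simply assumes the pair sequence $(\mathbf{r}_n,\boldsymbol{\rho}_n)$ converges jointly, whereas you correctly observe that for an arbitrary convergent sequence $(\mathbf{r}_n)\subset\mathcal{R}$ the associated loads need not converge, and you repair this by extracting a convergent subsequence from the compact box $[\boldsymbol{\rho}_{\text{min}},\mathbf{1}]$ before invoking continuity.
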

\begin{proof}
Recall from Definition \ref{def:boundedness_closedness_compactness}(b) that a subset of a normed space is closed \textit{if and only if} it contains all of its limit points. We denote by $\text{clo}(\mathcal{R})$ the \textit{closure} of $\mathcal{R}$ in Definition \ref{defn:feasible_region_fixed_points}, which is the smallest closed set in $\mathbb{R}_{>0}^{N}$ containing $\mathcal{R}$. Similarly, denote by $\text{clo}(\mathcal{L})$ the closure of $\mathcal{L}$ in Definition \ref{defn:feasible_region_fixed_points}. Consider an arbitrary sequence $(\mathbf{r}_n,\boldsymbol{\rho}_n)_{n\in\mathbb{N}} \subset \mathcal{R}\times\mathcal{L}$, of tuples consisting of feasible rate vectors and the corresponding cell-load vectors. Suppose $(\mathbf{r}_n,\boldsymbol{\rho}_n) \to (\overline{\mathbf{r}},\overline{\boldsymbol{\rho}}) \in \text{clo}(\mathcal{R})~\times~\text{clo}(\mathcal{L})$. From \eqref{eqn:load_mapping} it follows that, given $\mathbf{r}_n$, $\boldsymbol{\rho}_n$ must be the solution to the fixed point problem with the load mapping $\mathbf{q}$. Therefore, we have 
\begin{equation}
(\forall n \in \mathbb{N})~\boldsymbol{\rho}_{\text{min}}\leq \boldsymbol{\rho}_n=\mathbf{q}(\boldsymbol{\rho}_n,\mathbf{r}_n)\leq \mathbf{1}.
\end{equation} 
Now, since $\mathbf{q}$ is continuous, we have
\begin{align*}
\boldsymbol{\rho}_{\text{min}}\leq\lim_{n \in \mathbb{N}}\boldsymbol{\rho}_n&=\lim_{n \in \mathbb{N}}\mathbf{q}(\boldsymbol{\rho}_n,\mathbf{r}_n) \leq \mathbf{1}\\
\boldsymbol{\rho}_{\text{min}}\leq\overline{\boldsymbol{\rho}}&=\mathbf{q}(\overline{\boldsymbol{\rho}},\overline{\mathbf{r}})\leq \mathbf{1}
\end{align*}
which implies that $(\overline{\mathbf{r}},\overline{\boldsymbol{\rho}})\in \mathcal{R}\times\mathcal{L}$. Thus, every convergent sequence in $\mathcal{R}$ has its limit in $\mathcal{R}$ which implies that $\mathcal{R}$ is closed. Now, according to Lemma \ref{lemma:one}, $\mathcal{R}$ is bounded and recall from Remark \ref{rem:compactness} that every bounded and closed subset of a finite dimensional Euclidean space is compact.  
\end{proof}

\section{Robust Learning of Cell-Load} \label{section:the_learning_problem}
 Building upon the results from the previous section we formulate the robust learning of cell-load. Note that the cell-load is modeled by the load-coupling model in \eqref{eqn:load_i}. This means that given the network information required by the model, we can calculate the value of the ``modeled'' cell load. However, as mentioned in Section \ref{sec:why_robust_cell_load_estimation}, dynamic wireless networks are in general difficult to model accurately. Therefore, in the following we present a framework to directly approximate the cell-load values in networks that may not follow the cell-load model accurately. We use the cell-load model in this study only to extract some useful prior knowledge. In addition to the monotonicity of the cell-load and the compactness of the feasible rate region $\mathcal{R}$ established in Theorem \ref{theorem:one}, we show in Theorem \ref{theorem:two} that the function that maps rates to cell-load is continuously differentiable and therefore Lipschitz continuous on $\mathcal{R}$. The Lipschitz continuity is then used to solve our robust optimization problem formulated in the following. 
	
   Let $\mathcal{D}=\{(\mathbf{r}^k,\boldsymbol{\rho}^k:=\mathbf{f}^{\ast}(\mathbf{r}^k)) \in \mathcal{R} \times \mathcal{L}, k \in \overline{1,K}\}$ be a sample set of rates and their corresponding cell-load values, where $\mathbf{f}^{\ast}:\mathcal{R} \rightarrow \mathcal{L}$ is assumed to be a continuous but unknown function, and where $\mathcal{R}$ and $\mathcal{L}$ are defined in Definition \ref{defn:feasible_region_fixed_points}. We denote by $C(\mathcal{R},\mathcal{L})$ the space of vector-valued continuous functions mapping $\mathcal{R}$ to $\mathcal{L}$, equipped with the norm defined in \eqref{eqn:cheb_norm}. Our objective is to learn a function $\mathbf{g}^{\ast}$ that approximates $\mathbf{f}^{\ast}(\mathbf{r})$ for any $\mathbf{r} \in \mathcal{R}$ which is a classical problem considered in, for example, \cite{Weinberger1959,Sukharev1992,traub1980}. As mentioned in Section \ref{sec:contribution} we are interested in a robust approximation of $\mathbf{f}^{\ast}$. To this end, we consider the minimax optimization problem that leads to robust solutions under uncertainties: 
\begin{problem}\label{problem:classical_minimax}\cite{Sukharev1992,Belford1972}		
Given $\mathcal{D}=\{(\mathbf{r}^k,\boldsymbol{\rho}^k) \in \mathcal{R} \times \mathcal{L}, k \in \overline{1,K}\}$, find $\mathbf{g}^{\ast}\in C(\mathcal{R},\mathbb{R}_{\geq 0}^{M})$ such that the worst-case error 
\begin{equation}
\text{E}_{w}(\mathbf{g})=\underset{\mathbf{f}\in C(\mathcal{R},\mathcal{L})}{\sup}\left\|\mathbf{f}-\mathbf{g}\right\|_{C(\mathcal{R})},
\label{eq:error}
\end{equation}
attains its minimum (if it exists) subject to: $(\forall k \in \overline{1,K})$ $\mathbf{g}(\mathbf{r}^{k})=\mathbf{f}(\mathbf{r}^{k})=\boldsymbol{\rho}^{k}$. 
\end{problem}

  It is known that Problem \ref{problem:classical_minimax} can be solved by restricting $\mathbf{f}^{\ast}$ to a compact subset of $C(\mathcal{R},\mathcal{L})$ and by computing finite tight upper and lower bounds on the values $(\forall \mathbf{r} \in \mathcal{R})$ $\mathbf{f}^{\ast}(\mathbf{r})$ \cite{Weinberger1959,Beliakov2006,Beliakov2005}.
If the only information available about $\mathbf{f}^{\ast}$ is that it satisfies the interpolation constraints in Problem \ref{problem:classical_minimax}, then computing tight bounds on unseen function values $\mathbf{f}^{\ast}(\mathbf{r})$ is not possible, no matter how large the sample set $\mathcal{D}$ is. However, if we impose an additional restriction on $\mathbf{f}^\ast$ that satisfies certain properties \cite{Weinberger1959}, then we can obtain tight bounds $\boldsymbol{\sigma}_\text{l}(\mathbf{r})$ and $\boldsymbol{\sigma}_\text{u}(\mathbf{r})$ such that $\boldsymbol{\sigma}_\text{l}(\mathbf{r})\leq \mathbf{f}^\ast(\mathbf{r}) \leq \boldsymbol{\sigma}_\text{u}(\mathbf{r})$, where $\boldsymbol{\sigma}_\text{l}(\mathbf{r})$ and $\boldsymbol{\sigma}_\text{u}(\mathbf{r})$ can be computed explicitly. The optimal approximation $\mathbf{g}^{\ast}(\mathbf{r})$ of $\mathbf{f}^{\ast}(\mathbf{r})$ is simply given by $\mathbf{g}^{\ast}(\mathbf{r})=\frac{\boldsymbol{\sigma}_\text{l}(\mathbf{r})+\boldsymbol{\sigma}_\text{u}(\mathbf{r})}{2}$ and the \textit{magnitude of uncertainty} $\frac{|\boldsymbol{\sigma}_\text{u}(\mathbf{r})-\boldsymbol{\sigma}_\text{l}(\mathbf{r})|}{2}$ is minimal \cite{Calliess2014}. Therefore, no matter how small the sample set $\mathcal{D}$ is we are guaranteed the minimum worst-case error \eqref{eq:error}. It is in this sense that we refer to the learning as being robust (see Section \ref{sec:why_robust_cell_load_estimation}). 
			
	 In \cite{Beliakov2006,Beliakov2005} the analysis is restricted to Lipschitz functions in which case the above mentioned additional restriction results from the Lipschitz continuity. Following this approach, and by considering the cell-load model, we show in Theorem \ref{theorem:two} that $\mathbf{f}^{\ast}$ belongs to the class of $\mathbf{L}$-Lipschitz-Monotone Functions (LIMF) (see Definition \ref{def:lmf}). Moreover, Proposition \ref{proposition:proposition_two} shows that this class is a compact subset of $C(\mathcal{R},\mathcal{L})$. The computation of the bounds $\boldsymbol{\sigma}_\text{1}(\mathbf{r})$ and $\boldsymbol{\sigma}_\text{u}(\mathbf{r})$ is presented in Fact \ref{fact:main}.
			
   In the following we denote by $\widetilde{\mathcal{R}} \subset \mathbb{R}^{N}_{>0}$ the set of all rate vectors (not necessarily feasible/supported) for which there exists a fixed point solution of \eqref{eqn:load_mapping}, i.e., $\widetilde{\mathcal{R}}:=\{\overline{\mathbf{r}}\in\mathbb{R}^{N}_{>0}~|(\exists~\overline{\boldsymbol{\rho}}\in \mathbb{R}^{M}_{>0})~\overline{\boldsymbol{\rho}}=\mathbf{q}(\overline{\boldsymbol{\rho}},\overline{\mathbf{r}})\}$. So we have $\mathcal{R} \subset \widetilde{\mathcal{R}}$.
\begin{theorem}
Consider the load mapping $\mathbf{q}:\mathbb{R}^{M}_{\geq 0}\times\mathbb{R}^{N}_{>0}\to\mathbb{R}^{M}_{>0}$ in \eqref{eqn:load_mapping}.
\begin{itemize}
\item[a).] There exists a continuously differentiable function $\mathbf{f}^{\text{imp}}:\widetilde{\mathcal{R}} \rightarrow \mathbb{R}^{M}_{>0}$ such that $(\forall \overline{\mathbf{r}} \in \widetilde{\mathcal{R}})$ $\mathbf{f}^{\text{imp}}(\overline{\mathbf{r}})=\overline{\boldsymbol{\rho}}=\mathbf{q}(\overline{\boldsymbol{\rho}},\overline{\mathbf{r}})$.
\item[b).] The restriction of $\mathbf{f}^{\text{imp}}$ to the feasible rate region $\mathcal{R} \subset \widetilde{\mathcal{R}}$ is a LIMF function. 
\end{itemize}
\label{theorem:two}
\end{theorem}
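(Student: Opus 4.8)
The plan is to prove (a) by applying the Implicit Function Theorem (Fact~\ref{fact:implicit_function}) to the map $\mathbf{g}(\boldsymbol{\rho},\mathbf{r}):=\boldsymbol{\rho}-\mathbf{q}(\boldsymbol{\rho},\mathbf{r})$, whose zeros are exactly the fixed points of $\Gamma_{\mathbf{r}}$. First I would verify that $\mathbf{q}$ is continuously differentiable on $\mathbb{R}^{M}_{\geq 0}\times\mathbb{R}^{N}_{>0}$: each $\gamma_{ij}$ in \eqref{eqn:SINR} is a ratio of smooth functions with strictly positive denominator (since $\sigma^{2}>0$), and because $\gamma_{ij}>0$ the map $\boldsymbol{\rho}\mapsto r_{j}/\log(1+\gamma_{ij}(\boldsymbol{\rho}))$ is smooth, hence so is $\mathbf{g}$. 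Fixing any $\overline{\mathbf{r}}\in\widetilde{\mathcal{R}}$ with fixed point $\overline{\boldsymbol{\rho}}$, we have $\mathbf{g}(\overline{\boldsymbol{\rho}},\overline{\mathbf{r}})=\mathbf{0}$ and $\boldsymbol{\nabla}^{\mathbf{g}}_{\boldsymbol{\rho}}=\mathbf{I}-\boldsymbol{\nabla}^{\mathbf{q}}_{\boldsymbol{\rho}}$, so the whole statement hinges on showing that $\mathbf{I}-\mathbf{J}$ is invertible at $(\overline{\boldsymbol{\rho}},\overline{\mathbf{r}})$, where $\mathbf{J}:=\boldsymbol{\nabla}^{\mathbf{q}}_{\boldsymbol{\rho}}(\overline{\boldsymbol{\rho}},\overline{\mathbf{r}})$.

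This invertibility is the main obstacle, and I would establish it from the two structural properties of $\Gamma_{\overline{\mathbf{r}}}=\mathbf{q}(\cdot,\overline{\mathbf{r}})$ recorded in Section~\ref{sec:system_model}, namely that it is monotone and concave. Monotonicity gives $\mathbf{J}\geq\mathbf{0}$ entrywise. Concavity together with positivity gives, via the gradient inequality evaluated at $\boldsymbol{\rho}=\mathbf{0}$ (which lies in the domain, with $\mathbf{q}(\mathbf{0},\overline{\mathbf{r}})>\mathbf{0}$), the chain $(\mathbf{J}\overline{\boldsymbol{\rho}})_{i}\leq q_{i}(\overline{\boldsymbol{\rho}},\overline{\mathbf{r}})-q_{i}(\mathbf{0},\overline{\mathbf{r}})<\overline{\rho}_{i}$ for every $i$, i.e. $\mathbf{J}\overline{\boldsymbol{\rho}}<\overline{\boldsymbol{\rho}}$ with $\overline{\boldsymbol{\rho}}>\mathbf{0}$. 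For a nonnegative matrix the Collatz--Wielandt characterization of the spectral radius then forces the spectral radius of $\mathbf{J}$ to be strictly below one, so $1$ is not an eigenvalue of $\mathbf{J}$ and $\mathbf{I}-\mathbf{J}$ is invertible. Fact~\ref{fact:implicit_function} now yields, around each $\overline{\mathbf{r}}$, a unique continuously differentiable solution of $\mathbf{q}(\boldsymbol{\rho},\mathbf{r})=\boldsymbol{\rho}$; because a standard interference mapping has at most one fixed point (Fact~\ref{fact:fixed_point}), these local solutions coincide on overlaps and patch into a single $\mathbf{f}^{\mathrm{imp}}\in C^{1}(\widetilde{\mathcal{R}})$ with $\mathbf{f}^{\mathrm{imp}}(\overline{\mathbf{r}})=\overline{\boldsymbol{\rho}}$, proving (a).

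For (b), monotonicity of the restriction $\mathbf{f}^{\mathrm{imp}}|_{\mathcal{R}}$ is immediate from Fact~\ref{proposition:one}: for feasible $\mathbf{r}^{j}\geq\mathbf{r}^{k}$ the associated fixed points satisfy $\boldsymbol{\rho}^{j}\geq\boldsymbol{\rho}^{k}$, which is precisely Definition~\ref{def:mf}. For the Lipschitz property I would use the explicit Jacobian from Fact~\ref{fact:implicit_function}, $\boldsymbol{\nabla}^{\mathbf{f}}_{\mathbf{r}}=(\mathbf{I}-\boldsymbol{\nabla}^{\mathbf{q}}_{\boldsymbol{\rho}})^{-1}\boldsymbol{\nabla}^{\mathbf{q}}_{\mathbf{r}}$, which is continuous on $\widetilde{\mathcal{R}}$; since $\mathcal{R}$ is compact by Theorem~\ref{theorem:one}, each row $i$ admits a finite bound $L_{i}:=\sup_{\mathbf{r}\in\mathcal{R}}\|\nabla f^{\mathrm{imp}}_{i}(\mathbf{r})\|$. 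Integrating $\nabla f^{\mathrm{imp}}_{i}$ along the segment joining two points of $\mathcal{R}$ and applying Cauchy--Schwarz then gives $|f^{\mathrm{imp}}_{i}(\mathbf{x})-f^{\mathrm{imp}}_{i}(\mathbf{y})|\leq L_{i}\|\mathbf{x}-\mathbf{y}\|$, so $\mathbf{f}^{\mathrm{imp}}|_{\mathcal{R}}$ is $\mathbf{L}$-Lipschitz with $\mathbf{L}=[L_{1},\dots,L_{M}]^{\intercal}$; together with monotonicity this places it in the LIMF class of Definition~\ref{def:lmf}.

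The one place that needs care is this last segment argument, which is valid only if the line segment joining two points of $\mathcal{R}$ remains inside the open set on which $\mathbf{f}^{\mathrm{imp}}$ is $C^{1}$. I would address this by examining the geometry of $\widetilde{\mathcal{R}}$: using monotonicity of $\mathbf{q}$ one checks that $\widetilde{\mathcal{R}}$ is downward closed, and I would try to upgrade this to convexity (or to the existence of a convex compact set sandwiched between $\mathcal{R}$ and $\widetilde{\mathcal{R}}$) so that the mean-value integration is legitimate. This domain/convexity point, rather than the differential calculus, is where I expect the real difficulty to lie; should convexity fail, the fallback is to bound $\mathbf{f}^{\mathrm{imp}}$ via a geodesic/covering estimate on $\mathcal{R}$ and relate it to the Euclidean distance.
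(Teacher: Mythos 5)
Your proposal follows the same architecture as the paper's proof: apply the implicit function theorem (Fact~\ref{fact:implicit_function}) to $\mathbf{g}(\boldsymbol{\rho},\mathbf{r})=\boldsymbol{\rho}-\mathbf{q}(\boldsymbol{\rho},\mathbf{r})$, glue the local implicit functions using uniqueness of the fixed point (Fact~\ref{fact:fixed_point}), get monotonicity from Fact~\ref{proposition:one}, and get the Lipschitz constant from boundedness of the Jacobian on the compact set $\mathcal{R}$. Where you genuinely diverge is the invertibility of $\boldsymbol{\nabla}^{\mathbf{g}}_{\boldsymbol{\rho}}=\mathbf{I}-\mathbf{J}$: the paper computes the entries of $\mathbf{J}$ explicitly (Appendices B--D) and verifies by hand the inequalities showing $(\mathbf{I}-\mathbf{J})\overline{\boldsymbol{\rho}}>\mathbf{0}$ together with the sign pattern needed to invoke the invertible generalized diagonally dominant matrix criterion of \cite{Ren2014,Berman94}; you instead derive $\mathbf{J}\geq\mathbf{0}$ from monotonicity and $\mathbf{J}\overline{\boldsymbol{\rho}}<\overline{\boldsymbol{\rho}}$ from the gradient inequality of the concave map $\Gamma_{\overline{\mathbf{r}}}$ at $\boldsymbol{\rho}=\mathbf{0}$ (using positivity, $q_i(\mathbf{0},\overline{\mathbf{r}})>0$), and conclude $\rho(\mathbf{J})<1$ via Collatz--Wielandt. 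Both arguments are correct and in fact rest on the same inequality $(\mathbf{I}-\mathbf{J})\overline{\boldsymbol{\rho}}>\mathbf{0}$; yours is more conceptual and would apply verbatim to any differentiable positive concave interference mapping, while the paper's buys an explicit, self-contained verification for this particular $\mathbf{q}$.

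On the segment argument in part (b): you are right to flag it, and the paper itself passes over it silently (it asserts that a bounded Jacobian on compact $\mathcal{R}$ implies Lipschitz, which as stated requires the segment to stay in the domain). However, you do not need convexity of $\widetilde{\mathcal{R}}$ or a geodesic estimate. Since $\mathbf{f}^{\text{imp}}$ is $C^{1}$ on an open set containing the compact set $\mathcal{R}$, it is locally Lipschitz there, and a locally Lipschitz function is globally Lipschitz on any compact subset: if $|f_i(\mathbf{x}_n)-f_i(\mathbf{y}_n)|/\|\mathbf{x}_n-\mathbf{y}_n\|\to\infty$ along sequences in $\mathcal{R}$, then boundedness of $f_i$ forces $\|\mathbf{x}_n-\mathbf{y}_n\|\to 0$, so after passing to subsequences both sequences converge to a common point, in a neighborhood of which $f_i$ is Lipschitz --- a contradiction. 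This closes the only step you left open, so your proof goes through without resolving the convexity question.
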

\begin{proof}
\begin{itemize}
\item[a).] From the uniqueness of the fixed point solution of \eqref{eqn:load_mapping} it follows that, for two solution pairs $(\overline{\boldsymbol{\rho}_1},\overline{\mathbf{r}}_1)$ and $(\overline{\boldsymbol{\rho}_2},\overline{\mathbf{r}}_2)$, if $\overline{\boldsymbol{\rho}}_1\neq\overline{\boldsymbol{\rho}}_2$, then we must have $\overline{\mathbf{r}}_1\neq\overline{\mathbf{r}}_2$. Thus, there exists a function $\mathbf{f}^{\text{imp}}:\widetilde{\mathcal{R}} \rightarrow \mathbb{R}^{M}_{>0}:\overline{\mathbf{r}} \mapsto \mathbf{f}^{\text{imp}}(\overline{\mathbf{r}})=\mathbf{q}(\mathbf{f}^{\text{imp}}(\overline{\mathbf{r}}),\overline{\mathbf{r}})$ that maps every feasible rate vector to a unique fixed point. We now show that $\mathbf{f}^{\text{imp}}$ is continuously differentiable on $\widetilde{\mathcal{R}}$. 

  Consider the function $\mathbf{g}: \mathbb{R}^{N}_{>0} \times \mathbb{R}^{M}_{>0} \rightarrow \mathbb{R}^{M}$ defined as $\mathbf{g}(\mathbf{r},\boldsymbol{\rho}):=\boldsymbol{\rho}- \mathbf{q}(\boldsymbol{\rho},\mathbf{r})$, where $\mathbf{q}$ is the load mapping in \eqref{eqn:load_mapping}, and note that $(\forall\overline{\mathbf{r}}\in \widetilde{\mathcal{R}})$ $(\overline{\boldsymbol{\rho}}=\mathbf{f}^{\text{imp}}(\overline{\mathbf{r}}))$ $\mathbf{g}(\overline{\mathbf{r}},\overline{\boldsymbol{\rho}})=\mathbf{0}$. We now show that $\mathbf{g}$ is continuously differentiable, and the Jacobian matrix $\boldsymbol{\nabla}^{\mathbf{g}}_{\boldsymbol{\rho}}(\overline{\mathbf{r}},\overline{\boldsymbol{\rho}})$ is non-singular (invertible), on $\widetilde{\mathcal{R}}\times\mathbb{R}^{M}_{>0}$ (see Fact \ref{fact:implicit_function}). 
	To show that $\mathbf{g}$ is continuously differentiable, we show that the Jacobians $\boldsymbol{\nabla}^{\mathbf{g}}_{\mathbf{r}}$ and $\boldsymbol{\nabla}^{\mathbf{g}}_{\boldsymbol{\rho}}$ are continuous. The two Jacobians are given in Appendix \ref{sec:app_one} and Appendix \ref{sec:app_two}, respectively, and it can be verified that they are continuous. The invertibility of the $M\times M$ matrix $\boldsymbol{\nabla}^{\mathbf{g}}_{\boldsymbol{\rho}}(\overline{\mathbf{r}},\overline{\boldsymbol{\rho}})$ is shown in Appendix \ref{sec:app_three}. 
Therefore, according to Fact \ref{fact:implicit_function}, $\mathbf{f}^{\text{imp}}$ is continuously differentiable.	
\item[b).] According to part (a) and Fact \ref{fact:implicit_function}, the Jacobian $\boldsymbol{\nabla}^{\mathbf{f}{\text{imp}}}_{\mathbf{r}}$ is continuous on $\widetilde{\mathcal{R}}$. Denote by $\mathbf{f}:\mathcal{R} \rightarrow \mathcal{L}$ and $\boldsymbol{\nabla}^{\mathbf{f}}_{\mathbf{r}}$, the restriction of $\mathbf{f}^{\text{imp}}$ and $\boldsymbol{\nabla}^{\mathbf{f}{\text{imp}}}_{\mathbf{r}}$, respectively, to the set of feasible rate vectors $\mathcal{R}\subset \widetilde{\mathcal{R}}$. Since $\mathcal{R}$ is compact according to Theorem \ref{theorem:one}, $\boldsymbol{\nabla}^{\mathbf{f}}_{\mathbf{r}}$ is bounded on $\mathcal{R}$ according to  the \textit{extreme value theorem} \cite{Munkres2000} which implies that $\exists \mathbf{L} \in \mathbb{R}_{\geq 0}^{M}$ such that $\mathbf{f}$ is $\mathbf{L}$-Lipschitz on $\mathcal{R}$. Moreover, by Fact \ref{proposition:one}, $\mathbf{f}$ is monotonic on $\mathcal{R}$, so $\mathbf{f}$ is a LIMF function (see Definition \ref{def:lmf}). 
\end{itemize} 
\end{proof}

   In the following we denote by $\mathcal{F} \subset C(\mathcal{R},\mathcal{L})$ the class of LIMF functions $\mathbf{f}:\mathcal{R} \rightarrow \mathcal{L}$ with a given $\mathbf{L} \in \mathbb{R}_{\geq 0}^{M}$ (see Definition~\ref{def:lmf}). Before we proceed further, we obtain the following important result whose proof is shown in Appendix~\ref{sec:proof_of_proposition_two}.
\begin{proposition}\label{proposition:proposition_two} 
The class $\mathcal{F} \subset C(\mathcal{R},\mathcal{L})$ of LIMF functions, with a given $\mathbf{L}=[L_1,L_2,\cdots,L_M]^{\intercal} \in \mathbb{R}_{\geq 0}^{M}$, is compact.
\end{proposition}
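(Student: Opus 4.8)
The plan is to verify the three hypotheses of the Arzel\'{a}--Ascoli characterisation of compactness in Fact~\ref{fact:arzela_ascoli}(b). Since $\mathcal{R}$ is compact by Theorem~\ref{theorem:one}, the hypothesis ``$\mathcal{X}$ compact'' of that Fact is met with $\mathcal{X}=\mathcal{R}$, so it suffices to show that $\mathcal{F}$ is bounded, equicontinuous, and closed as a subset of $C(\mathcal{R},\mathcal{L})$.

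First I would dispose of boundedness and equicontinuity, both of which are essentially immediate. Every $\mathbf{f}\in\mathcal{F}$ maps into $\mathcal{L}$, and by Definition~\ref{defn:feasible_region_fixed_points} every $\boldsymbol{\rho}\in\mathcal{L}$ satisfies $\boldsymbol{\rho}_{\text{min}}\leq\boldsymbol{\rho}\leq\mathbf{1}$; hence $0\leq f_i(\mathbf{r})\leq 1$ for all $i$ and all $\mathbf{r}\in\mathcal{R}$, so $\|\mathbf{f}\|_{C(\mathcal{R})}=\sup_{\mathbf{r}}\max_i f_i(\mathbf{r})\leq 1$ uniformly over $\mathcal{F}$, giving boundedness. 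For equicontinuity, I would note that every member of $\mathcal{F}$ is $\mathbf{L}$-Lipschitz with the same fixed $\mathbf{L}$, so $\mathcal{F}$ is contained in the set of $\mathbf{L}$-Lipschitz functions, which is (uniformly) equicontinuous by Remark~\ref{rem:one}; a subset of an equicontinuous set is equicontinuous.

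The substantive step is closedness. I would take an arbitrary sequence $(\mathbf{f}_n)_{n\in\mathbb{N}}\subset\mathcal{F}$ converging in the uniform norm to some limit $\mathbf{f}$, and show $\mathbf{f}\in\mathcal{F}$, i.e.\ that $\mathbf{f}$ is both monotonic and $\mathbf{L}$-Lipschitz. The key observation is that uniform convergence implies pointwise convergence, and both defining inequalities survive pointwise limits because they are non-strict. Concretely, for any $\mathbf{x}\leq\mathbf{y}$ in $\mathcal{R}$ one has $\mathbf{f}_n(\mathbf{x})\leq\mathbf{f}_n(\mathbf{y})$ for every $n$, and letting $n\to\infty$ preserves the component-wise inequality, yielding monotonicity (Definition~\ref{def:mf}). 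Likewise, for each $i$ and all $\mathbf{x},\mathbf{y}\in\mathcal{R}$, passing to the limit in $|f_{n,i}(\mathbf{x})-f_{n,i}(\mathbf{y})|\leq L_i\|\mathbf{x}-\mathbf{y}\|$ using continuity of $|\cdot|$ gives $|f_i(\mathbf{x})-f_i(\mathbf{y})|\leq L_i\|\mathbf{x}-\mathbf{y}\|$, so $\mathbf{f}$ is $\mathbf{L}$-Lipschitz (Definition~\ref{def:lf}).

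The one point requiring care is that the limit $\mathbf{f}$ actually takes values in $\mathcal{L}$, so that it genuinely lies in $C(\mathcal{R},\mathcal{L})$: this holds because each $\mathbf{f}(\mathbf{r})$ is a limit of points $\mathbf{f}_n(\mathbf{r})\in\mathcal{L}$, and $\mathcal{L}$ contains its limit points by the same argument already carried out for $\mathcal{L}$ in the proof of Theorem~\ref{theorem:one}. With boundedness, equicontinuity, and closedness in hand, Fact~\ref{fact:arzela_ascoli}(b) delivers compactness of $\mathcal{F}$. I expect the preservation of monotonicity and of the Lipschitz bound under the limit to be the main obstacle---though it is routine once one is careful to pass through pointwise convergence and to use only the non-strictness of the inequalities.
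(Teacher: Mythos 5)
Your proposal is correct and follows essentially the same route as the paper: verify boundedness (via $\|\mathbf{f}\|_{C(\mathcal{R})}\leq 1$), equicontinuity (via Remark~\ref{rem:one}), and closedness, then invoke Fact~\ref{fact:arzela_ascoli}(b). The only cosmetic difference is in the closedness step, where you pass directly to pointwise limits in the Lipschitz inequality instead of the paper's $\epsilon/2$ triangle-inequality argument (and you treat monotonicity and Lipschitz continuity in one sequence argument rather than as an intersection of two closed sets); both are valid and yield the same conclusion.
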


\subsection{Minimax Optimal Approximation}\label{sec:minimax_optimal_approximation}
 We are now in a position to incorporate the prior information obtained in previous sections into Problem \ref{problem:classical_minimax}. Moreover, we formally state the robust learning problem considered in this study as an optimization problem.
\begin{definition}[\textit{Minimax Optimal Approximation}]\label{def:optimal_approximation} 
\textit{Let $\mathcal{D}=\{(\mathbf{r}^k,\boldsymbol{\rho}^k) \in \mathcal{R} \times \mathcal{L}\}_{k=1}^{K}$ be a sample set and assume that $(\forall k\in\overline{1,K})$ $\boldsymbol{\rho}^k:=\mathbf{f}^{\ast}(\mathbf{x}^k)$ are values generated by an unknown function $(\mathcal{F}\ni)$ $\mathbf{f}^{\ast}:\mathcal{R}\rightarrow \mathcal{L}$, where $\mathcal{F}\subset C(\mathcal{R},\mathcal{L})$ is a set of LIMF functions with a given $\mathbf{L} \in \mathbb{R}_{\geq 0}^{M}$. The minimax optimal approximation problem can be then stated as follows:
\begin{problem}\label{problem:main}\cite{Sukharev1992,Beliakov2005,Belford1972}
Find $\mathbf{g}^{\ast}$ such that
\begin{equation}
\mathbf{g}^{\ast}\in\underset{\mathbf{g}\in S}{\arg\min}\,\text{E}_{\text{max}}(\mathbf{g})
\end{equation}
where $S:=\{\mathbf{g}\in C(\mathcal{R},\mathbb{R}_{>0}^{M})\,|\,(\forall k\in\overline{1,K})\,\mathbf{g}(\mathbf{r}^k)=\boldsymbol{\rho}^k\}$, and $\text{E}_{\text{max}}(\mathbf{g}):=\max_{\mathbf{f}\in \mathcal{F}}\|\mathbf{f}-\mathbf{g}\|_{C(\mathcal{R})}$ is the worst-case error \eqref{eq:error} computed over the set $\mathcal{F}$.
\end{problem}}
\end{definition}

   The study \cite{Beliakov2005} proposes a framework for interpolation of scalar Lipschitz functions defined over a compact set by using a \textit{central algorithm} \cite{Sukharev1992,traub1980}. This framework can be used to obtain a solution to Problem \ref{problem:main}. Furthermore, this method is also ``shape preserving'', i.e., the approximation preserves the Lipschitz continuity and monotonicity of the underlying original function. The following fact summarizes the important properties of an optimal solution obtained based on this framework. 
\begin{fact}\cite{Beliakov2005}
Let $\mathcal{D}=\{(\mathbf{r}^k,\boldsymbol{\rho}^{k}) \in \mathcal{R} \times \mathcal{L}\}_{k=1}^{K}$ be a dataset generated by an unknown function $\mathbf{f}^{\ast} \in \mathcal{F}$, where $\mathcal{F}$ is the set of LIMF functions with the same $\mathbf{L}:=[L_1,L_2,\cdots,L_M]^{\intercal} \in \mathbb{R}_{\geq 0}^{M}$. Then, the following holds: 
\begin{itemize}
\item[a).] A minimax optimal approximation $\mathbf{g}^{\ast}$ of $\mathbf{f}^{\ast}\in\mathcal{F}$ can be constructed component-wise by 
\begin{equation}
 (\forall i \in \overline{1,M})\,(\forall \mathbf{r} \in \mathcal{R}) \,\, g^{\ast}_{i}(\mathbf{r})=\frac{\sigma^{i}_{l}(\mathbf{r})+ \sigma^{i}_{u}(\mathbf{r})}{2},
\label{eqn:opt_inter}
\end{equation}
where $\sigma^{i}_\text{l}(\mathbf{r})=\max_k\{\rho_{i}^{k}-L_{i}\|(\mathbf{r}^{k}-\mathbf{r})_{+}\|\}$, $\sigma^{i}_\text{u}(\mathbf{r})=\min_k\{\rho_{i}^{k}+L_{i}\|(\mathbf{r}-\mathbf{r}^{k})_{+}\|\}$, and $L_{i} \in \mathbb{R}_{\geq 0}$ is the Lipschitz constant of the $i$th component $f^{\ast}_{i}$ of $\mathbf{f}^{\ast}$. 
\item[b).] \label{fact_two_three} The approximation preserves the $\mathbf{L}$-Lipschitz continuity and monotonicity, i.e., $\mathbf{g}^{\ast}$ is $\mathbf{L}$-Lipschitz and monotonic.
\item[c).] $\mathbf{g}^{\ast}$ interpolates the sample set $\mathcal{D}$. 
\end{itemize}
\label{fact:main}
\end{fact}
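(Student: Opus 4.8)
The plan is to prove all three parts at once by exhibiting the vector functions $\boldsymbol{\sigma}_{\text{l}}=(\sigma^1_{\text{l}},\dots,\sigma^M_{\text{l}})$ and $\boldsymbol{\sigma}_{\text{u}}=(\sigma^1_{\text{u}},\dots,\sigma^M_{\text{u}})$ as the tightest pointwise lower and upper envelopes of the data-consistent subclass of $\mathcal{F}$, and then invoking a pointwise (central-algorithm) argument. First I would derive the envelope bounds from the two structural constraints available on any LIMF $\mathbf{f}$ that interpolates $\mathcal{D}$. Fix a component $i$, a target $\mathbf{r}\in\mathcal{R}$, and a sample $\mathbf{r}^k$, and introduce the meet $\mathbf{r}^k\wedge\mathbf{r}$ and join $\mathbf{r}^k\vee\mathbf{r}$ (component-wise min and max). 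The key algebraic identities are $\mathbf{r}^k-(\mathbf{r}^k\wedge\mathbf{r})=(\mathbf{r}^k-\mathbf{r})_+$ and $(\mathbf{r}^k\vee\mathbf{r})-\mathbf{r}^k=(\mathbf{r}-\mathbf{r}^k)_+$. Applying monotonicity (Definition~\ref{def:mf}) across $\mathbf{r}^k\wedge\mathbf{r}\le\mathbf{r}$ together with $\mathbf{L}$-Lipschitzness (Definition~\ref{def:lf}) across $\mathbf{r}^k\wedge\mathbf{r}\le\mathbf{r}^k$ yields $f_i(\mathbf{r})\ge f_i(\mathbf{r}^k\wedge\mathbf{r})\ge \rho^k_i-L_i\|(\mathbf{r}^k-\mathbf{r})_+\|$; the symmetric argument through the join gives $f_i(\mathbf{r})\le \rho^k_i+L_i\|(\mathbf{r}-\mathbf{r}^k)_+\|$. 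Maximizing the first and minimizing the second over $k$ produces exactly $\sigma^i_{\text{l}}(\mathbf{r})\le f_i(\mathbf{r})\le\sigma^i_{\text{u}}(\mathbf{r})$.

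Next I would show the envelopes are attained, which is what makes the bounds tight and simultaneously delivers parts (b) and (c). Each $\sigma^i_{\text{l}}$ is a finite maximum of maps $\mathbf{r}\mapsto\rho^k_i-L_i\|(\mathbf{r}^k-\mathbf{r})_+\|$; since $\mathbf{r}\mapsto\|(\mathbf{r}^k-\mathbf{r})_+\|$ is $1$-Lipschitz and non-increasing in $\mathbf{r}$, each such map is $L_i$-Lipschitz and non-decreasing, and these properties survive the finite $\max$; symmetrically $\sigma^i_{\text{u}}$ is $L_i$-Lipschitz and non-decreasing as a $\min$ of $L_i$-Lipschitz non-decreasing maps. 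Hence $\boldsymbol{\sigma}_{\text{l}},\boldsymbol{\sigma}_{\text{u}}$ are themselves LIMF, and their average $\mathbf{g}^\ast$ in \eqref{eqn:opt_inter} inherits $\mathbf{L}$-Lipschitzness and monotonicity, establishing (b). For interpolation, evaluating at a sample $\mathbf{r}^j$ the $k=j$ terms give $\sigma^i_{\text{l}}(\mathbf{r}^j)\ge\rho^j_i$ and $\sigma^i_{\text{u}}(\mathbf{r}^j)\le\rho^j_i$, while the envelope inequality applied to $\mathbf{f}^\ast$ forces $\sigma^i_{\text{l}}(\mathbf{r}^j)\le\rho^j_i\le\sigma^i_{\text{u}}(\mathbf{r}^j)$; thus both collapse to $\rho^j_i$ and $g^\ast_i(\mathbf{r}^j)=\rho^j_i$, proving (c).

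For optimality (a), I would reduce the sup-norm minimax to a pointwise problem. With the admissible competitors $S$ and the data-consistent adversaries taken to be the LIMF functions interpolating $\mathcal{D}$, I exchange the order of the two suprema in $\text{E}_{\text{max}}(\mathbf{g})=\sup_{\mathbf{f}}\sup_{\mathbf{r},i}|f_i(\mathbf{r})-g_i(\mathbf{r})|$ (legitimate since both sides equal the supremum over the joint index set) to obtain $\text{E}_{\text{max}}(\mathbf{g})=\sup_{\mathbf{r},i}\sup_{\mathbf{f}}|f_i(\mathbf{r})-g_i(\mathbf{r})|$. Because $\boldsymbol{\sigma}_{\text{l}}$ and $\boldsymbol{\sigma}_{\text{u}}$ are attained by admissible adversaries and every consistent $f_i(\mathbf{r})$ lies in $[\sigma^i_{\text{l}}(\mathbf{r}),\sigma^i_{\text{u}}(\mathbf{r})]$, the inner supremum equals $\max\{\,g_i(\mathbf{r})-\sigma^i_{\text{l}}(\mathbf{r}),\,\sigma^i_{\text{u}}(\mathbf{r})-g_i(\mathbf{r})\,\}$, which, as a function of the single scalar $g_i(\mathbf{r})$, is minimized at the midpoint with value $\tfrac12\big(\sigma^i_{\text{u}}(\mathbf{r})-\sigma^i_{\text{l}}(\mathbf{r})\big)$. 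Since the midpoint choice is realized \emph{simultaneously} for all $(\mathbf{r},i)$ by the single function $\mathbf{g}^\ast$ of \eqref{eqn:opt_inter}, which is continuous (a finite $\max$/$\min$ of continuous maps) and lies in $S$ by (c), it attains the pointwise lower bound at every $(\mathbf{r},i)$ and hence minimizes the supremum, giving $\mathbf{g}^\ast\in\arg\min_{\mathbf{g}\in S}\text{E}_{\text{max}}(\mathbf{g})$.

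The main obstacle I anticipate is the tightness (attainability) step, not the envelope derivation: one must verify that $\boldsymbol{\sigma}_{\text{l}}$ and $\boldsymbol{\sigma}_{\text{u}}$ are genuine members of the adversary class so that the inner supremum is an equality and not merely an upper bound, since otherwise the midpoint would minimize only a majorant of the true error. This is exactly where the monotonicity-plus-Lipschitz envelope must be shown closed under the finite $\max$/$\min$ and compatible with the codomain constraints. The interpolation collapse $\sigma^i_{\text{l}}(\mathbf{r}^j)=\sigma^i_{\text{u}}(\mathbf{r}^j)=\rho^j_i$ together with the existence of a true $\mathbf{f}^\ast\in\mathcal{F}$ also guarantees that $\boldsymbol{\sigma}_{\text{l}}\le\boldsymbol{\sigma}_{\text{u}}$ is non-crossing, so that the midpoint is well defined and the reduction is well posed.
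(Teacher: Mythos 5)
The paper never proves this statement: it is imported as a Fact from \cite{Beliakov2005}, so your attempt can only be measured against the standard argument behind that citation. Your overall strategy is exactly that standard route (two-sided envelopes forced by monotonicity plus $\mathbf{L}$-Lipschitz continuity, attainability of the envelopes, then the pointwise midpoint/central-algorithm reduction), and several pieces are done correctly and carefully: the closure of the envelope class under finite $\max$/$\min$ (part (b)), the interpolation collapse at sample points (part (c)), the exchange of suprema, and the scalar midpoint optimization. You also correctly resolve an ambiguity in the paper's Problem~2 by taking the adversaries to be the \emph{data-consistent} LIMF functions, which is the reading under which the formula makes sense.

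There is, however, one step that fails as written. Your upper bound evaluates $f_i$ at the join $\mathbf{r}^k \vee \mathbf{r}$, but $\mathbf{f}$ is only defined on $\mathcal{R}$, and the feasible rate region of Definition~5 is \emph{not} closed under component-wise maximum: with a single base station and two users of equal SINR, the rate vectors $(a,b)$ and $(b,a)$ can both sit on the feasibility boundary $c(r_1+r_2)=1$ while their join $(a,a)$ violates it whenever $a>b$. So ``the symmetric argument through the join'' invokes $f_i$ at a point where it need not exist, and the same objection applies to your attainability step if it leans on the join. The repair is cheap and worth stating: both bounds follow from the meet alone, via $f_i(\mathbf{r}) - f_i(\mathbf{r}^k) = \bigl(f_i(\mathbf{r}) - f_i(\mathbf{r}\wedge\mathbf{r}^k)\bigr) + \bigl(f_i(\mathbf{r}\wedge\mathbf{r}^k) - f_i(\mathbf{r}^k)\bigr) \leq L_i\|(\mathbf{r}-\mathbf{r}^k)_{+}\| + 0$, using Lipschitz continuity on the first difference (since $\mathbf{r} - \mathbf{r}\wedge\mathbf{r}^k = (\mathbf{r}-\mathbf{r}^k)_{+}$) and monotonicity on the second. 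Even then, you should justify that the meet lies in $\mathcal{R}$: this holds because $\mathcal{R}$ is downward closed above $\mathbf{r}_{\text{min}}$ (for $\mathbf{r}'\leq\mathbf{r}$ one has $\Gamma_{\mathbf{r}'}(\boldsymbol{\rho}^{\ast})\leq\Gamma_{\mathbf{r}}(\boldsymbol{\rho}^{\ast})=\boldsymbol{\rho}^{\ast}$, and feasibility of a standard interference mapping together with Facts~2 and~3 yields a fixed point in $[\boldsymbol{\rho}_{\text{min}},\mathbf{1}]$), a point your proposal never checks. Finally, the codomain wrinkle you flagged is real but unresolvable as stated: tightness needs $\boldsymbol{\sigma}_{\text{l}},\boldsymbol{\sigma}_{\text{u}}$ to be admissible adversaries, which fails if adversaries must map into the fixed-point set $\mathcal{L}$; the Fact is genuinely a statement about Lipschitz-monotone functions with values in a box or interval, and the paper inherits this same imprecision from the citation, so it should not be counted against you beyond requiring an explicit relaxation of the codomain.
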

\subsection{Complexity}\label{sec:complexity}
The complexity of the closed-form computation \eqref{eqn:opt_inter} is linear in the sample size $K$, i.e., the complexity is $O(K)$. Since we consider very small sample sizes, the complexity is not of a practical concern. 
Moreover, \eqref{eqn:opt_inter} can be computed independently for each base station. Therefore, the complexity is independent of the number of base stations $M$.

\begin{remark}[Prior Knowledge Decreases Uncertainty] \label{rem:three}
Note that the study \cite{Beliakov2005} is concerned with shape preserving approximation and it does not consider learning from a small sample set. However, we show in Proposition \ref{proposition:proposition_three} that (except for one particular case) excluding prior information regarding monotonicity worsens at least one of the bounds in Fact \ref{fact:main}(a) during generalization on unseen data and this therefore increases uncertainty and error. We also evaluate this fact empirically in Section~\ref{sec:effect_of_prior_information} in a realistic wireless network. 
\end{remark}

   The lower and upper bounds without monotonicity constraints in Fact~\ref{fact:main} are given by $(i \in \overline{1,M})$ $\eta^{i}_\text{l}(\mathbf{r})=\max_k\{\rho_{i}^{k}-L_{i}\|\mathbf{r}^{k}-\mathbf{r}\|\}$ and  $\eta^{i}_\text{u}(\mathbf{r})=\min_k\{\rho_{i}^{k}+L_{i}\|\mathbf{r}-\mathbf{r}^{k}\|\}$. Let $\text{U}_{\text{mon}}(\mathbf{r}):=\frac{|\sigma^{i}_\text{u}(\mathbf{r})-\sigma^{i}_\text{l}(\mathbf{r})|}{2}$ denote the magnitude of uncertainty calculated from the bounds in Fact~\ref{fact:main}, and let $\text{U}(\mathbf{r}):=\frac{|\eta^{i}_\text{u}(\mathbf{r})-\eta^{i}_\text{l}(\mathbf{r})|}{2}$ denote the magnitude of uncertainty without monotonicity in the framework. 
\begin{proposition}
Let $\mathbf{r} \notin \mathcal{D}=\{(\mathbf{r}^k,\boldsymbol{\rho}^{k}) \in \mathcal{R} \times \mathcal{L}\}_{k=1}^{K}$, where $\mathcal{D}$ is the data set in Fact \ref{fact:main}. Then $\text{U}_{\text{mon}}(\mathbf{r}) \leq \text{U}(\mathbf{r})$ if 
\begin{itemize}
\item[a).] ($k^{\ast} \in argmax_k\{\rho_{i}^{k}-L_{i}\|(\mathbf{r}^{k}-\mathbf{r})\|\})$~$\mathbf{r}^{k^{\ast}}\geq \mathbf{r}$, and  
\item[b).] ($j^{\ast} \in argmin_j\{\rho_{i}^{j}+L_{i}\|(\mathbf{r}-\mathbf{r}^{j})\|\}$)~$\mathbf{r}^{j^{\ast}}\leq \mathbf{r}$;
 \end{itemize}
otherwise $\text{U}_{\text{mon}}(\mathbf{r}) < \text{U}(\mathbf{r})$.
\label{proposition:proposition_three}
\end{proposition}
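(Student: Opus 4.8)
The plan is to reduce the whole statement to one elementary inequality about the Euclidean norm of a positive part and then track exactly when that inequality is tight. First I would record the key fact that for any $\mathbf{v}\in\mathbb{R}^{N}$ one has $\|(\mathbf{v})_{+}\|\leq\|\mathbf{v}\|$, with equality if and only if $\mathbf{v}\geq\mathbf{0}$; this follows from the componentwise identity $\|\mathbf{v}\|^{2}=\|(\mathbf{v})_{+}\|^{2}+\|(-\mathbf{v})_{+}\|^{2}$, whose second summand vanishes exactly when $\mathbf{v}\geq\mathbf{0}$. Applying this with $\mathbf{v}=\mathbf{r}^{k}-\mathbf{r}$ gives $\|(\mathbf{r}^{k}-\mathbf{r})_{+}\|\leq\|\mathbf{r}^{k}-\mathbf{r}\|$ (equality iff $\mathbf{r}^{k}\geq\mathbf{r}$), and with $\mathbf{v}=\mathbf{r}-\mathbf{r}^{k}$ gives $\|(\mathbf{r}-\mathbf{r}^{k})_{+}\|\leq\|\mathbf{r}^{k}-\mathbf{r}\|$ (equality iff $\mathbf{r}^{k}\leq\mathbf{r}$). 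Comparing the summands that define the bounds termwise and using monotonicity of $\max$ and $\min$, these two inequalities yield $\sigma^{i}_{l}(\mathbf{r})\geq\eta^{i}_{l}(\mathbf{r})$ and $\sigma^{i}_{u}(\mathbf{r})\leq\eta^{i}_{u}(\mathbf{r})$ for every $\mathbf{r}$.

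Next I would establish the always-valid inequality $\text{U}_{\text{mon}}(\mathbf{r})\leq\text{U}(\mathbf{r})$. Since $\sigma^{i}_{l},\sigma^{i}_{u}$ (respectively $\eta^{i}_{l},\eta^{i}_{u}$) are, by the construction in Fact~\ref{fact:main}, lower and upper bounds on the common value $f^{\ast}_{i}(\mathbf{r})$, they are ordered, so the absolute values in the definitions disappear and $\text{U}_{\text{mon}}(\mathbf{r})=\frac{1}{2}(\sigma^{i}_{u}-\sigma^{i}_{l})$, $\text{U}(\mathbf{r})=\frac{1}{2}(\eta^{i}_{u}-\eta^{i}_{l})$. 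The previous step then gives
\[
(\eta^{i}_{u}-\eta^{i}_{l})-(\sigma^{i}_{u}-\sigma^{i}_{l})=(\eta^{i}_{u}-\sigma^{i}_{u})+(\sigma^{i}_{l}-\eta^{i}_{l})\geq 0,
\]
a sum of two nonnegative terms, which proves $\text{U}_{\text{mon}}(\mathbf{r})\leq\text{U}(\mathbf{r})$ and in particular the asserted conclusion whenever (a) and (b) hold.

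For the strict part I would argue by contraposition: assume $\text{U}_{\text{mon}}(\mathbf{r})=\text{U}(\mathbf{r})$ and deduce both (a) and (b). Because the displayed gap is a sum of nonnegatives, equality forces $\sigma^{i}_{l}(\mathbf{r})=\eta^{i}_{l}(\mathbf{r})$ and $\sigma^{i}_{u}(\mathbf{r})=\eta^{i}_{u}(\mathbf{r})$ separately. Now fix any maximizer $k^{\ast}$ of $\eta^{i}_{l}$; the sandwich
\[
\eta^{i}_{l}=\rho_{i}^{k^{\ast}}-L_{i}\|\mathbf{r}^{k^{\ast}}-\mathbf{r}\|\leq \rho_{i}^{k^{\ast}}-L_{i}\|(\mathbf{r}^{k^{\ast}}-\mathbf{r})_{+}\|\leq\sigma^{i}_{l}=\eta^{i}_{l}
\]
collapses to equalities, so (assuming $L_{i}>0$) $\|(\mathbf{r}^{k^{\ast}}-\mathbf{r})_{+}\|=\|\mathbf{r}^{k^{\ast}}-\mathbf{r}\|$, which by the equality case of the norm fact gives $\mathbf{r}^{k^{\ast}}\geq\mathbf{r}$, i.e.\ (a). The same sandwich applied to a minimizer $j^{\ast}$ of $\eta^{i}_{u}$ yields $\|(\mathbf{r}-\mathbf{r}^{j^{\ast}})_{+}\|=\|\mathbf{r}-\mathbf{r}^{j^{\ast}}\|$ and hence $\mathbf{r}^{j^{\ast}}\leq\mathbf{r}$, i.e.\ (b). Taking the contrapositive, if (a) or (b) fails then the inequality is strict.

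I expect the main obstacle to be bookkeeping in this last step rather than any deep idea: one must notice that equality of the two uncertainties \emph{decouples} into equality of each individual bound (which is precisely why the nonnegativity of both gap terms above is essential), and then run the sandwich with the correct argmax and argmin so that the sign condition on $\mathbf{r}^{k^{\ast}}-\mathbf{r}$ (resp.\ $\mathbf{r}-\mathbf{r}^{j^{\ast}}$) is read off as the equality case of $\|(\cdot)_{+}\|\leq\|\cdot\|$. I would also flag two minor points for the write-up: the hypothesis $\mathbf{r}\notin\mathcal{D}$ (with $L_{i}>0$) rules out the degenerate situation $\text{U}(\mathbf{r})=0$ in which strictness would be impossible; and since (a)--(b) only certify $\leq$ rather than equality, one need not claim (and in general cannot claim) the converse that (a)--(b) imply $\text{U}_{\text{mon}}(\mathbf{r})=\text{U}(\mathbf{r})$.
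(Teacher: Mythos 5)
Your proof is correct and follows essentially the same route as the paper's: both rest on the inequality $\|(\mathbf{v})_{+}\|\leq\|\mathbf{v}\|$ with equality exactly when $\mathbf{v}\geq\mathbf{0}$, applied termwise to conclude $\eta^{i}_{l}(\mathbf{r})\leq\sigma^{i}_{l}(\mathbf{r})$ and $\sigma^{i}_{u}(\mathbf{r})\leq\eta^{i}_{u}(\mathbf{r})$. The only difference is cosmetic --- you obtain strictness by contraposition (equality of the two uncertainties decouples into equality of each individual bound, forcing the sign conditions), whereas the paper argues directly that violating (a) or (b) makes the corresponding bound comparison strict --- and your explicit flagging of the $L_{i}>0$ requirement is a point the paper leaves implicit.
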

\begin{proof}
Consider two vectors $\mathbf{x}, \mathbf{y} \in \mathbb{R}_{\geq 0}^{N}$ such that $\mathbf{x}\neq\mathbf{y}$. If $\mathbf{x} \geq \mathbf{y}$, then $\|(\mathbf{x}-\mathbf{y})_{+}\|=\|(\mathbf{x}-\mathbf{y})\|$ and $\|(\mathbf{y}-\mathbf{x})_{+}\| < \|(\mathbf{x}-\mathbf{y})\|$. Similarly, if $\mathbf{x} \leq \mathbf{y}$, then $\|(\mathbf{x}-\mathbf{y})_{+}\|<\|(\mathbf{x}-\mathbf{y})\|$ and $\|(\mathbf{y}-\mathbf{x})_{+}\| = \|(\mathbf{x}-\mathbf{y})\|$. If $\mathbf{x}$ and $\mathbf{y}$ are incomparable then $\|(\mathbf{y}-\mathbf{x})_{+}\| < \|(\mathbf{x}-\mathbf{y})\|$ and also $\|(\mathbf{x}-\mathbf{y})_{+}\| < \|(\mathbf{x}-\mathbf{y})\|$. 

   Now, if conditions a) and b) are satisfied simultaneously, then (by condition a)) for the lower bound we have 
	\begin{align*}
	\eta^{i}_\text{l}(\mathbf{r})&=\{\rho_{i}^{k^{\ast}}-L_{i}\|(\mathbf{r}^{k^{\ast}}-\mathbf{r})\|\}\nonumber\\
	                             &=\{\rho_{i}^{k^{\ast}}-L_{i}\|(\mathbf{r}^{k^{\ast}}-\mathbf{r})_{+}\|\} \nonumber\\
															 &\leq \max_k\{\rho_{i}^{k}-L_{i}\|(\mathbf{r}^{k}-\mathbf{r})_{+}\|\}=\sigma^{i}_\text{l}(\mathbf{r}). \nonumber
	\end{align*}
	Similarly, (by condition b)) $\sigma^{i}_\text{u}(\mathbf{r})\leq\eta^{i}_\text{u}(\mathbf{r})$. This proves the first claim of the proposition. Now suppose condition a) is violated, i.e., either $\mathbf{r}^{k^{\ast}}\leq \mathbf{r}$ or $\mathbf{r}^{k^{\ast}}$ and $\mathbf{r}$ are incomparable, then from the above discussion 
\begin{align*}
	\eta^{i}_\text{l}(\mathbf{r})&=\{\rho_{i}^{k^{\ast}}-L_{i}\|(\mathbf{r}^{k^{\ast}}-\mathbf{r})\|\}\nonumber\\
	                             &<\{\rho_{i}^{k^{\ast}}-L_{i}\|(\mathbf{r}^{k^{\ast}}-\mathbf{r})_{+}\|\} \nonumber\\
															 &\leq \max_k\{\rho_{i}^{k}-L_{i}\|(\mathbf{r}^{k}-\mathbf{r})_{+}\|\}=\sigma^{i}_\text{l}(\mathbf{r}). \nonumber
\end{align*}
Similarly, if condition b) is violated, $\sigma^{i}_\text{u}(\mathbf{r})<\eta^{i}_\text{u}(\mathbf{r})$ and the second claim follows.  
\end{proof}

 The consequence of Proposition \ref{proposition:proposition_three} is that $\text{U}_{\text{mon}}(\mathbf{r}) < \text{U}(\mathbf{r})$ whenever $\mathbf{r}$ violates either of the two conditions in Proposition \ref{proposition:proposition_three}. Therefore, including prior knowledge in our framework regarding monotonicity provably improves generalization/prediction on unseen data.

\section{Implementation in a Wireless Network}\label{sec:cell_load_approximation}
We have shown in Theorem~\ref{theorem:two} that there exists an implicit function $(\forall i \in \overline{1,M})$ $f_i:\mathcal{R}\rightarrow \,]0,1]$ mapping every $\mathbf{r} \in \mathcal{R}$ to a cell-load value $\rho_i$ at base station $i$. Furthermore, Fact~\ref{fact:main} shows that given a sample set $\mathcal{D}(i)=\{(\mathbf{r}^k,{f}_{i}(\mathbf{r}^k))\}_{k=1}^{K}$ at base station $i$ and the knowledge of the Lipschitz constant $L_i$, we can easily approximate the cell-load value ${f}_{i}(\mathbf{r})$ for $\mathbf{r} \notin \mathcal{D}(i)$. In this section we show how to implement our framework in an OFDMA-based wireless cellular network. To this end, we first look at how to calculate the cell-load, and then we show how to obtain an appropriate sample set at a base station.

\subsection{Cell-load Calculation}\label{sec:cell_load_calculation} 
In OFDMA-based networks, such as LTE networks, time is divided into fixed length slots known as \textit{subframes}.  
During a subframe, if a base station is active, it transmits to one or more users on a block of frequencies in its cell. Therefore, users are allocated subframes in time and bandwidth in frequency to match their rate requirements. A subframe together with its bandwidth is commonly referred to as a \textit{physical resource block}. To calculate the cell-load, we record the fraction of the total available physical resource blocks allocated by a base station on average during a total time period of $T_{\text{avg}}>0$, where $T_{\text{avg}}$ is a design parameter.

\subsection{Obtaining a Sample Set}\label{sec:sample_set}
We denote by $T_\text{net}>0$ the network coherence time during which the environment (network topology, channels, rate distribution, etc.) is assumed to be constant (see Section~\ref{sec:why_robust_cell_load_estimation}). Let $T_\text{obv}<T_\text{net}$ denote the sample observation time. We divide $T_\text{obv}$ in $K\in\mathbb{N}$ time windows of duration $T_\text{avg}$ each as shown in Figure~\ref{fig:figure_two}. To obtain a sample set $\mathcal{D}(i)=\{(\mathbf{r}^k,\rho_{i}^{k}={f}_{i}(\mathbf{r}^k))\}_{k=1}^{K}$ at each base station $i \in \overline{1,M}$, the cell-load values $\rho_{i}^{k}={f}_{i}(\mathbf{r}^k)$ can be calculated as in Section~\ref{sec:cell_load_calculation} for each time window $k \in \overline{1,K}$. The base stations can exchange the rate values of users associated with them with other base stations to obtain the rate vectors $\mathbf{r}^k$. 
\begin{figure}[h]
  \centering
 \includegraphics[width=0.4\textwidth]{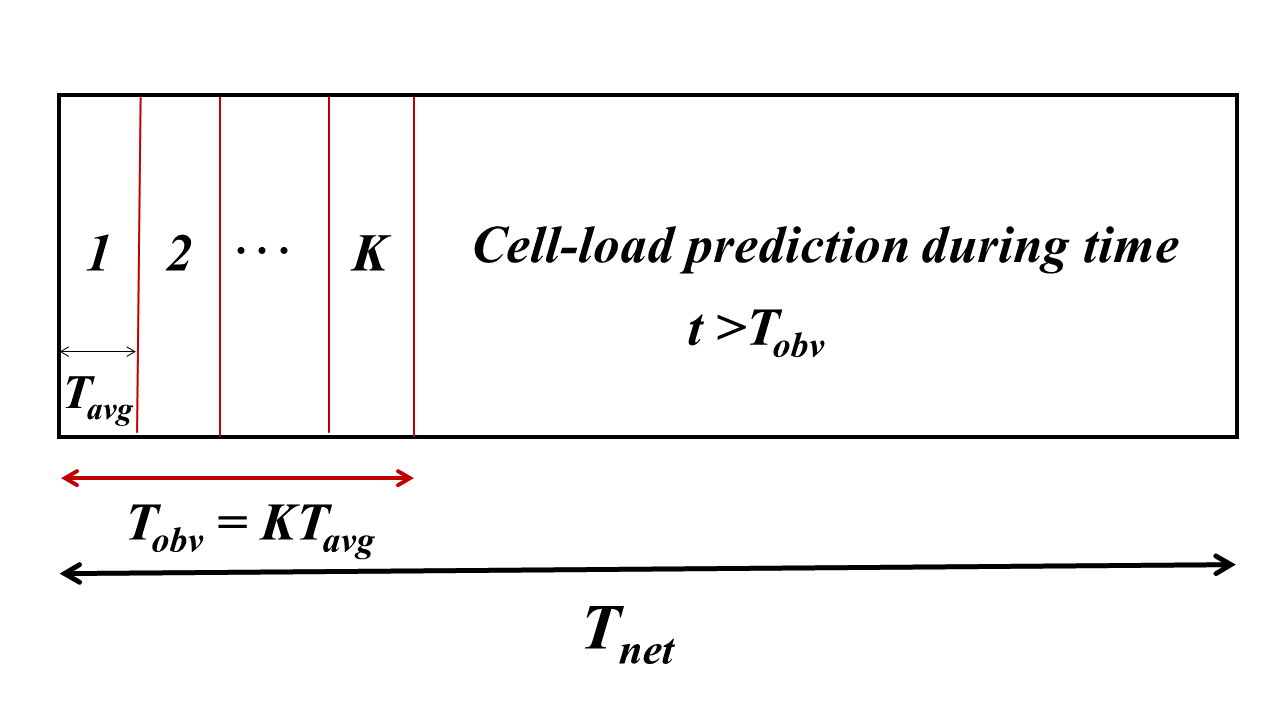}
 \caption{Learning Timeline: During each slot $k\in\overline{1,K}$ of length $T_{\text{avg}}$ we obtain a sample $(\mathbf{r}^k,\rho^{k}_i)$ by observing the proportion of resource blocks consumed to support rate $\mathbf{r}^k$ on average during $T_{\text{avg}}$.}
\label{fig:figure_two}
\end{figure}

  In the following, we assume that a sample set $\mathcal{D}(i)=\{(\mathbf{r}^k,\rho_{i}^{k}=f_{i}(\mathbf{r}^k))\}_{k=1}^{K}$, is available at time $t=T_\text{obv}$ at base station $i \in \overline{1,M}$. We also omit the index $i$ since the same procedure is carried out at each base station.  
\begin{algorithm}[t]
\small
\begin{algorithmic}
\caption{Cell-load Learning for each Base Station}
	\vspace{0.1cm}
 \State $\rightarrow$ \textbf{Initialization}  
 \begin{itemize} 
          \item Fix $K>0$ and $T_{\text{avg}}>0$.	
        \end{itemize}
		\vspace{0.1cm}		
\State $\rightarrow$ \textbf{Sample Acquisition}  (\textit{while} $t < T_\text{obv}$) 
				 \begin{itemize}
				   
          \item Exchange user rate with other base stations. 
       \end{itemize}
			\begin{itemize}
          \item Observe the sample set $\mathcal{D}^{\text{noise}}=\{(\mathbf{r}^k,y^{k}=f(\mathbf{r}^{k})+\epsilon(\mathbf{r}^k))\}_{k=1}^{K}$ (Section~\ref{sec:sample_set}).
       \end{itemize}
				\vspace{0.1cm}
\State $\rightarrow$ \textbf{Training} (\textit{at} $t = T_\text{obv}$) 
       \begin{itemize}
           \item Perform the estimation of $L$ (Section \ref{sec:smooth_measurements}).
					\end{itemize}
					\begin{itemize}
					\item  Perform data smoothing to obtain a compatible $\mathcal{D}^{\text{com}}=\{(\mathbf{r}^k,\tilde{\rho}^{k})\}_{k=1}^{K}$ (Section~\ref{sec:smooth_measurements}).  
       \end{itemize}
		\vspace{0.1cm}		
\State $\rightarrow$ \textbf{On-Demand Prediction} (\textit{at} $t > T_\text{obv}$) 
       \begin{itemize} 
          \item Given a new rate vector $\mathbf{r}\in\mathcal{R}$, perform the computation \eqref{eqn:opt_inter} in Fact~\ref{fact:main}
       \end{itemize}
       \begin{align}
			 g(\mathbf{r})\,=\,&\frac{1}{2}(\underset{k}{\max}\{\tilde{\rho}^{k}-L\|(\mathbf{r}^{k}-\mathbf{r})_{+} \| \})
			+\,\frac{1}{2}(\underset{k}{\min} \{\tilde{\rho}^{k}+L \|(\mathbf{r}-\mathbf{r}^{k})_{+}\| \}).\nonumber
          %
         \end{align}
					\label{algorithm:one}
\end{algorithmic}
\end{algorithm}

\subsection{Obtaining a Compatible Sample Set}\label{sec:smooth_measurements}
Note that the cell-load values calculated in a real network do not follow the cell-load model exactly.  
In more detail, instead of the sample set $\mathcal{D}=\{(\mathbf{r}^k,\rho^{k}=f(\mathbf{r}^{k}))\}_{k=1}^{K}$, we assume that an inaccurate sample set $\mathcal{D}^{\text{error}}=\{(\mathbf{r}^k,y^{k}=f(\mathbf{r}^{k})+\epsilon(\mathbf{r}^k))\}_{k=1}^{K}$ is available; $\epsilon(\mathbf{r}^k)\geq 0$ is the inaccuracy/error which is assumed to be bounded.\footnote{Our approximation framework is a special case of bounded error estimation/robust set-membership estimation \cite{Milanese1985,Milanese1991} which was developed for scenarios where the inaccuracy is unknown but bounded.} As a consequence, for a given value of the Lipschitz constant $L\in \mathbb{R}_{\geq 0}$, $\mathcal{D}^{\text{error}}$ may not be compatible with the monotonicity of $f$. Therefore, and if required, it must be smoothed to obtain a compatible set. Furthermore, in practice the prior information about the Lipschitz constant $L$ is often unavailable, so its value must be estimated from the set $\mathcal{D}^{\text{error}}$. In more detail, we first estimate the Lipschitz constant by $\tilde{L}:=\max_{k\neq j} \frac{|y^{k}-y^{j}|}{\|\mathbf{r}^{k}-\mathbf{r}^{j}\|}$ \cite{Strongin1973}.\footnote{There exist more sophisticated methods of estimating the Lipschitz constant such as the method proposed in \cite{Beliakov2005}. But these methods are not the focus of this study and they add substantial complexity to the algorithm.}
Given an estimate $\tilde{L}$ of the Lipschitz constant, we perform monotone-smoothing of $\mathcal{D}^{\text{error}}$. The details are provided in Appendix~\ref{sec:monotone_smoothing_problem}.

\subsection{Algorithm}\label{sec:algorithm}
The robust cell-load learning algorithm is presented in Algorithm \ref{algorithm:one}. The \textit{Sample Acquisition} step corresponds to the acquisition of the training sample set as explained in Section~\ref{sec:sample_set}, whereas \textit{Training} refers to Lipschitz constant estimation and the data smoothing process as presented in Appendix~\ref{sec:monotone_smoothing_problem}. The \textit{On-Demand Prediction} refers to the approximation of the cell-load value for a new rate vector during time period $T_{\text{net}}-T_{\text{obv}}$ (also see Figure~\ref{fig:figure_two}).

\section{Numerical Evaluation}\label{sec:numerical_evaluation}
In this section we evaluate the robust learning framework presented in Section \ref{sec:minimax_optimal_approximation} by simulation. To evaluate the learning techniques in a realistic cellular network, simulations are performed in the \textit{network simulator} (NS3) \cite{NS3}. 
We focus on the following aspects in this numerical evaluation:
\begin{enumerate}
\item We only use the load-coupling model (see Section \ref{load_coupling_model}) in this study to establish some prior knowledge about the cell-load in a real cellular network. We show in the simulations that our learning framework is able to predict the cell-load sufficiently accurately in a realistic cellular network in NS3. This is significant because models are only idealizations, and they may not capture the true behavior of cellular networks.    
\item We have shown in Proposition~\ref{proposition:proposition_three} that including prior knowledge decreases the uncertainty. We demonstrate this by comparing our learning framework with full prior knowledge with the case in which the prior information regarding the monotonicity of the cell-load with respect to rate is not included in the framework. 
\item Finally, we compare our method to standard multivariate regression techniques. 
We show the effect of sample size $K$ and the size of the network (i.e., the number of users $N$ and base stations $M$) on the quality of approximation.
\end{enumerate}

In the next section we present the LTE simulation framework in NS3.

\subsection{Network Simulator (NS3) and Scenario}\label{sec:network_simulator_and_scenario}
   We perform simulation in NS3 using the LTE model, the details of which can be found in \cite{NS3}. The load coupling model is evaluated in the LTE downlink in certain scenarios in \cite{Shen2015T}. Briefly, NS3 is a well-known discrete-event network simulator widely used in educational research and industry due to its accuracy in simulating computer networks such as LTE. The granularity of the LTE model in NS3 is up to the resource block level which allows for accurate packet scheduling and calculation of inter-cell interference. We chose the \textit{Round Robin} scheduler at the MAC layer. The reason is that the fairness inherent in the simple cyclic scheduling is more likely to ensure that the minimum data rate requirement of all users are met, which may not be the case with other more complex scheduling algorithms \cite{Dahlman2014}. The \textit{modulation and coding scheme} and the resource block allocation are chosen based on the wide-band \textit{channel quality indicator} (CQI). The CQI is calculated based on the average received SINR. 
   Users and base stations are distributed uniformly in the service area of $200\times 200$ meters. We perform simulations for $M=\{3, 5, 7, 9, 10\}$ base stations with $N=\{30,50,70,80, 90,100\}$ users. Users are associated with the base station to which they have the lowest path-loss. To generate training and test data, the data rates are distributed uniformaly between $0.1\times 10^{6}$ bits/s and $1\times 10^{6}$ bits/s. The important simulation parameters are shown in Table~\ref{table:two}. Other parameters were chosen as default in NS3. The simulation time was chosen to be $1$ second which is equal to the length $T_{\text{avg}}$ of each averaging time slot/window in Figure~\ref{fig:figure_two} and Algorithm~\ref{algorithm:one}. The cell-load values are calculated according to Section~\ref{sec:cell_load_calculation}.
\begin{table}
\caption{NS3 Simulation Parameters}
\centering
  \begin{tabular}{| l | l |}
    \hline
		 \textbf{Description}                   &  \textbf{Value}     \\
		 \hline
		 Number of base stations $M$            &  $3$                  \\                  
		 Number of users $N$                    &  $30$                  \\
		 Base station height                    &  $30$ m \\
		 User height                            &  $1.5$ m \\
     Noise figure base station              &  $5$ dB\\
		 Noise figure user                         &  $9$ dB \\
		 Min/Max user rate &  $0.1\times10^{6}/1\times10^{6}$ \\
		 Simulation area                        &  $200 \times 200$ m    \\ 
		 Simulation time                        &  $1$ s                           \\
		 Total bandwidth                        &  $10$ MHz                           \\
		 Total number of resource blocks        &  $50$  \\
		 Path-loss model                        &  Log-Distance Propagation Loss\\
		 SRS periodicity                        &  $80\times10^{-3}$ s \\
		 Internet application                   &  On-Off with Ipv4 \\ 
		\hline
 \end{tabular}
\label{table:two}
\end{table}

\subsection{Results}
We now present our numerical results. We use Algorithm $1$ to perform the robust learning of cell-load proposed in this study. We present the results for cell-load learning at a single base station. To obtain reliable statistics we consider $50$ topologies (with different user locations, base station locations, and user associations) for each value of $N$ and we let $M=N/10$. Note that scaling the number of base stations with an increase in the number of users is necessary to ensure that rate requirements of users are met. The objective of the simulation is to observe the effect of sample size and the network size on the approximation. 
For each fixed topology, we perform $100$ experiments for each value of $K \in \{10, 20,\ldots, 100\}$. During each experiment, a sample set $\mathcal{D}^{\text{error}}=\{(\mathbf{r}^k,y^k)\}_{k=1}^{K}$ is generated independently at random and the \textit{Training Step} is performed in Algorithm~\ref{algorithm:one} to obtain a compatible training sample set $\mathcal{D}^{\text{com}}$. Validation/prediction is performed for an independent test sample set of size $1000$ with rate vectors $\mathbf{r} \notin \mathcal{D}^{\text{com}}$. All results are averaged over $100$ experiments and then over $50$ topologies to obtain reliable statistics. 

\subsubsection{Effect of Prior Information}\label{sec:effect_of_prior_information}
In this section we compare our framework's performance with and without the prior information regarding the monotonicity of the cell-road with respect to rate (see Remark~\ref{rem:three}). For this simulation we consider $M=3$ and $N=30$. Note that the objective of this rather theoretical comparison is to confirm the result of Proposition~\ref{proposition:proposition_three} in a realistic simulation.  This comparison is performed with an ideal Lipschitz constant $L^{\text{ideal}}$ that can be obtained by using the method in Section~\ref{sec:smooth_measurements} but by using both the training sample set and the test sample set. This way $L^{\text{ideal}}$ is a good approximation of the true Lipschitz constant. We chose an ideal Lipschitz constant because in this section we want to focus only on the effect of including prior knowledge regarding monotonicity of the cell-load in rate in a realistic cellular network, and this requires an accurate calculation of function bounds in Section~\ref{section:the_learning_problem}. However, the comparison with \textit{state-of-art} techniques in Section~\ref{sec:comparison_with_state_of_art_techniques}, which is of a more practical significance, is performed with the Lipschitz constant that is estimated from only the training data set. 
\begin{figure}[h]
  \centering
 \includegraphics[width=0.5\textwidth]{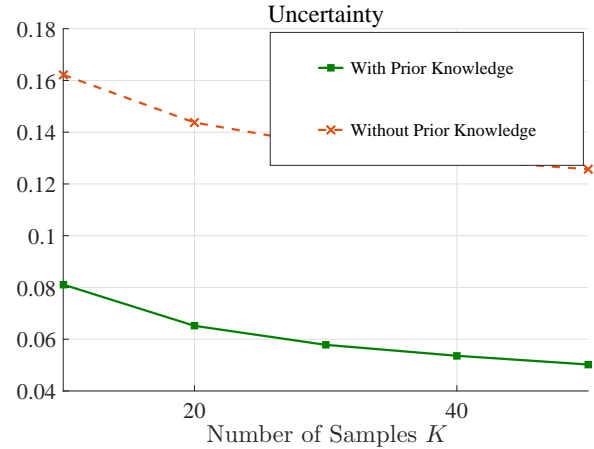}
 \caption{We compare the performance of our framework with the case where prior knowledge about the monotonicity of the cell-load has not been considered.}
\label{fig:figure_uncertainty}
\end{figure}
\begin{figure}[h]
  \centering
 \includegraphics[width=0.5\textwidth]{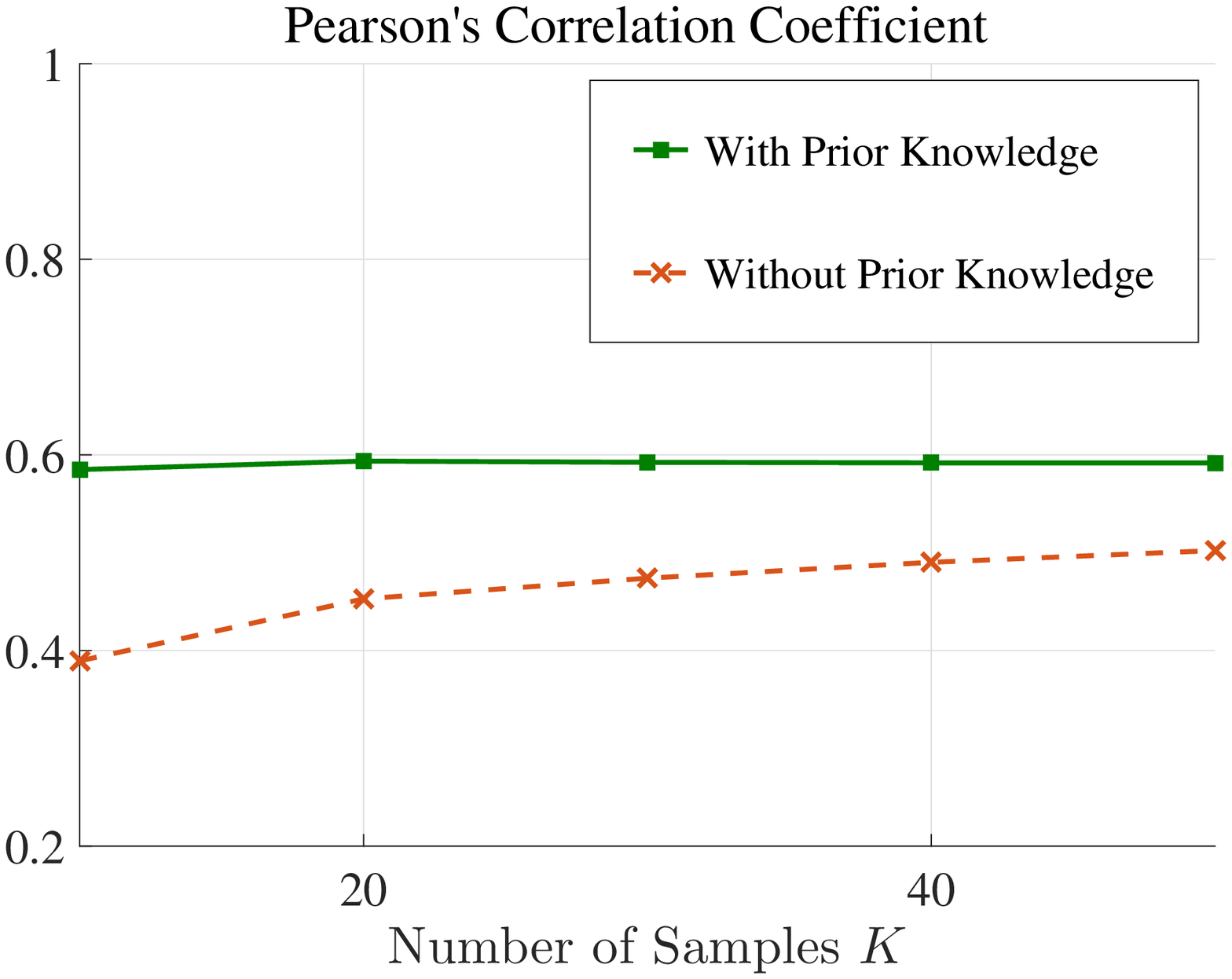}
 \caption{We compare the performance of LIMF learning framework with the case where prior knowledge about the monotonicity of the cell-load has not been considered.}
\label{fig:figure_corr_noprior}
\end{figure}
\begin{figure}[h]
  \centering
 \includegraphics[width=0.5\textwidth]{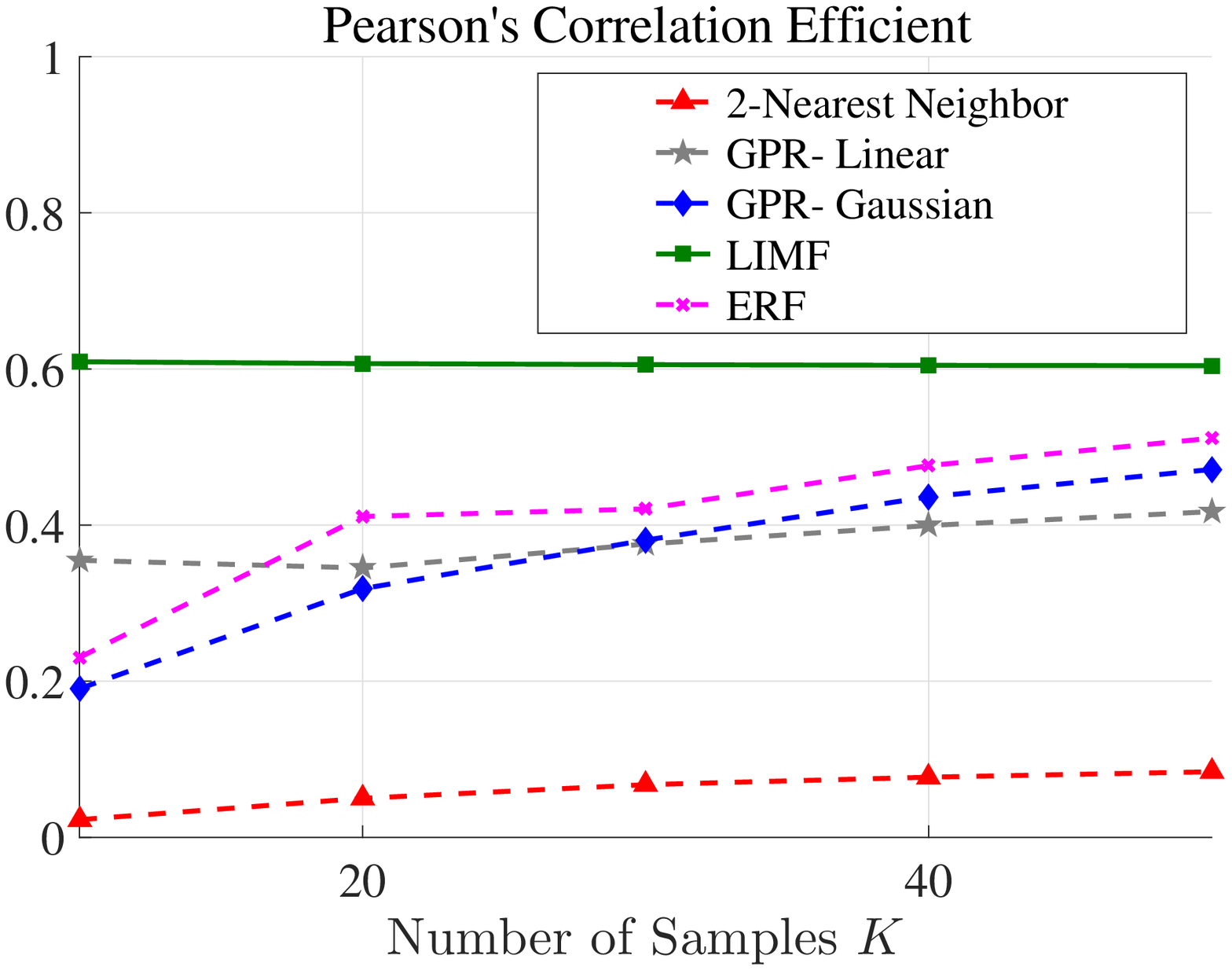}
 \caption{We compare the the $5$ techniques in terms of the linear correlation between predictions and true values for increasing $K$.}
\label{fig:figure_corr}
\end{figure}
\begin{figure}[h]
  \centering
 \includegraphics[width=0.5\textwidth]{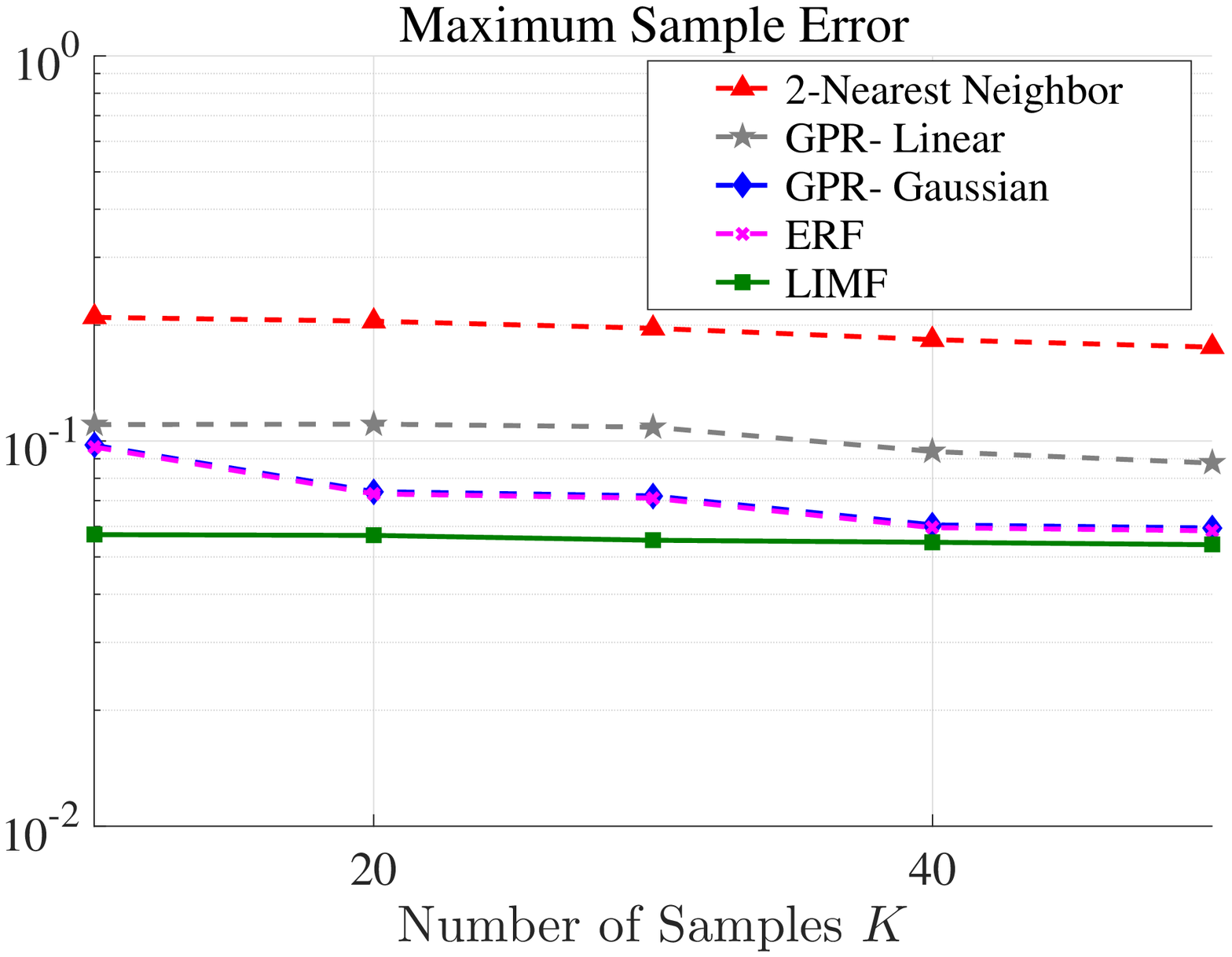}
 \caption{We compare the the $5$ techniques in terms of the maximum error between predictions and true values for increasing $K$.}
\label{fig:figure_quantile}
\end{figure}
\begin{figure}[h]
  \centering
 \includegraphics[width=0.5\textwidth]{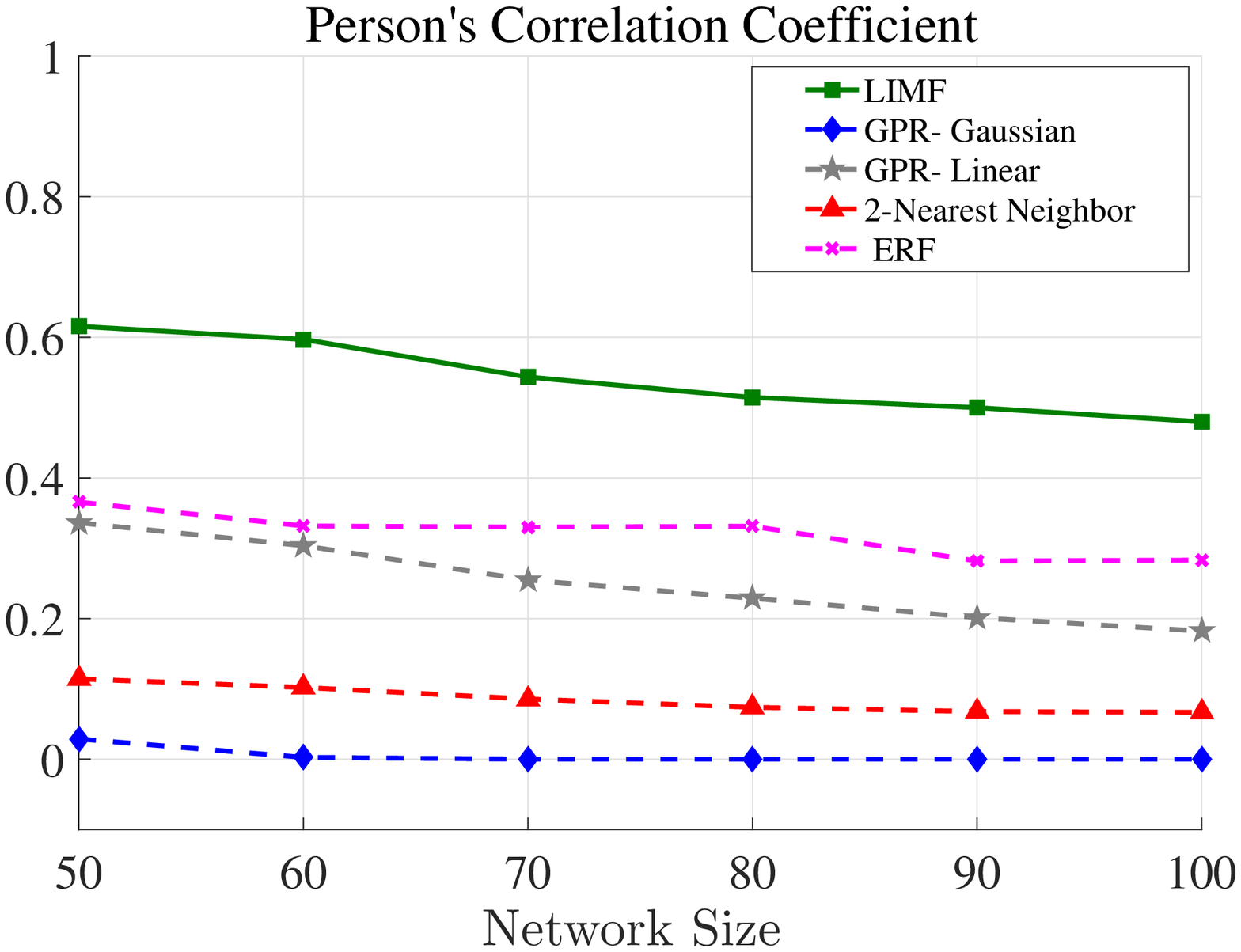}
 \caption{We compare the the $5$ techniques in terms of the linear correlation between predictions and true values for increasing network size.}
\label{fig:figure_corr_users}
\end{figure}
\begin{figure}[h]
  \centering
 \includegraphics[width=0.5\textwidth]{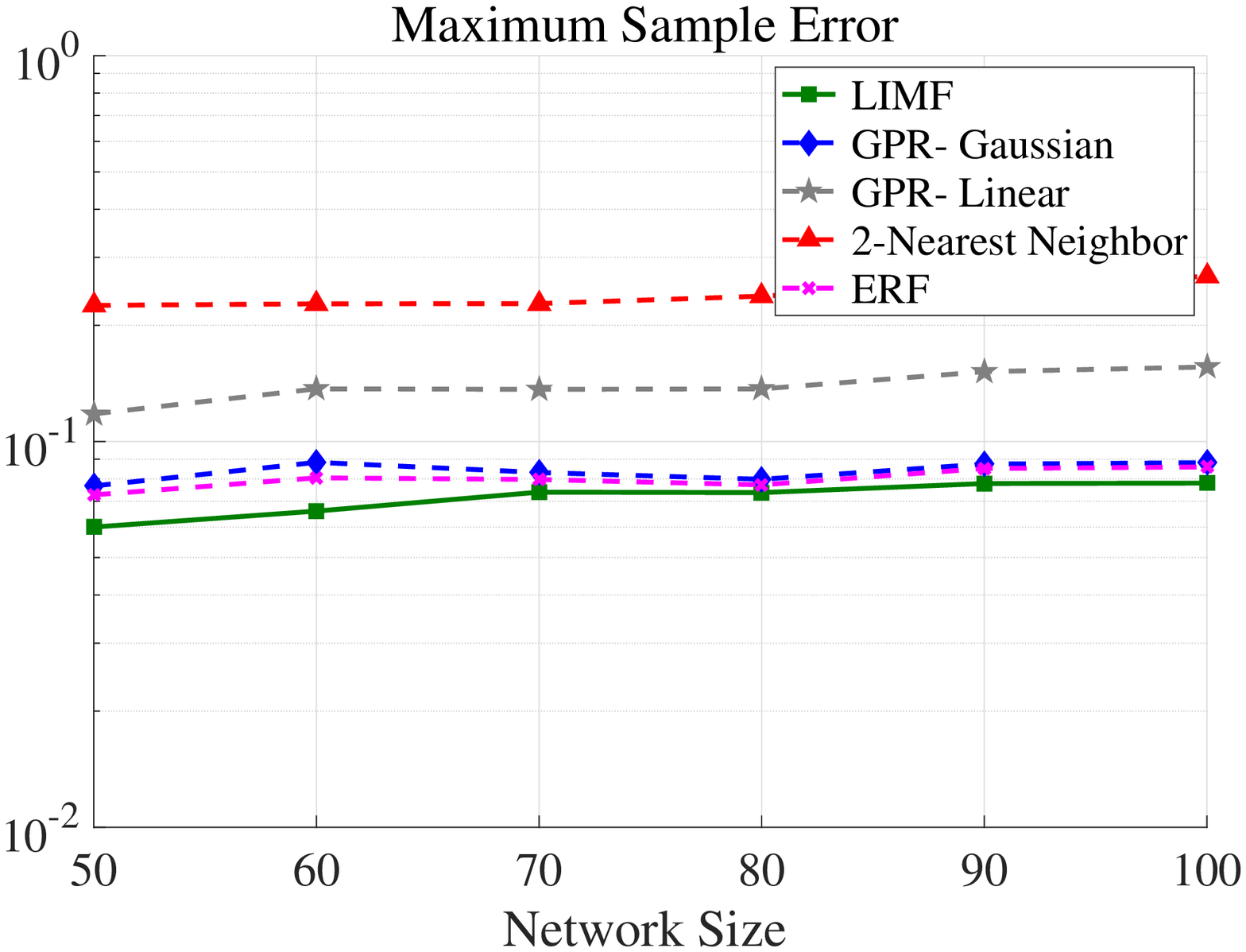}
 \caption{We compare the the $5$ techniques in terms of the maximum error between predictions and true values for increasing network size.}
\label{fig:figure_quantile_users}
\end{figure}

   We perform the comparison in terms of two metrics, namely the \textit{magnitude of uncertainty} given as $\frac{|\sigma_\text{u}(\mathbf{r})-\sigma_\text{l}(\mathbf{r})|}{2}$ (see Section~\ref{section:the_learning_problem}), where the rate $\mathbf{r}$ is a test sample point and $\sigma_\text{u}(\mathbf{r})$ and $\sigma_\text{l}(\mathbf{r})$ are upper and lower bounds, and the \textit{correlation} with test sample set that we measure in terms of the popular \textit{Pearson's correlation coefficient}. 

    The results are shown in Figure \ref{fig:figure_uncertainty} and Figure \ref{fig:figure_corr_noprior}. Figure~\ref{fig:figure_uncertainty} shows that uncertainty about the cell-load values decreases with the increasing training sample set size $K$ in both cases. However, we observe that the prior information regarding the monotonicity always results in less uncertainty than the case where monotonicity of the cell-load is ignored. 
		The results are therefore of a theoretical significance and they justify the inclusion of monotonicity as part of the prior knowledge in the framework (see Remark~\ref{rem:three}). The same effect is seen in Figure~\ref{fig:figure_corr_noprior} where we can clearly see that the case with all prior information included in the framework results in more correlation with the test sample set. 
%

\subsubsection{Comparison with State-of-Art Techniques}\label{sec:comparison_with_state_of_art_techniques}
In this section we compare our learning framework with some low-complexity state-of-art techniques for various training sample and network sizes. Throughout this section, we estimate $L$ from the available training sample set. We compare our method with four multivariate techniques, namely the state-of-art methods \textit{Gaussian process regression} (GPR) and \textit{ensemble learning with random forests} (ERF), and the simple \textit{2-nearest neighbor interpolation}. The GPR technique is well-known for its universal approximation of continuous functions defined over compact sets.
Note that, in addition to the state-of-art methods, it is important to compare the performance with a simple method such as the \textit{2-nearest neighbor interpolation} to highlight the difficulty of learning with small sample sets. We stress again that we consider very small sizes.  

  Figure~\ref{fig:figure_corr} shows a comparison of (linear) \textit{Pearson's} correlation coefficient, which is a popular measure of the strength and direction of the linear relationship between the predicted and the real test values, for an increasing sample size and fixed number of users $N=30$. In particular, we use this coefficient as a measure of the ``quality" of approximation. A high positive value of \textit{Pearson's} correlation coefficient means that the predictions made by the learning method have a strong linear relationship with the test sample set. 
Figure~\ref{fig:figure_quantile} shows the maximum or worst-case error encountered while predicting on the test sample set for an increasing sample size $K$ and fixed number of users $N=30$. The maximum error is more suitable for comparing the robustness of the approximation techniques than some other popular error metrics because it shows that all error residuals remain below this level. Therefore, the maximum error is a reasonable substitute for the maximum error of approximation in \eqref{eq:error} which we cannot compute directly. 
 
  It is important to analyze maximum error and correlation together to better understand the comparison between our learning framework and other techniques. We observe that even for an inexact value of Lipschitz constant $L$, our method outperforms other techniques. An interesting observation is the fact that the GPR method (with the Gaussian function) and ERF show a relatively good error performance in Figure~\ref{fig:figure_quantile} but a considerably smaller correlation in Figure~\ref{fig:figure_corr} than our method for small sampze sizes $K<30$. This is because of the fact that our method incorporates prior knowledge about the cell-load and other methods do not. The poorest performance is seen in the case of the \textit{2-nearest neighbor interpolation} whose performance improves slowly with increasing sample size. Clearly, this shows that we do not have enough samples to perform such a simple interpolation. 
		
		Finally, Figure~\ref{fig:figure_corr_users} and Figure~\ref{fig:figure_quantile_users} show the effect of network size (in terms of number of users $N$) on the performance of all techniques for a small sample size of $K=20$. We see that, as expected, there is a gradual degradation of performance for all techniques. In particular, we observe in Figure~\ref{fig:figure_corr_users} that the GPR with Gaussian function performs poorly due to insufficient training. 
		\begin{table}
\caption{Training Time Comparison on standard PC}
\centering
  \begin{tabular}{| l | l |}
    \hline
		 \textbf{Technique}                   &  \textbf{Average Training Time}     \\
		 \hline
         LIMF              & $10\times 10^{-3}$ seconds\\
				 Nearest Neighbor  & not applicable\\
				 GPR               & $80\times 10^{-3}$ seconds\\
				 ERF               & $60 \times 10^{-3}$ seconds\\
		\hline
 \end{tabular}
\label{table:three}
\end{table}
\section{Conclusion}
We have studied the problem of robust learning of cell-load in dynamic wireless cellular networks with small sample sets. In this challenging setting, we have proposed a learning framework that is robust against uncertainties that result from learning based on a small training sample set. We have shown that robustness can be achieved with the help of some prior knowledge about the cell-load and its relationship with downlink rates. For example, an inherent property of the cell-load is that it is monotonic in rates so this property can be used as prior knowledge. To obtain additional prior knowledge, we have shown that the feasible rate region is compact, and that there exists a Lipschitz continuous function mapping feasible rates to the cell-load. These properties enables us to use the classical framework of minimax approximation. In this framework the objective is to minimize the worst-case error given a training sample set and the prior knowledge. We have shown by simulations in NS3 that, in a realistic scenario, our method outperforms other popular learning techniques. An extension of this study is to develop sophisticated methods for estimation of the Lipschitz constant from small sample sets.

\appendix
\section{}
\subsection{Proof of Equicontinuity of $\mathbf{L}$-Lipschitz functions}\label{sec:app_zero}
Let $\mathcal{F}\subset C(\mathcal{X},\mathcal{Y})$ denote the set of $\mathbf{L}$-Lipschitz functions with $\mathbf{L}:=[L_1,L_2,\cdots,L_M]^{\intercal}\in\mathbb{R}^{M}_{\geq 0}$. Since each component of $\mathbf{f} \in \mathcal{F}$ is Lipschitz on $\mathcal{X} \subset \mathbb{R}_{>0}^N$, we have that
\begin{equation}
(\forall \mathbf{x},\mathbf{y} \in \mathcal{X})~(\forall i \in \overline{1,M})~\left|f_{i}(\mathbf{x})-f_{i}(\mathbf{y})\right|\leq L_{i}\left\|\mathbf{x}-\mathbf{y}\right\|.
\end{equation}

Define $L_{\text{max}}:=\max_{i \in \overline{1,M}} L_{i}$ and note that 
\begin{equation}
 (\forall \mathbf{x},\mathbf{y} \in \mathcal{X})\,\,\|\mathbf{f}(\mathbf{x})-\mathbf{f}(\mathbf{y})\|_{\infty} \leq L_{\text{max}} \left\|\mathbf{x}-\mathbf{y}\right\|. 
\end{equation} 
From the equivalence of norms in finite dimensional normed spaces it follows that $(\exists C > 0)$ such that 
\begin{equation}
\|\mathbf{f}(\mathbf{x})-\mathbf{f}(\mathbf{y})\| \leq C~\|\mathbf{f}(\mathbf{x})-\mathbf{f}(\mathbf{y})\|_{\infty} \leq C~L_{\text{max}} \left\|\mathbf{x}-\mathbf{y}\right\|. 
\label{eq:temp1}
\end{equation}

Given $\epsilon >0$ and for every $\mathbf{x}_{o} \in \mathcal{X}$, choose $\delta:=\frac{\epsilon}{L_{\text{max}}\,C}$ as the radius of $B_{{\mathcal{X}}}(\mathbf{x}_{o},\delta)$. We have from \eqref{eq:temp1} that 
\begin{equation}
\|\mathbf{f}(\mathbf{x})-\mathbf{f}(\mathbf{x}_{o})\| \leq C \, L_{\text{max}} \left\|\mathbf{x}-\mathbf{\mathbf{x}_{o}}\right\| < \epsilon, 
\label{eq:temp2}
\end{equation}
whenever $\left\|\mathbf{x}-\mathbf{\mathbf{x}_{o}}\right\|<\delta$. We have shown that $\delta$ can be chosen independently of $\mathbf{x}_{o}$. Now since \eqref{eq:temp2} holds for every $\mathbf{f} \in \mathcal{F}$, the proof is complete. 
\subsection{Jacobian of $\mathbf{g}$ with respect to $\mathbf{r}$}\label{sec:app_one}
The entry $[\boldsymbol{\nabla}^{\mathbf{g}}_{\mathbf{r}}(\mathbf{r},\boldsymbol{\rho})]_{i,j}$ of the $M\times N$ Jacobian $\boldsymbol{\nabla}^{\mathbf{g}}_{\mathbf{r}}(\mathbf{r},\boldsymbol{\rho})$ is given by
 \[ 
     [\boldsymbol{\nabla}^{\mathbf{g}}_{\mathbf{r}}(\mathbf{r},\boldsymbol{\rho})]_{i,j}=\Bigg\{\begin{tabular}{cc}
               $-\frac{1}{RB \log(1+\gamma_{ij})}$, & if $j\in\mathcal{N}(i)$\\
               $0$, & otherwise 
                                                                   \end{tabular}
											                                                                                                        \]
where $\gamma_{ij}:=\frac{p_{i}G_{i,j}}{\sum_{k \in \mathcal{M}\backslash\left\{i\right\}}p_{k}G_{k,j}\rho_{k}+\sigma^{2}}$.
\subsection{Jacobian of $\mathbf{g}$ with respect to $\boldsymbol{\rho}$}\label{sec:app_two}
The entry $[\boldsymbol{\nabla}^{\mathbf{g}}_{\boldsymbol{\rho}}(\mathbf{r},\boldsymbol{\rho})]_{i,k}$ of the $M\times M$ Jacobian $\boldsymbol{\nabla}^{\mathbf{g}}_{\boldsymbol{\rho}}(\mathbf{r},\boldsymbol{\rho})$ is given by 
 \[ 
     [\boldsymbol{\nabla}^{\mathbf{g}}_{\boldsymbol{\rho}}(\mathbf{r},\boldsymbol{\rho})]_{i,k}=\Bigg\{\begin{tabular}{cc}
           $-\underset{j\in \mathcal{N}(i)}{\sum}\ln(2)\frac{r_j}{RB}\frac{\frac{p_{i}G_{i,j}}{p_{k}G_{k,j}}}{\ln^{2}(1+\gamma_{i,j})(\gamma_{i,j}^{-2}+\gamma_{i,j}^{-1})}$, & if $i \neq k$\\
           $1$, & if $i=k$ 

                                                                          \end{tabular}
											                                                                                                        \]
where $\gamma_{ij}:=\frac{p_{i}G_{i,j}}{\sum_{k \in \mathcal{M}\backslash\left\{i\right\}}p_{k}G_{k,j}\rho_{k}+\sigma^{2}}$.
\subsection{Invertibility of the Jacobian $\boldsymbol{\nabla}^{\mathbf{g}}_{\boldsymbol{\rho}}(\mathbf{r},\boldsymbol{\rho})$}\label{sec:app_three}
We follow the analysis in \cite{Ren2014} which exploits the sufficient conditions for invertibility of a generalized diagonal dominant matrix \cite{Berman94} on the whole domain. 
In more detail, we show that the matrix $\boldsymbol{\nabla}^{\mathbf{g}}_{\boldsymbol{\rho}}(\mathbf{r},\boldsymbol{\rho})$ is invertible because it is an invertible generalized diagonal dominant matrix.
For any $\boldsymbol{\rho}\in\mathbb{R}_{>0}^{M}$
\begin{align}
[\boldsymbol{\nabla}^{\mathbf{g}}_{\boldsymbol{\rho}}(\mathbf{r},\boldsymbol{\rho})]_{i}\boldsymbol{\rho}=\rho_{i}-\sum_{j \in \mathcal{N}(i)}
\frac{r_j}{RB\log(1+\gamma_{i,j})}\times\nonumber\\
\frac{\frac{\sum_{k \in \mathcal{M}\backslash\left\{i\right\}} \rho_k p_k G_{k,j}}{p_i G_{i,j}}}{\ln(1+\gamma_{i,j})(\gamma_{i,j}^{-2}+\gamma_{i,j}^{-1})}\nonumber,
\end{align}
where $[\boldsymbol{\nabla}^{\mathbf{g}}_{\boldsymbol{\rho}}(\mathbf{r},\boldsymbol{\rho})]_{i}$ is the $i$th row of $\boldsymbol{\nabla}^{\mathbf{g}}_{\boldsymbol{\rho}}(\mathbf{r},\boldsymbol{\rho})$. Since $\frac{\sum_{k \in \mathcal{M}\backslash\left\{i\right\}}\rho_k p_k G_{k,j}}{p_i G_{i,j}}<\frac{\sum_{k \in \mathcal{M}\backslash\left\{i\right\}}\rho_k p_k G_{k,j}+\sigma^{2}}{p_i G_{i,j}}=\gamma_{i,j}^{-1}$ and $\ln(1+\gamma_{i,j})(\gamma_{i,j}^{-2}+\gamma_{i,j}^{-1})>\gamma_{i,j}^{-1}$ \cite{Ren2014}, we have 
\begin{align}
&\sum_{j \in \mathcal{N}(i)}
\frac{r_j}{RB\log(1+\gamma_{i,j})}\times\frac{\frac{\sum_{k \in \mathcal{M}\backslash\left\{i\right\}} \rho_k p_k G_{k,j}}{p_i G_{i,j}}}{\ln(1+\gamma_{i,j})(\gamma_{i,j}^{-2}+\gamma_{i,j}^{-1})}<\nonumber\\ 
& \sum_{j \in \mathcal{N}(i)}
\frac{r_j}{RB\log(1+\gamma_{i,j})}=\rho_{i}\nonumber
\end{align}
which implies that $[\boldsymbol{\nabla}^{\mathbf{g}}_{\boldsymbol{\rho}}(\mathbf{r},\boldsymbol{\rho})]_{i}\boldsymbol{\rho}>0$. Since the off-diagonal entries are all non-positive and diagonal entries are all non-negative, $\boldsymbol{\nabla}^{\mathbf{g}}_{\boldsymbol{\rho}}(\mathbf{r},\boldsymbol{\rho})$ satisfies the sufficient conditions for it to be an invertible generalized diagonal dominant matrix \cite{Ren2014,Berman94}.
%
\subsection{Proof of Proposition~\ref{proposition:proposition_two}}\label{sec:proof_of_proposition_two}
\begin{proof}
The class $\mathcal{F} \subset C(\mathcal{R},\mathcal{L})$ satisfies the following properties:
\begin{itemize}
\item[a).] \textit{Boundedness}: $\mathcal{F}$ is bounded because $(\forall \mathbf{f}\in \mathcal{F})$ $\left\|\mathbf{f}\right\|_{C(\mathcal{R})} \leq 1$.
\item[b).] \textit{Equicontinuity}: Since $\mathcal{F}$ is a set of $\mathbf{L}$-Lipschitz functions, $\mathcal{F}$ is an equicontinuous subset of $C(\mathcal{R},\mathcal{L})$ (see Remark \ref{rem:one}). 
\item[c).] \textit{Closedness}: The class $\mathcal{F}$ can be written as $\mathcal{F}=\mathcal{F}^{\text{Lip}} \ \bigcap \mathcal{F}^{\text{mon}}$, where $\mathcal{F}^{\text{Lip}}$ and $\mathcal{F}^{\text{mon}}$ are the sets of $\mathbf{L}$-Lipschitz functions and continuous monotone functions, respectively, in $C(\mathcal{R},\mathcal{L})$. Recall that the intersection of two closed sets is closed. Therefore, it is sufficient to show that $\mathcal{F}^{\text{Lip}}$ and $\mathcal{F}^{\text{mon}}$ are closed sets. For completeness, we show in Lemma \ref{lemma:theorem2} that $\mathcal{F}^{\text{mon}}$ and $\mathcal{F}^{\text{Lip}}$ are closed sets.
\end{itemize}
The proposition now follows from Fact \ref{fact:arzela_ascoli}. 
\end{proof}
\begin{lemma}\label{lemma:theorem2}
Consider the space $C(\mathcal{X},\mathcal{Y})$.
\begin{itemize}
\item[a).] The set of monotonic functions $\mathcal{F}^{\text{mon}}$ in $C(\mathcal{X},\mathcal{Y})$ is closed.
\item[b).] The set of $\mathbf{L}$-Lipschitz functions $\mathcal{F}^{\text{Lip}}$ in $C(\mathcal{X},\mathcal{Y})$ is closed.
\end{itemize}
\begin{proof}
\begin{itemize}
\item[a).] Let $(\mathbf{f}_n)_{n\in\mathbb{N}}\subset\mathcal{F}^{\text{mon}} \subset C(\mathcal{R},\mathcal{L})$ be an arbitrary convergent sequence of continuous monotone functions converging to some $\mathbf{g} \in C(\mathcal{R},\mathcal{L})$. Then from Definition \ref{def:mf}, and the fact that inequalities are preserved in the limit, it follows that:
\begin{align*}
(\forall \mathbf{x},\mathbf{y} \in \mathcal{R})~\mathbf{x} \leq \mathbf{y} \implies (\forall n\in\mathbb{N})~\mathbf{f}_n(\mathbf{x}) &\leq \mathbf{f}_n(\mathbf{y})\\
                                   (\forall \mathbf{x},\mathbf{y} \in \mathcal{R})~\mathbf{x} \leq \mathbf{y} \implies  \lim_{n \to \infty}{\mathbf{f}_n(\mathbf{x})} &\leq \lim_{n \to \infty}{\mathbf{f}_n(\mathbf{y})}\\
														                                                       (\forall \mathbf{x},\mathbf{y} \in \mathcal{R})~\mathbf{x} \leq \mathbf{y} \implies \mathbf{g}(\mathbf{x}) &\leq \mathbf{g}(\mathbf{y}),\\  
\end{align*}
which means that $\mathbf{g} \in \mathcal{F}^{\text{mon}}$. Since $(\mathbf{f}_n)_{n\in\mathbb{N}}$ was chosen arbitrarily, the above holds for every sequence in $\mathcal{F}^{\text{mon}}$ showing that $\mathcal{F}^{\text{mon}}$ is closed. 
\item[b).] Following the same idea as above, we show that the limit function $\mathbf{g} \in C(\mathcal{R},\mathcal{L})$ of an arbitrary sequence $(\mathbf{f}^{\text{Lip}}_n)_{n\in\mathbb{N}} \subset \mathcal{F}^{\text{Lip}} \subset C(\mathcal{R},\mathcal{L})$ is Lipschitz with the same $\mathbf{L}$, i.e., $\mathbf{g} \in \mathcal{F}^{\text{Lip}}$ also. Note that $\|\mathbf{f}^{\text{Lip}}_n - \mathbf{g}\|_{C(\mathcal{R})} \to 0$ if and only if $(\forall i \in \overline{1,M})$ $\|{f_i}^{\text{Lip}}_n  - {g_i}\|_{C(\mathcal{R})} \to 0$. Therefore, it suffices to show that $(i \in \overline{1,M})$ $g_i$, the limit of the sequence $({f_i}_n^{\text{Lip}})_{n \in \mathbb{N}}$, is Lipschitz with $L_i$, the $i$th component of $\mathbf{L}$.

   Now, since ${f_i}^{\text{Lip}}_n  \to {f_i}$ uniformly, for some $\epsilon>0$ there exists $N^{\epsilon}_1 \in \mathbb{N}$ such that $(\forall \mathbf{x} \in \mathcal{R})$ $|{f_i}(\mathbf{x}) - {f_i}^{\text{Lip}}_{N^{\epsilon}_1}(\mathbf{x})|<\epsilon$ which implies that there exists $N^{\epsilon} > N^{\epsilon}_1$ such that $(\forall \mathbf{x}\in \mathcal{R})$ $|{f_i}(\mathbf{x}) - {f_i}^{\text{Lip}}_{N^{\epsilon}}(\mathbf{x})|<\epsilon/2$. Then,
\begin{align*}
(\forall \mathbf{x} \in \mathcal{R})~(\forall \mathbf{y} \in \mathcal{R})~|{f_i}(\mathbf{x}) - {f_i}(\mathbf{y})|&=|{f_i}(\mathbf{x})+{f_i}^{\text{Lip}}_{N^{\epsilon}}(\mathbf{x})\nonumber\\
                                       & -{f_i}_{N^{\epsilon}}^{\text{Lip}}(\mathbf{x})+{f_i}^{\text{Lip}}_{N^{\epsilon}}(\mathbf{y})\nonumber \\
                                       & -{f_i}^{\text{Lip}}_{N^{\epsilon}}(\mathbf{y})-{f_i}(\mathbf{y})| \nonumber\\ 
                                       &  <\epsilon/2 + \epsilon/2 + L_i\|\mathbf{x}-\mathbf{y}\| \nonumber \\
																			 &  = \epsilon + L_i\|\mathbf{x}-\mathbf{y}\|. \nonumber \\
\end{align*}
Since the above holds for all $\epsilon>0$, it follows that 
\begin{equation}
(\forall \mathbf{x} \in \mathcal{R})~(\forall\mathbf{y} \in \mathcal{R})~|{f_i}(\mathbf{x}) - {f_i}(\mathbf{y})| \leq L_i\|\mathbf{x}-\mathbf{y}\|. \nonumber 
\end{equation} 
\end{itemize}
\end{proof}
 \end{lemma}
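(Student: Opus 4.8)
The plan is to prove both closedness claims directly from the sequential characterisation of closedness in Definition~\ref{def:boundedness_closedness_compactness}(b): in each case I take an arbitrary sequence in the set that converges (in the uniform norm \eqref{eqn:cheb_norm}) to some limit $\mathbf{g}\in C(\mathcal{X},\mathcal{Y})$ and show that $\mathbf{g}$ again lies in the set. The single ingredient common to both parts is that convergence in the uniform norm implies pointwise convergence of each component, i.e.\ (writing $f_{i,n}$ for the $i$th component of $\mathbf{f}_n$) $(\forall \mathbf{x}\in\mathcal{X})(\forall i\in\overline{1,M})$ $f_{i,n}(\mathbf{x})\to g_i(\mathbf{x})$, together with the elementary fact that a \emph{weak} inequality between convergent real sequences is preserved in the limit.

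For part (a), let $(\mathbf{f}_n)_{n\in\mathbb{N}}\subset\mathcal{F}^{\text{mon}}$ converge to $\mathbf{g}$. Fix any $\mathbf{x},\mathbf{y}\in\mathcal{X}$ with $\mathbf{x}\leq\mathbf{y}$. By Definition~\ref{def:mf} every term satisfies $\mathbf{f}_n(\mathbf{x})\leq\mathbf{f}_n(\mathbf{y})$, i.e.\ the scalar inequalities $f_{i,n}(\mathbf{x})\leq f_{i,n}(\mathbf{y})$ hold for all $n$ and all $i$. Passing to the pointwise limit in each component and using that $\leq$ survives the limit gives $g_i(\mathbf{x})\leq g_i(\mathbf{y})$, hence $\mathbf{g}(\mathbf{x})\leq\mathbf{g}(\mathbf{y})$. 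Since $\mathbf{x},\mathbf{y}$ were arbitrary, $\mathbf{g}$ is monotonic, so $\mathbf{g}\in\mathcal{F}^{\text{mon}}$ and the set is closed.

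For part (b), let $(\mathbf{f}_n)_{n\in\mathbb{N}}\subset\mathcal{F}^{\text{Lip}}$ converge to $\mathbf{g}$; I argue componentwise, since $\|\mathbf{f}_n-\mathbf{g}\|_{C(\mathcal{X})}\to 0$ is equivalent to $\|f_{i,n}-g_i\|_{C(\mathcal{X})}\to 0$ for every $i$. Fix $i$ and any $\mathbf{x},\mathbf{y}\in\mathcal{X}$. By Definition~\ref{def:lf}, $|f_{i,n}(\mathbf{x})-f_{i,n}(\mathbf{y})|\leq L_i\|\mathbf{x}-\mathbf{y}\|$ for every $n$. Because $f_{i,n}(\mathbf{x})\to g_i(\mathbf{x})$ and $f_{i,n}(\mathbf{y})\to g_i(\mathbf{y})$ and the absolute value is continuous, the left-hand side converges to $|g_i(\mathbf{x})-g_i(\mathbf{y})|$ while the right-hand side is constant in $n$; hence the limit preserves the bound and $|g_i(\mathbf{x})-g_i(\mathbf{y})|\leq L_i\|\mathbf{x}-\mathbf{y}\|$. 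As this holds for all $i$ and all $\mathbf{x},\mathbf{y}$, the limit $\mathbf{g}$ is $\mathbf{L}$-Lipschitz with the \emph{same} $\mathbf{L}$, so $\mathbf{g}\in\mathcal{F}^{\text{Lip}}$ and the set is closed. Equivalently, one may run the $\epsilon/2$ triangle-inequality argument of the statement's own proof, inserting $\pm f_{i,N^\epsilon}(\mathbf{x})$ and $\pm f_{i,N^\epsilon}(\mathbf{y})$ for $N^\epsilon$ large; both routes reach the same conclusion.

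There is no genuine obstacle here: the proof is a routine ``weak inequalities pass to the limit'' argument. The only two points needing a little care are bookkeeping rather than difficulty. First, the functions are vector-valued, so in both parts I must reduce the claim to its $M$ scalar components, which is immediate for the uniform norm \eqref{eqn:cheb_norm}. Second, in the Lipschitz case the constant $\mathbf{L}$ must be shown to survive \emph{unchanged} in the limit, which is precisely why one works with the weak inequality $\leq$ (preserved under limits) rather than attempting to track a strict inequality.
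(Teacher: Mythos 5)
Your proof is correct, and part (a) is in spirit the paper's argument verbatim: sequential closedness, componentwise reduction, and preservation of weak inequalities under pointwise limits. In part (b) you take a genuinely different (though equally standard) route. The paper fixes a large index $N^{\epsilon}$ supplied by uniform convergence, inserts $\pm f_{i,N^{\epsilon}}(\mathbf{x})$ and $\pm f_{i,N^{\epsilon}}(\mathbf{y})$, and runs an $\epsilon/2$ triangle-inequality estimate to arrive at $|g_i(\mathbf{x})-g_i(\mathbf{y})|\leq \epsilon+L_i\|\mathbf{x}-\mathbf{y}\|$ for every $\epsilon>0$; you instead hold $\mathbf{x},\mathbf{y}$ fixed and pass the inequality $|f_{i,n}(\mathbf{x})-f_{i,n}(\mathbf{y})|\leq L_i\|\mathbf{x}-\mathbf{y}\|$ directly to the limit, using pointwise convergence and continuity of the absolute value. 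Your route is leaner: it eliminates the $\epsilon$ bookkeeping (including the paper's redundant two-index step $N_1^{\epsilon}<N^{\epsilon}$), it needs only pointwise convergence rather than the full strength of uniform convergence, and it exhibits (a) and (b) as instances of the single principle that non-strict inequalities survive limits --- exactly as you frame it. The paper's $\epsilon/2$ template buys nothing additional in this setting, since norm convergence already identifies the limit pointwise; it is simply the generic pattern for showing that uniform limits inherit moduli of continuity, and your direct-limit argument is the cleaner choice here.
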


 \subsection{Monotone Smoothing of the Sample set}\label{sec:monotone_smoothing_problem}
We consider the monotone-smoothing problem which is formulated as a standard convex optimization problem. 
The author in \cite{Beliakov2005} has shown that a sample set $\mathcal{D}^{\text{com}}:=\{(\mathbf{r}^k,\tilde{\rho}^{k})\}_{k=1}^{K}$ is compatible with the monotonicity if and only if
it satisfies the following set of linear constraints \cite[Proposition 4.1]{Beliakov2005}
\begin{equation}
(\forall k \in \overline{1,K})~(\forall j \in \overline{1,K})~\tilde{\rho}^{k}-\tilde{\rho}^{j}\leq \tilde{L}\|(\mathbf{r}^{k}-\mathbf{r}^j)_{+}\|.
\label{eq:smoothing_constraints}
\end{equation}
Given the measured sample set $\mathcal{D}^{\text{error}}=\{(\mathbf{r}^k,y^k)\}_{k=1}^{K}$, we look for a compatible set $\mathcal{D}^{\text{com}}=\{(\mathbf{r}^k,\tilde{\rho}^{k})\}_{k=1}^{K}$ (that satisfies \eqref{eq:smoothing_constraints}) that is closest to $\mathcal{D}^{\text{error}}$ in the $\|\cdot\|_1$ sense. In more detail, let $\mathbf{y}=[y^1,\cdots,y^K]^{\intercal}$ and $\tilde{\boldsymbol{\rho}}=[\tilde{\rho}^1,\cdots,\tilde{\rho}^K]^{\intercal}$, then we minimize 
\begin{equation}
\|\mathbf{y}-\tilde{\boldsymbol{\rho}}\|_1=\sum_{k=1}^{K} |\tilde{\rho}^{k}-y^{k}|. 
\label{eq:residual}
\end{equation}
We now formalize this problem as a standard linear program (LP) which can be solved easily by any standard convex solver. Denote the $k$th residual in \eqref{eq:residual} by $q^{k}:=\tilde{\rho}^{k}-y^{k}$ and split $q^{k}$ into two parts $q_{+}^k$ and $q_{-}^k$ such that $q^k=q_{+}^k-q_{-}^k$. Substituting $(\forall l\in \overline{1,K})$ $q^{l}+y^{l}$ for $\tilde{\rho}^{l}$ into \eqref{eq:smoothing_constraints} and \eqref{eq:residual}, the monotone-smoothing problem can be written as an LP \cite{Beliakov2005}  
\begin{align}
\centering
     \underset{q_{+}^k,q_{-}^k \geq 0}{\text{minimize}}~&~\sum^{K}_{k=1}{|q^k|}        \nonumber\\
     \text{subject to}~&~(\forall k \in \overline{1,K})~(\forall j \in \overline{1,K}) \nonumber\\
     & q^k-q^j\leq y^{j}-y^{k} + \tilde{L}\|(\mathbf{r}^{k}-\mathbf{r}^{j})_{+}\|,                
		\label{eq:linear_constraints}
\end{align}
where $|q^k|=q_{+}^k+q_{-}^k$, and where $q_{+}^k,q_{-}^k \geq 0$ are the optimization variables. The smoothed compatible values follow from $\tilde{\rho}^{k}=y^{k}+q^k$. 

   Note that since we consider very small sample sizes $K$ and the constraint matrix, with rows given by \eqref{eq:linear_constraints}, is sparse, the above LP can be solved efficiently with standard convex solvers that exploit sparsity \cite{Liberti2016}. Therefore, the complexity of the smoothing step, which is performed only once after sample acquisition, is not of a practical concern. 

\section{}

\section*{Acknowledgment}
The work was supported by the German Federal Ministry of Education and Research under grant 16KIS0605. This work is also supported by the Federal Ministry of Education and Research of the Federal Republic of Germany (BMBF) in the framework of the project 5G NetMobil with funding number 16KIS0691. The authors alone are responsible for the content of the paper.

\ifCLASSOPTIONcaptionsoff
  \newpage
\fi



%

\bibliographystyle{IEEEtran}
\bibliography{library}

\begin{thebibliography}{10}
\providecommand{\url}[1]{#1}
\csname url@samestyle\endcsname
\providecommand{\newblock}{\relax}
\providecommand{\bibinfo}[2]{#2}
\providecommand{\BIBentrySTDinterwordspacing}{\spaceskip=0pt\relax}
\providecommand{\BIBentryALTinterwordstretchfactor}{4}
\providecommand{\BIBentryALTinterwordspacing}{\spaceskip=\fontdimen2\font plus
\BIBentryALTinterwordstretchfactor\fontdimen3\font minus
  \fontdimen4\font\relax}
\providecommand{\BIBforeignlanguage}[2]{{%
\expandafter\ifx\csname l@#1\endcsname\relax
\typeout{** WARNING: IEEEtran.bst: No hyphenation pattern has been}%
\typeout{** loaded for the language `#1'. Using the pattern for}%
\typeout{** the default language instead.}%
\else
\language=\csname l@#1\endcsname
\fi
#2}}
\providecommand{\BIBdecl}{\relax}
\BIBdecl

\bibitem{Wong04}
I.~C. Wong, Z.~Shen, B.~L. Evans, and J.~G. Andrews, ``A low complexity
  algorithm for proportional resource allocation in {OFDMA} systems,'' in
  \emph{IEEE Workshop on Signal Processing Systems}, Oct 2004, pp. 1--6.

\bibitem{Siomina2012}
I.~Siomina, ``Analysis of cell load coupling for {LTE} network planning and
  optimization,'' \emph{IEEE Transactions on Wireless Communications}, vol.~11,
  no.~6, pp. 2287--2297, Jun. 2012.

\bibitem{Fehske2012}
A.~J. Fehske and G.~P. Fettweis, ``Aggregation of variables in load models for
  interference-coupled cellular data networks,'' in \emph{2012 IEEE
  International Conference on Communications (ICC)}, June 2012, pp. 5102--5107.

\bibitem{Majewski2010}
K.~Majewski and M.~Koonert, ``Conservative cell load approximation for radio
  networks with {S}hannon channels and its application to {LTE} network
  planning,'' in \emph{2010 Sixth Advanced International Conference on
  Telecommunications}, May 2010, pp. 219--225.

\bibitem{Ho2014}
C.~K. Ho, D.~Yuan, and S.~Sun, ``Data offloading in load coupled networks: A
  utility maximization framework,'' \emph{IEEE Transactions on Wireless
  Communications}, vol.~13, no.~4, pp. 1921--1931, 2014.

\bibitem{Skillermark2012}
P.~Skillermark and P.~Frenger, ``Enhancing energy efficiency in lte with
  antenna muting,'' in \emph{2012 IEEE 75th Vehicular Technology Conference
  (VTC Spring)}, May 2012, pp. 1--5.

\bibitem{XWang2016}
X.~Wang, S.~Thota, M.~Tornatore, H.~S. Chung, H.~H. Lee, S.~Park, and
  B.~Mukherjee, ``Energy-efficient virtual base station formation in
  optical-access-enabled cloud-ran,'' \emph{IEEE Journal on Selected Areas in
  Communications}, vol.~34, no.~5, pp. 1130--1139, May 2016.

\bibitem{Mogensen2007}
P.~Mogensen, W.~Na, I.~Z. Kov\'{a}cs, F.~Frederiksen, A.~Pokhariyal, K.~I.
  Pedersen, T.~Kolding, K.~Hugl, and M.~Kuusela, ``{LTE} capacity compared to
  the {S}hannon bound,'' in \emph{2007 IEEE 65th Vehicular Technology
  Conference - VTC2007-Spring}, no.~1, 2007, pp. 1234--1238.

\bibitem{Shen2015T}
Y.~Shen, ``Load coupling model evaluation and feasibility study of power
  allocation in {OFDMA} networks,'' 2015.

\bibitem{greennets}
R.~L.~G. Cavalcante, E.~Pollakis, S.~Sta\'nczak, F.~Penna, and J.~B\"uhler,
  ``{GreenNets} deliverables,'' GreenNets Project, {FP7.SME.2011.1}, Tech.
  Rep., 2013.

\bibitem{Sukharev1992}
A.~G. Sukharev, \emph{Minimax Models in the Theory of Numerical Methods}.\hskip
  1em plus 0.5em minus 0.4em\relax Norwell, MA, USA: Kluwer Academic
  Publishers, 1992.

\bibitem{traub1980}
J.~Traub and H.~Wo{\'z}niakowski, \emph{A general theory of optimal
  algorithms}, ser. ACM monograph series.\hskip 1em plus 0.5em minus
  0.4em\relax Academic Press, 1980.

\bibitem{Weinberger1959}
M.~Golomb and H.~Weinberger, \emph{On Numerical Approximation}, {R.E.
  Langer}~ed.\hskip 1em plus 0.5em minus 0.4em\relax The University of
  Wisconsin Press, Madison, 1959.

\bibitem{Calliess2014}
J.-P. Calliess, ``Conservative decision-making and inference in uncertain
  dynamical systems,'' Ph.D. dissertation, Department of Engineering Science,
  University of Oxford, 2014.

\bibitem{Marcus2018}
\BIBentryALTinterwordspacing
G.~Marcus, ``Deep learning: {A} critical appraisal,'' \emph{CoRR}, vol.
  abs/1801.00631, 2018. [Online]. Available:
  \url{http://arxiv.org/abs/1801.00631}
\BIBentrySTDinterwordspacing

\bibitem{Diligenti}
M.~{Diligenti}, S.~{Roychowdhury}, and M.~{Gori}, ``Integrating prior knowledge
  into deep learning,'' in \emph{2017 16th IEEE International Conference on
  Machine Learning and Applications (ICMLA)}, Dec 2017, pp. 920--923.

\bibitem{You2018}
L.~{You}, D.~{Yuan}, L.~{Lei}, S.~{Sun}, S.~{Chatzinotas}, and B.~{Ottersten},
  ``{Resource Optimization With Load Coupling in Multi-Cell NOMA},'' \emph{IEEE
  Transactions on Wireless Communications}, vol.~17, no.~7, pp. 4735--4749,
  July 2018.

\bibitem{miguel2016}
M.~A. Gutierrez-Estevez, R.~L.~G. Cavalcante, S.~Stanczak, J.~Zhang, and
  H.~Zhuang, ``A distributed solution for proportional fairness optimization in
  load coupled {OFDMA} networks,'' in \emph{IEEE Global Conference on Signal
  and Information Processing (GlobalSIP' 16)}, 2016.

\bibitem{daniyal2016}
D.~Awan, R.~L.~G. Cavalcante, and S.~Stanczak, ``Distributed {RAN} and backhaul
  optimization for energy efficient wireless networks,'' in \emph{IEEE Global
  Conference on Signal and Information Processing (GlobalSIP' 16)}.

\bibitem{Pollakis2016}
E.~Pollakis, R.~L.~G. Cavalcante, and S.~Stanczak, ``Traffic demand-aware
  topology control for enhanced energy-efficiency of cellular networks,''
  \emph{EURASIP Journal on Wireless Communications and Networking}, vol. 2016,
  no.~1, pp. 1--17, 2016.

\bibitem{Ren2014}
Z.~Ren, S.~Stanczak, and P.~Fertl, ``Activation of nomadic relay nodes in
  dynamic interference environment for energy saving,'' \emph{2014 IEEE Global
  Communications Conference}, pp. 4466--4471, 2014.

\bibitem{siomina2012b}
I.~Siomina and D.~Yuan, ``Load balancing in heterogeneous {LTE}: Range
  optimization via cell offset and load-coupling characterization,'' in
  \emph{2012 IEEE International Conference on Communications (ICC)}, June 2012,
  pp. 1357--1361.

\bibitem{Krantz2003}
S.~G. {Krantz} and H.~R. {Parks}, \emph{The Implicit Function Theorem}.\hskip
  1em plus 0.5em minus 0.4em\relax Boston(MA): Birkhaueser, 2003.

\bibitem{siomina2015}
I.~Siomina and D.~Yuan, ``Optimizing small-cell range in heterogeneous and
  load-coupled {LTE} networks,'' \emph{IEEE Transactions on Vehicular
  Technology}, vol.~64, no.~5, pp. 2169--2174, May 2015.

\bibitem{yates95}
R.~D. Yates, ``A framework for uplink power control in cellular radio
  systems,'' \emph{IEEE Journal on Selected Areas in Communications}, vol.~13,
  no.~7, pp. 1341--1347, Sep 1995.

\bibitem{Sun2019}
K.~{Sun}, S.~{Mou}, J.~{Qiu}, T.~{Wang}, and H.~{Gao}, ``Adaptive fuzzy control
  for nontriangular structural stochastic switched nonlinear systems with full
  state constraints,'' \emph{IEEE Transactions on Fuzzy Systems}, vol.~27,
  no.~8, pp. 1587--1601, Aug 2019.

\bibitem{Qiu2019}
J.~{Qiu}, K.~{Sun}, T.~{Wang}, and H.~{Gao}, ``Observer-based fuzzy adaptive
  event-triggered control for pure-feedback nonlinear systems with prescribed
  performance,'' \emph{IEEE Transactions on Fuzzy Systems}, vol.~27, no.~11,
  pp. 2152--2162, Nov 2019.

\bibitem{Beliakov2005}
G.~Beliakov, ``Monotonicity preserving approximation of multivariate scattered
  data,'' \emph{BIT Numerical Mathematics}, vol.~45, no.~4, pp. 653--677, 2005.

\bibitem{Kotlowski16}
W.~Kotlowski, ``Online isotonic regression,'' in \emph{Proceedings of the 29th
  Conference on Learning Theory, {COLT} 2016, New York, USA, June 23-26, 2016},
  2016, pp. 1165--1189.

\bibitem{AwanIcassp18}
D.~A. Awan, R.~Cavalcante, and S.~Stanczak, ``A robust machine learning method
  for cell-load approximation in wireless networks,'' in \emph{2018 IEEE
  International Conference on Acoustics, Speech and Signal Processing
  (ICASSP)}, April 2018.

\bibitem{Luenberger1997}
D.~G. Luenberger, \emph{Optimization by Vector Space Methods}, 1st~ed.\hskip
  1em plus 0.5em minus 0.4em\relax New York, NY, USA: John Wiley \& Sons, Inc.,
  1997.

\bibitem{Munkres2000}
J.~{Munkres}, \emph{Topology (Second Edition)}.\hskip 1em plus 0.5em minus
  0.4em\relax Prentice Hall, Inc., 2000.

\bibitem{Brown1993}
R.~F. {Brown}, \emph{{A Topological Introduction To Nonlinear Analysis}}.\hskip
  1em plus 0.5em minus 0.4em\relax Birkhaeuser Basel, 2014.

\bibitem{Cavalcante2016}
R.~L.~G. Cavalcante, Y.~Shen, and S.~Stanczak, ``Elementary properties of
  positive concave mappings with applications to network planning and
  optimization,'' \emph{IEEE Transactions on Signal Processing}, vol.~64,
  no.~7, pp. 1774--1783, April 2016.

\bibitem{Belford1972}
G.~G. Belford, ``Uniform approximation of vector-valued functions with a
  constraint,'' \emph{Mathematics of Computation}, vol.~26, no. 118, pp.
  487--492, 1972.

\bibitem{Beliakov2006}
G.~Beliakov, ``Interpolation of {L}ipschitz functions,'' \emph{Journal of
  Computational and Applied Mathematics}, vol. 196, no.~1, pp. 20 -- 44, 2006.

\bibitem{Milanese1985}
M.~{Milanese} and R.~{Tempo}, ``Optimal algorithms theory for robust estimation
  and prediction,'' \emph{IEEE Transactions on Automatic Control}, vol.~30,
  no.~8, pp. 730--738, August 1985.

\bibitem{Milanese1991}
\BIBentryALTinterwordspacing
M.~Milanese and A.~Vicino, ``Optimal estimation theory for dynamic systems with
  set membership uncertainty: An overview,'' \emph{Automatica}, vol.~27, no.~6,
  pp. 997--1009, Nov. 1991. [Online]. Available:
  \url{http://dx.doi.org/10.1016/0005-1098(91)90134-N}
\BIBentrySTDinterwordspacing

\bibitem{Strongin1973}
R.~G. Strongin, ``On the convergence of an algorithm for finding a global
  extremum,'' \emph{Engineering in Cybernetics,}, 1973.

\bibitem{NS3}
``The network simulator {NS3},'' \url{https://www.nsnam.org/}, accessed:
  2018-07-13.

\bibitem{Dahlman2014}
E.~Dahlman, S.~Parkvall, and J.~Skold, \emph{4G: LTE/LTE-Advanced for Mobile
  Broadband}, 2nd~ed.\hskip 1em plus 0.5em minus 0.4em\relax Orlando, FL, USA:
  Academic Press, Inc., 2014.

\bibitem{Berman94}
A.~Berman and R.~Plemmons, \emph{Nonnegative Matrices in the Mathematical
  Sciences}.\hskip 1em plus 0.5em minus 0.4em\relax Society for Industrial and
  Applied Mathematics, 1994.

\bibitem{Liberti2016}
L.~Liberti, , P.~Poirion, and K.~Vu, ``Fast approximate solution of large dense
  linear programs,''
  \url{http://www.optimization-online.org/DB_FILE/2016/11/5737.pdf}, accessed:
  2018-07-13.

\end{thebibliography}

%




\end{document}